\newcommand{\rr}[1]{\textcolor{black}{#1}}
\newtheorem{observation}{Observation}
\newcommand{\kw}[1]{{\ensuremath {\mathsf{#1}}}\xspace}
\newcommand{\reffig}[1]{Figure~\ref{fig:#1}}
\newcommand{\refsec}[1]{Section~\ref{sec:#1}}
\newcommand{\refsubsec}[1]{Section~\ref{subsec:#1}}
\newcommand{\reftab}[1]{Table~\ref{tab:#1}}
\newcommand{\refalg}[1]{Algorithm~\ref{alg:#1}}
\newcommand{\refdef}[1]{Definition~\ref{def:#1}}
\newcommand{\refthm}[1]{Theorem~\ref{thm:#1}}
\newcommand{\reflem}[1]{Lemma~\ref{lem:#1}}
\newcommand{\stitle}[1]{\noindent{\bf #1}}
\newcommand{\nbr}{N}
\newcommand{\avgdeg}{d}
\newcommand{\height}{h}
\newcommand{\dtree}{D-Tree\xspace}
\newcommand{\hdt}{HDT\xspace}
\newcommand{\idtdsa}{ID-TDSA\xspace}
\newcommand{\idtdsb}{ID-TDSB\xspace}
\newcommand{\parent}{parent}
\newcommand{\depth}{depth}
\newcommand{\stsize}{st\_size}
\newcommand{\kwnull}{\kw{Null}}
\newcommand{\basename}{ID}
\newcommand{\baseidxname}{\basename-Tree\xspace}
\newcommand{\basetree}{\basename-Tree\xspace}
\newcommand{\basetrees}{\basename-Trees\xspace}
\newcommand{\basetname}{ID$^{2}$}
\newcommand{\basetidxname}{\basetname Tree\xspace}
\newcommand{\opttname}{DND$^{2}$}
\newcommand{\opttidxname}{\opttname Tree\xspace}
\newcommand{\algbasequery}{\kw{\basename\text{-}Query}}
\newcommand{\algbaseinsert}{\kw{\basename\text{-}Insert}}
\newcommand{\algbasedelete}{\kw{\basename\text{-}Delete}}
\newcommand{\alglink}{\kw{Link}}
\newcommand{\algunlink}{\kw{Unlink}}
\newcommand{\algreroot}{\kw{ReRoot}}
\newcommand{\algtdeletent}{\kw{\basename 2DeleteNonTreeEdge}}
\newcommand{\algtdeletet}{\kw{\basename 2DeleteTreeEdge}}
\newcommand{\algtddeletent}{\kw{\optname 2DeleteNonTreeEdge}}
\newcommand{\algtddeletet}{\kw{\optname 2DeleteTreeEdge}}
\newcommand{\algtquery}{\kw{\basename 2Query}}
\newcommand{\algteinsert}{\kw{\basename 2Insert}}
\newcommand{\algtedinsert}{\kw{\optname 2Insert}}
\newcommand{\algcut}{\ensuremath{\kw{Cut\text{-}Bridge}}\xspace}
\newcommand{\dsname}{DS}
\newcommand{\dstree}{\dsname-Tree\xspace}
\newcommand{\dstrees}{\dsname-Trees\xspace}
\newcommand{\optname}{DND}
\newcommand{\optidxname}{\optname-Trees\xspace}
\newcommand{\algoptinsert}{\kw{\optname\text{-}Insert}}
\newcommand{\algoptdelete}{\kw{\optname\text{-}Delete}}
\newcommand{\algoptquery}{\kw{\optname\text{-}Query}}
\newcommand{\algdsdel}{\kw{Isolate}}
\newcommand{\algdsreroot}{\kw{ReRootDS}}
\newcommand{\algdsunlink}{\kw{UnlinkDS}}
\newcommand{\algdslink}{\kw{LinkDS}}
\newcommand{\algfind}{\kw{FindDS}}
\newcommand{\dsnode}{\kw{DSnode}}
\begin{document}

\title{Constant-time Connectivity and 2-Edge Connectivity Querying in Dynamic Graphs
}


\author{Lantian Xu\textsuperscript{1}         \and
         Junhua Zhang\textsuperscript{2}
        \and
        Dong Wen\textsuperscript{3} 
        \and
         Lu Qin\textsuperscript{1}
        \and
        Ying Zhang\textsuperscript{1} 
        \and
        Xuemin Lin\textsuperscript{4} 
}


\institute{
   			Dong Wen \Envelope \at
           \email{dong.wen@unsw.edu.au} 
           \and
           Lantian Xu \at
           \email{lantian.xu@student.uts.edu.au}           
           \and
           Junhua Zhang \at
           \email{zjunhua6@gmail.com} 
           \and
           Lu Qin \at
           \email{lu.qin@uts.edu.au} 
           \and
           Ying Zhang \at
           \email{ying.zhang@uts.edu.au} 
           \and
           Xuemin Lin \at
           \email{xuemin.lin@gmail.com} 
           \and 1 University of Technology Sydney, Australia \\
           2 Northeastern University, China \\
           3 University of New South Wales, Australia \\
           4 Shanghai Jiaotong University, China
}

\date{Received: date / Accepted: date}

\maketitle

\begin{abstract}
  Connectivity query processing is a fundamental problem in graph processing. Given an undirected graph and two query vertices, the problem aims to identify whether they are connected via a path. Given frequent edge updates in real graph applications, in this paper, we study connectivity query processing in fully dynamic graphs, where edges are frequently inserted or deleted. A recent solution, called D-tree, maintains a spanning tree for each connected component and applies several heuristics to reduce the depth of the tree. To improve efficiency, we propose a new spanning-tree-based solution by maintaining a disjoint-set tree simultaneously. By combining the advantages of two trees, we achieve the constant query time complexity and also significantly improve the theoretical running time in both edge insertion and edge deletion. In addition, we extend our connectivity maintenance algorithms to maintain 2-edge connectivity. Our performance studies on real large datasets show considerable improvement of our algorithms.

\keywords{Connectivity \and Connected Component \and Temporal Graph \and 2-Edge Connectivity}
\end{abstract}

\section{Introduction}
\label{sec:intro}

%
Given an undirected graph, the connectivity query is a fundamental problem and aims to answer whether two vertices are connected via a path. \rr{The connectivity query usually serves as a fundamental operator and is mainly used to prune search space in most
applications. For example, the algorithms to compute paths between two vertices are generally time-consuming.
We can terminate the search immediately if two query vertices are not connected, which avoids unnecessary computation.}
The problem is driven by a wide range of applications in various fields. For example, the problem helps in identifying whether data packets from a device can reach their destination in a communication network. In biology, testing connectivity between two units in a protein-protein-interaction network can help in understanding the interactions between different components of the system \cite{DBLP:journals/biodatamining/PavlopoulosSMSKASB11}. \rr{For example, existing studies \cite{mao2021digital} trace contacts during the COVID-19 epidemic by modeling an association graph among the high-risk population, the general population, vehicles, public places, and other entities. Two individuals are considered related if there is a path connecting them within a short time window that satisfies certain patterns. Our method can effectively exclude a suspicious vertex pair if they are not connected within the specified time window.} 

Real-world graphs are highly dynamic where new edges come in and old edges go out. The connectivity between two vertices may change over time. Given the up-to-date snapshot, the connectivity query can be addressed by graph search strategies such as breadth-first search (BFS) and depth-first search (DFS). However, these online search algorithms may scan the whole graph, which is excessively costly for large graphs. 
Instead of online query processing, a straightforward index-based approach is to maintain each connected component as a spanning tree, where two query vertices are connected if they have the same tree root. We may merge two spanning trees for a new edge insertion or split a spanning tree for an old edge deletion.
Index-based methods have been investigated in the literature. \cite{thorup2000near} maintains an Euler tour of a spanning tree, named ET-tree. 
\cite{henzinger1995randomized,henzinger1999randomized} improves the ET-tree by terminating early when looking for replacement edges. 
Holm et al. proposed a new method named \hdt\cite{DBLP:conf/stoc/HolmLT98,DBLP:journals/jacm/HolmLT01}, which made the complexity of insertion and deletion $O(\log^{2}{n})$. 
\cite{huang2023fully} further reduced the time complexity slightly to $O(\log{n}(\log{\log{n}})^{2})$.

\begin{table}[t!]
\centering
\begin{tabular}{c c c}
\textbf{Algorithms}                    & \textbf{Ours}          & \textbf{\dtree} \\\hline
Query processing    &   $O(\alpha)$      &   $O(h)$ \\\hline   
Edge insertion      &   $O(h)$      &   $O(h \cdot \kw{nbr_{update}})$\\\hline
Edge deletion       &   $O(h)$      &   $O(h^{2} \cdot \kw{nbr_{scan}})$\\ \hline
\end{tabular}
\vspace{1em}
\caption{Comparing the average complexity of our method with state-of-the-art. $\alpha$ is a small constant ($\alpha < 5$). $h$ is the average vertex depth in the spanning tree. $\kw{nbr_{update}}$ is the time to insert a vertex into neighbors of a vertex or to delete a vertex from neighbors of a vertex. $\kw{nbr_{scan}}$ is the time to scan all neighbors of a vertex.}
\label{tab:complexity}
\vspace{-1em}
\end{table}

\stitle{The State of the Art.}
%
To improve the connectivity query efficiency practically, \rr{a recent work \cite{DBLP:journals/pvldb/ChenLHB22}, named \dtree, maintains a spanning tree in dynamic graphs, and the spanning tree is represented by maintaining the parent of each vertex. To test the connectivity of two vertices, we search from each vertex to the root and identify whether their roots are the same. The query time depends on the depth of each query vertex, i.e., the distance from the query vertex to the root of the tree. Their main technical contribution is to develop several heuristics to maintain a spanning tree with a relatively small average depth.}
For edge insertion, a non-tree edge is a new edge that connects two vertices in the same tree, and a tree edge is a new edge that connects two vertices in different trees. To insert a non-tree edge $(u,v)$, the \dtree replaces one existing tree edge with the new edge when the depth gap of $u$ and $v$ is over a certain threshold. 
%
To insert a tree edge $(u,v)$ into an index, the \dtree always merges the smaller tree into the bigger tree. Assume that the spanning tree of $u$ is the smaller one. \rr{They rotate the tree (i.e., keep the same tree edges but change the parent-child relationship of certain vertices) so that $u$ becomes the new root. Then, the two trees are connectged by assigning $u$ assigned as the child of $v$.}
\rr{To delete a tree edge $(u,v)$, the spanning tree is immediately split into two trees $T_u$ and $T_v$ for $u$ and $v$, respectively. Assume that the size of $T_u$ is smaller. The \dtree picks $T_u$ and searches non-tree neighbors of each vertex in $T_u$ to find non-tree edges that can reconnect the two trees. The \dtree picks the edge that connects to the shallowest vertex in $T_v$ and reconnects two trees.
Nothing needs to be done when deleting a non-tree edge.
When scanning vertices in both tree updates and query processing of the \dtree, they may rotate the tree if the rotation results in a smaller average depth.}

\stitle{Drawbacks of \dtree.} Much improvement space is left for the \dtree. We summarize the main drawbacks as follows.

\begin{enumerate}
\item \textit{Poor update efficiency.} In \dtree, many efforts are spent on reducing the average depth in tree updates at a high price. Especially in tree-edge deletion, they search non-tree neighbors of all vertices in a sub-tree to identify the replacement edge. The search space can be very large which is not scalable for big graphs.
In addition, the \dtree maintains complex structures to search the replacement edge. Maintaining them incurs extra costs when trees are rotated. 

\item \textit{Poor query efficiency.} In the \dtree, query processing needs to find the roots of two query vertices, and the time complexity of query processing is $h$, where $h$ represents the average depth of vertices. However, in large-scale graphs, the vertex depth in a spanning tree may be very large, and the query efficiency can be very low. 

\end{enumerate}
\noindent

\stitle{Our Approach.} \reftab{complexity} gives a quick view of the theoretical running time of \dtree and our approach. $\kw{nbr_{update}}$ and $\kw{nbr_{scan}}$ depend on the data structure to maintain non-tree neighbors and children. If a balanced binary tree is used, we have $\kw{nbr_{update}} = \log{\avgdeg}$ and $\kw{nbr_{scan}} = \avgdeg$, where $\avgdeg$ is the average degree. If the hash set is used, $\kw{nbr_{update}}$ can be reduced to a constant value, but $\kw{nbr_{scan}}$ will be $\avgdeg+b$ where $b$ is the number of buckets for a hash set. 
By comparison, our approach achieves the almost-constant query time and reduces the time complexity of both insertion and deletion simultaneously. 

We start by considering how to improve the query efficiency in theory.
Our idea is inspired by the disjoint-set data structure which organizes items in different sets.
It offers two operators: \kw{Find} identifies the representative of a set containing the element, and \kw{Union} merges two sets. The structure is commonly identified to achieve the amortized constant time complexity for \kw{Find} and \kw{Union}, which correspond to querying connectivity and edge insertion in our case, respectively. However, it is hard to handle deletion situations if we only use the disjoint set. When an edge is deleted only based on the disjoint set, it is not able to identify if the connected component is disconnected, and we need to compute the connected component from scratch as a result.
To achieve the constant query time complexity while keeping the high efficiency for both edge insertion and deletion, we maintain a spanning tree and a disjoint set simultaneously and combine the advantages of two data structures. The spanning tree implementation in our algorithm is called \basetree, and the disjoint set implementation in our algorithm is called \dstree.




\stitle{Spanning Tree Implementation.}
Our \basetree is extended from \dtree by applying several modifications for higher practical efficiency. 
Our results show that our improved version is much more efficient than \dtree in all aspects. We make the following improvements in response to the drawbacks of \dtree. 
We apply a new heuristic early-termination technique to derive a better theoretical bound when searching replacement edges for a deleted tree edge. Given a tree $T$, after deleting a tree edge, a subtree $T_{1}$ with a larger size and a subtree $T_{2}$ with a smaller size are formed. We find the replacement edge with lowest depth in the small subtree $T_{2}$. This allows us to terminate early and improve the efficiency of searching the replacement edge.
In addition, we also avoid maintaining non-tree neighbors and children for each vertex which reduce the maintenance overhead but not sacrifice the update efficiency.

\stitle{Disjoint Set Implementation.}
To support edge deletion, we implement the children of each tree vertex in the disjoint set as a doubly linked list. We design a set of operators to delete an item from the disjoint set in constant time while keeping the constant time of \kw{Find} and \kw{Union}. When a connected component $A$ is disconnected into $B$ and $C$ after deleting an edge, we use our \basetree to identify them (assuming $|B| \le |C|$). In the disjoint set, we remove all vertices of $B$ from $A$ in $O(|B|)$ time and union all vertices of $B$ to create a new connected component. In this way, the time complexity of updating our disjoint set is bounded by that of the spanning tree.

\stitle{Maintaining 2-edge Connectivity.} We analyze the relationship between connected components and 2-edge-connected components and propose a method for maintaining 2-edge connectivity on a spanning tree by tracking the number of replacement edges for each tree edge. By further combining this with the disjoint-set structure, we achieve constant-time queries for 2-edge connectivity.

\stitle{Contributions.} We summarize our main contributions as follows.

\begin{enumerate}
\item \textit{Theoretical almost-constant query efficiency.} We propose a new approach to combine the advantages of spanning tree and disjoint-set tree. The approach achieves amortized constant query time without sacrificing the time complexity of both edge insertion and edge deletion. Our final solution is theoretically more efficient than the state-of-the-art \dtree in all aspects. By combining the disjoint-set tree, our 2-edge connectivity maintenance algorithm also achieves constant-time queries. 

\item \textit{Theoretical higher update efficiency.} We propose a new spanning-tree structure for higher updating efficiency in both theory and practice compared with \dtree. We also propose a new disjoint-set data structure to handle the edge deletion together with the spanning-tree. Our algorithms are highly efficient for both connectivity updates and 2-edge connectivity updates.

\item \textit{Outstanding practical performance.} We conduct extensive experiments on many real datasets in various settings. The results demonstrate the higher practical efficiency of our algorithms compared with the state of the art.
\end{enumerate}



\vspace{-1em}
\section{Preliminary}
\label{sec:pre}

Given an undirected simple graph $G(V,E)$, $V$ and $E$ denote the set of vertices and the set of edges, respectively. We use $n$ and $m$ to denote the number of vertices and the number of edges, respectively, i.e., $n=|V|, m=|E|$. The neighbors of a vertex $u$ is represented by $\nbr(u)$, i.e., $\nbr(u)=\{v \in V\mid(u,v) \in E\}$. 
A tree $T$ is a connected graph without any cycle. \rr{
Given a tree $T$, $\parent(u)$ denotes the parent vertex of a vertex $u$ in the tree. $\depth(u)$ denotes the depth of the vertex $u$, i.e., the number of tree edges from $u$ to the tree root. $\height$ denotes the average depth of all vertices in the tree, i.e., $\height(T) = \sum_{u \in T}\depth(u)/|T|$. Each tree has only one root. Rotating a tree means keeping the same set of tree edges but changes the tree root. As a result, the parent of each vertex from the old root to the new root becomes the child of the vertex.}
%
%
A path $P = \langle v_1,v_2,...,v_l \rangle$ in $G$ is a sequence of vertices in which each pair of adjacent vertices are connected via an edge, i.e., for all $1 \le i < l$, $(v_i,v_{i+1}) \in E$. We say two vertices $u,v$ are connected if there exists a path such that $u$ and $v$ are terminals. A connected component (CC for short) in a graph is a maximal subgraph in which every pair of vertices are connected. Therefore, two vertices are connected if they are in the same connected component.

\begin{definition}
\textsc{(Connectivity Query)} Given a graph $G$ and two query vertices, the connectivity query aims to determine whether $u$ and $v$ are connected in $G$.
\end{definition}


\stitle{Problem Definition.} Given a graph $G$, we aim to develop an index for processing connectivity queries between arbitrary pairs of vertices and maintain the index when a new edge is inserted or an existing edge is deleted.

\section{Existing Solutions}
\label{sec:exist}

\subsection{Query Processing in Static Graphs}
A straightforward online method for the connectivity query is to perform a bidirectional breath-first search (BFS) or a depth-first-search from a query vertex. Once meeting the other query vertex in the search, we identify that two vertices are connected. The online method for one connectivity query takes $O(m)$ time in the worst case, which is hard to be tolerated in large graphs. To improve the query efficiency, we can index an identifier of the corresponding connected component for every vertex in a static graph, which takes $O(n)$ space. In this way, the connectivity query can be answered by checking the identifier of two query vertices which takes constant time complexity.
Compared with static graphs, dealing with fully dynamic graphs for efficient connectivity query processing is much more challenging. For instance, removing an edge may disconnect the connected component. Certain techniques are expected to identify the connectivity of the original connected component and disconnect the component index structure accordingly.

\begin{figure}[t!]
\centering
\includegraphics[width=0.2\columnwidth]{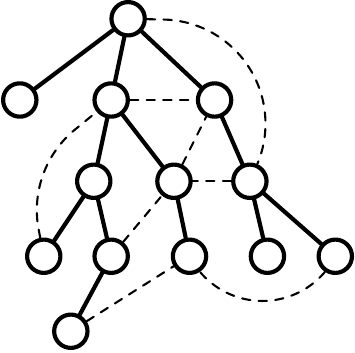}
\vspace{-1em}
\caption{An example graph and a possible spanning tree.}
\label{fig:spanning}
\end{figure}

\vspace{-0.5em}
\subsection{Maintaining Spanning Trees}

A straightforward index-based solution for connectivity queries in dynamic graphs is to maintain a spanning tree for each connected component.
Given a connected component $C$, a spanning tree is a connected subgraph of $C$ including all vertices with the minimum number of edges. The subgraph is a tree structure clearly. An example graph and its spanning tree are presented in \reffig{spanning}. All tree edges are represented as solid lines, and all non-tree edges are represented as dashed lines. 
%
%
Given two query vertices, we can locate the root of each vertex by continuously scanning the tree parent. If the tree roots of two query vertices are the same, they are in the same spanning tree and the same connected component.

Maintaining simple spanning trees for all connected components is not challenging. For inserting a new edge $(u,v)$, nothing needs to be done if two vertices have the same root. We call this case \textit{non-tree edge insertion}. Otherwise, they are from different spanning trees, and we need to merge them. We call this case \textit{tree edge insertion}. To merge two trees, we pick one vertex $u$ of the inserted edge and rotate its spanning tree $T_u$ to make $u$ be the root. Then we can add $T_u$ as a child subtree of $v$ in its spanning tree $T_v$.
For deleting an edge $(u,v)$, nothing happens if $(u,v)$ is a non-tree edge. Deleting a tree edge will immediately divide the spanning tree into two smaller trees. Then we need to identify if there is another edge connecting the two spanning trees. Once such an edge is found, we process it as the tree edge insertion.

\vspace{-0.5em}
\subsection{The State-of-the-Art}
\label{subsec:dtree}

Recently, Qing et al. \cite{DBLP:journals/pvldb/ChenLHB22} proposed a solution for connectivity query processing in dynamic graphs. Their solution is called \dtree. They maintain a spanning tree with additional properties for each connected component to improve the average query efficiency.
Their method is based on the following lemmas.

\vspace{-0.5em}
\begin{lemma}
The average costs of evaluating connectivity queries by spanning trees is optimal if each tree $T$ minimizes $S_d(T)$, where $S_d(T)$ is the sum of distances between root and descendants in $T$, 
i.e., $S_d(T)=\sum_{u \in V(T)}\depth(u)$. \cite{DBLP:journals/pvldb/ChenLHB22}
\end{lemma}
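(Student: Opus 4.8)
The plan is to express the average query cost as an explicit function of $S_d(T)$ and then observe that this function is strictly increasing in $S_d(T)$ for a fixed vertex set, so that minimizing $S_d(T)$ is equivalent to minimizing the average cost.

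First I would fix the cost model implied by the spanning-tree scheme. To answer a connectivity query on a pair $(u,v)$, the algorithm follows parent pointers from $u$ up to its root and from $v$ up to its root, then compares the two roots. The number of parent-pointer steps from a vertex $w$ to its root is exactly $\depth(w)$, so the cost of a single query is proportional to $\depth(u)+\depth(v)$, up to an additive constant for the final root comparison.

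Next I would average this per-query cost over all query pairs. Treating queries as uniformly distributed over ordered pairs of vertices, the total cost is $\sum_{u,v}\bigl(\depth(u)+\depth(v)\bigr)$. By symmetry this equals $2\sum_{u,v}\depth(u)=2n\sum_{u}\depth(u)$, where $n=|V(T)|$. Since $\sum_{u}\depth(u)=S_d(T)$ by definition, the total cost is $2n\,S_d(T)$, and dividing by the $n^{2}$ pairs gives an average cost of $2S_d(T)/n$.

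Finally, because the vertex set---and hence $n$---is fixed for a given connected component, the quantity $2S_d(T)/n$ is a strictly increasing function of $S_d(T)$; therefore any spanning tree that minimizes $S_d(T)$ also minimizes the average query cost, which proves the lemma. When the graph has several components, the costs contributed by the distinct spanning trees are additive, so the argument applies to each tree independently. The step that requires the most care is pinning down the cost model precisely: one must justify that the per-query cost is governed by $\depth(u)+\depth(v)$ rather than by some other structural parameter, and be explicit about whether pairs are counted as ordered or unordered and whether cross-component queries are included. These choices affect only constant factors and leave the conclusion unchanged, but they must be stated cleanly for the optimality claim to be rigorous.
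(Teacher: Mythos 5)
Your proof is correct. Note that the paper does not actually prove this lemma itself---it is stated as an imported result, with a citation to the D-Tree paper of Chen et al.---so there is no internal proof to compare against; your derivation, which writes the average query cost as $2S_d(T)/n$ plus an additive constant and invokes monotonicity in $S_d(T)$ for fixed $n$ (handling multiple components by additivity), is the standard argument that justifies the cited claim, and your care about the cost model (depth-sum per query, ordered versus unordered pairs) addresses exactly the points on which the statement's rigor depends.
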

\vspace{-0.5em}

\begin{definition}
\textsc{(Centroid)} Given a spanning tree $T$, a centroid of $T$ is a vertex with the smallest average distance to all other vertices.
\end{definition}
\vspace{-0.5em}

\vspace{-0.2em}
\begin{lemma}
\label{lem:dtree}
The average cost of each connectivity query by spanning trees is optimal if each tree is 1) rooted in the centroid, and 2) a BFS tree, i.e., the distance from every vertex to the root is minimal. \cite{DBLP:journals/pvldb/ChenLHB22}
\end{lemma}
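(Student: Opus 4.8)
The plan is to invoke the preceding (unnamed) lemma, which reduces optimality of the average query cost to minimizing $S_d(T)=\sum_{u\in V(T)}\depth(u)$ for each tree. So it suffices to show that rooting at the centroid and taking a BFS tree minimizes $S_d(T)$. I would first rewrite $S_d(T)=\sum_{u}d_T(r,u)$, where $r$ is the root and $d_T$ denotes distance along tree edges, and then split the optimization into two independent choices: the spanning tree to use for a fixed root, and the root itself.

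First I would fix the root $r$ and bound each vertex's contribution. Since every tree path is also a path in the connected component, we have $d_T(r,u)\ge d_G(r,u)$ for all $u$, where $d_G$ is graph distance. Summing over $u$ gives $S_d(T)\ge\sum_{u}d_G(r,u)$, with equality exactly when $\depth(u)=d_G(r,u)$ for every $u$, i.e.\ when $T$ is a BFS tree rooted at $r$. This handles condition (2): among all spanning trees sharing the root $r$, the BFS trees are precisely the minimizers, attaining the value $\sum_{u}d_G(r,u)$.

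Next I would optimize over the root. The value just obtained, $\sum_{u}d_G(r,u)$, equals (up to the fixed factor $1/(|V(T)|-1)$) the average distance from $r$ to the remaining vertices, so its minimizer is by definition a centroid $r^\ast$. Rooting at $r^\ast$ and building a BFS tree from $r^\ast$ therefore attains $\min_r\sum_u d_G(r,u)$, which is a lower bound for $S_d$ over all spanning trees and all roots simultaneously. Combining the two steps pins down the global minimum of $S_d(T)$, and the first lemma then converts this into optimality of the average query cost.

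The step needing the most care is the meaning of the centroid and its interaction with the BFS condition. The centroid must be taken with respect to distances in the connected component, not tree distances, so that $\sum_u d_G(r,u)$ is the correct quantity being minimized; otherwise the two conditions are not jointly sufficient to force optimality. Concretely, I would verify self-consistency: when $T$ is a BFS tree from $r^\ast$, the root's tree distances coincide with its graph distances, and for any other vertex $s$ one has $\sum_u d_T(s,u)\ge\sum_u d_G(s,u)\ge\sum_u d_G(r^\ast,u)=\sum_u d_T(r^\ast,u)$, so $r^\ast$ is simultaneously a centroid of the component and of the resulting tree. Establishing this equivalence, rather than reading the two conditions in isolation, is the crux of the argument.
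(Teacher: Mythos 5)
Your proof is correct, but there is nothing in the paper to compare it against: the paper states this lemma with a citation to the D-Tree paper \cite{DBLP:journals/pvldb/ChenLHB22} and supplies no proof of its own (the same is true of the preceding lemma on $S_d(T)$ that you invoke). Taken on its merits, your argument is complete and is the natural one: the termwise bound $d_T(r,u)\ge d_G(r,u)$ gives $S_d(T)\ge\sum_u d_G(r,u)$ for any spanning tree rooted at $r$, with equality exactly for BFS trees, and minimizing the right-hand side over $r$ is by definition achieved at a centroid of the connected component; chaining with the first lemma converts minimality of $S_d$ into optimality of the average query cost.

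Your closing caveat about the meaning of ``centroid'' deserves emphasis, because it is not merely interpretive: under the paper's Definition 2 read literally (``a centroid of $T$,'' with distances measured in the tree), the lemma is false. Concretely, take vertices $r$, $s$, $a_1,\dots,a_k$, $b_1,\dots,b_m$ with $k=m+3$ and $m\ge 1$, and edges $(r,a_i)$ and $(a_i,s)$ for all $i$, plus $(s,b_j)$ for all $j$. The BFS tree from $r$ (tree edges $(r,a_i)$, $(a_1,s)$, $(s,b_j)$) has depth sum $4m+5$, and $r$ is the unique minimizer of tree-distance sums in this tree ($a_1$ gives $4m+6$, $s$ gives $4m+9$, all others are larger), so this tree is rooted in its own tree centroid and is a BFS tree; yet the BFS tree rooted at $s$ has depth sum $2m+5$, strictly smaller, so the first tree cannot be optimal. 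Hence the two conditions are jointly sufficient only when the centroid is taken with respect to graph distances in the connected component, exactly as your proof requires, and your self-consistency check (a BFS tree rooted at a graph centroid also has that vertex as a tree centroid) is the right way to reconcile the two readings.
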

\vspace{-0.5em}

It is already very expensive to maintain just a valid BFS tree or a spanning tree rooted in the centroid. Therefore, \dtree develops several strategies to reduce the average depth in tree maintenance. We summarize them into two categories.

\stitle{Centroid Heuristic.} The centroid heuristic rotates spanning trees for certain cases and aims to reduce the average depth without changing the tree edges. Specifically, when scanning vertices in tree updates, they attempt to locate the centroid of the spanning tree and rotate the tree so that the centroid is the root of the tree. To this end, they observe that given a spanning tree $T$ rooted in its centroid $c$, the subtree size of every child of $c$ in $T$ is not larger than $|T|/2$. Therefore, once the subtree size of vertex $u$ in $T$ is larger than $|T|/2$ and all children of $u$ are not, they identify that $u$ is close to the centroid and make the tree rooted in the new root $u$. 

\stitle{BFS Heuristic.} BFS heuristics reduce the average depth when updating the tree structure (i.e., changing tree edges), which happens in both edge insertion and edge deletion. When it is required to merge one tree $T$ into the other tree $T'$, the BFS heuristic pick a vertex $u$ in $T$ that has a non-tree neighbor $v$ in $T'$ with the lowest depth in $T'$. Then it rotates $T$ to be rooted in $u$ and adds the tree $T$ as a subtree of $v$.

\stitle{Algorithms of \dtree.} We describe the process of \dtree below. In addition to maintaining the parent and the corresponding vertex id for each tree vertex like the spanning tree, \dtree maintains the subtree size, children, and non-tree neighbors for each vertex. Subtree size is used for the centroid heuristic. Children and non-tree neighbors are used to efficiently find a replacement edge when deleting a tree edge. For query processing, as a by-product of continuously scanning parents for each vertex, they check if the visited child of the root has over half tree vertices in the subtree. If so, they apply the centroid heuristic and rotate the tree. Their query time complexity is $O(h)$ where $h$ is the average depth.

For non-tree edge insertion, they check if the depth gap between two vertices $u,v$ of the edge is larger than $1$. If so, they pick one tree edge between $u$ and $v$ to cut the tree and apply the BFS heuristic to merge two trees by connecting $u$ and $v$. For tree edge insertion, they pick and rotate a smaller tree and merge it into the bigger one. Note that every time a tree is updated, we need to update the subtree size for influenced vertices. Meanwhile, we rotate the tree once we meet a vertex satisfying the centroid property. They also need to maintain children and non-tree neighbors for each vertex even though they are not used in edge insertion. The time complexity of edge insertion time is $O(h \cdot \kw{nbr_{update}})$, where $\kw{nbr_{update}}$ is the time complexity to insert or delete an item from the children set and non-tree neighbors. If a hash set structure is used, $\kw{nbr_{update}}$ can be reduced to a small constant but much memory will be used for a large number of buckets. If a balanced binary search tree is used, the running time is $O(h\cdot\log{d})$ where $d$ is the average degree.

The tree is immediately split into two trees when deleting a tree edge. They search non-tree neighbors of all vertices in the smaller tree to find a replacement edge to connect two trees. Given multiple replacement edge candidates, they apply the BFS heuristic to link the tree to the vertex with the lowest depth. The time complexity of edge deletion is $O(h \cdot \kw{nbr_{scan}})$ if non-tree neighbors of each vertex can be scanned in linear time. Note that $h$ is the average depth and is also the average subtree size of each vertex in the tree.
The dominating cost is to scan non-tree neighbors for all vertices in the smaller tree.
There are still some Euler tour-based existing solutions, such as \hdt \cite{DBLP:conf/stoc/HolmLT98,DBLP:journals/jacm/HolmLT01}. However, as shown in our experimental results, \dtree is more efficient than \hdt. Therefore, we mainly introduce \dtree here. We will also introduce some other related works in \refsec{rel}.

\section{Revisiting D-Tree}
\label{sec:base}

\dtree makes many efforts to reduce the average depth in the spanning tree. However, certain heuristics yield marginal benefits for the average depth but sacrifice much efficiency as a trade-off. To improve the overall efficiency, we follow the framework of \dtree and propose an improved lightweight version called \baseidxname (\underline{I}mproved \underline{D}ynamic Tree) in this section. We discuss our implementations for edge insertion and edge deletion in this section. 


\vspace{-0.5em}
\subsection{Motivation}
\label{subsec:motivation}

Our solution is motivated by the following observations.

\vspace{-0.5em}
\begin{observation}
Centroid heuristics in query processing of \dtree may not help with improving query efficiency.
\end{observation}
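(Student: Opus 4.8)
The plan is to argue that the centroid heuristic applied \emph{during query processing} cannot improve the expected query cost, because any rotation it performs merely relocates the root without reducing the quantity $S_d(T)$ that Lemma~\ref{lem:dtree} identifies as the determinant of average query cost. First I would recall the precise mechanism: while scanning parents from a query vertex up to the root, \dtree checks whether the visited child $c$ of the current root $r$ has subtree size exceeding $|T|/2$, and if so rotates the tree to make $c$ the new root. The point to establish is that this rotation, triggered \emph{only} by passing through $c$ on a query path, is performed without any knowledge of the global depth distribution, so it cannot be tuned to decrease $\sum_{u \in V(T)} \depth(u)$ in general, and in fact the centroid is a structural property of the tree that query traffic does not change.

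The key steps, in order, would be as follows. I would first formalize the claim by distinguishing the two effects a query-time rotation could have: (i) an immediate effect on the cost of the \emph{current} query, and (ii) a lasting effect on \emph{future} queries via the changed root. For (i), note that the cost of the current query is already incurred by the upward scan before the rotation is applied, so no current query benefits. For (ii), I would appeal to Lemma~\ref{lem:dtree}: the average query cost is governed by whether the tree is rooted at its centroid and is a BFS tree, and the centroid is determined by the tree's edge structure alone. Since query-time rotations change no tree edges, the centroid is unchanged; thus rotating toward a vertex that happens to have subtree size exceeding $|T|/2$ chases the centroid only coincidentally. I would then exhibit a small example — a tree where the query heuristic's half-size trigger fires at a vertex that is \emph{not} the centroid, or where repeated queries along one branch keep re-rooting without monotonically decreasing $S_d(T)$ — to show the heuristic can even increase the average depth, or at best leaves it unchanged relative to applying the same rotation during an update.

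The main obstacle I anticipate is making the informal phrase ``may not help'' into something provable rather than merely plausible, since an observation of this kind is really a negative/empirical claim. The delicate part is separating the effect of the centroid heuristic itself from the effect of simply \emph{reaching} the centroid, which the update-time heuristic already accomplishes; I would need to argue that every rotation the query heuristic can perform is also reachable (and performed at lower or equal amortized cost) by the update-time centroid heuristic, so the query-time version contributes no additional depth reduction beyond what updates already guarantee. Concretely, I expect the cleanest route is to show that the query-time trigger condition (a child with subtree size $> |T|/2$) is exactly the condition the update-time heuristic tests, and that between two updates the subtree sizes are static, so any beneficial rotation would already have been taken at the last update; hence the query-time check is redundant. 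I would close by noting that this redundancy, combined with the bookkeeping overhead of checking subtree sizes on every query, is precisely why removing it in \basetree does not harm—and indeed helps—practical query efficiency, which is the motivation the observation is meant to supply.
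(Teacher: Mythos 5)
Your proposal takes a fundamentally different — and unfortunately unsound — route from the paper. The paper does not try to prove that query-time rotations never help; it proves that \emph{omitting} them cannot hurt much. Its entire justification is \reflem{centroid}: for a tree $T$ with the same edges as the centroid-rooted tree $T'$, $maxdepth_T \le 2\cdot maxdepth_{T'}$, so even if no rotation is ever performed the depth stays within a factor of two of the best achievable by re-rooting alone. The observation is thus a trade-off claim (marginal depth benefit versus per-query rotation overhead, plus the fact that \dtree cannot guarantee centroid-rootedness anyway), and your proposal contains no analogue of this bounding lemma, which is the actual content of the paper's argument.

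Worse, your central claims are false as stated. You assert that a query-time rotation ``cannot be tuned to decrease $S_d(T)$'' and chases the centroid ``only coincidentally.'' But when the trigger fires — a child $c$ of the root with $\stsize(c) > |T|/2$ — re-rooting at $c$ changes $S_d(T)$ by exactly $|T| - 2\cdot\stsize(c) < 0$: the $\stsize(c)$ vertices under $c$ each lose one unit of depth while the remaining $|T|-\stsize(c)$ vertices each gain one, so every triggered rotation \emph{strictly} decreases the quantity that \reflem{dtree} says governs average query cost. Your redundancy argument fails for a related reason: the update-time heuristic only inspects vertices on the particular paths touched by an update, so a heavy child of the root can exist after an update without having been visited; the first operation to discover it may well be a query scan. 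Static subtree sizes between updates therefore do not make the query-time check redundant. If you want a defensible version of the observation, you must argue as the paper does — bound the cost of never rotating — rather than argue that rotating is useless, which is provably not the case.
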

\vspace{-0.5em}

As introduced in \refsubsec{dtree}, \dtree may rotate the tree in query processing to make the new root close to the centroid. Even though it just additionally brings constant time complexity, avoiding unnecessary tree updates may improve the query efficiency. One may suspect that avoiding this step will increase the average depth, which reduces the average query efficiency to some extent. We respond with the following lemma.

\vspace{-0.5em}
\begin{lemma}
\label{lem:centroid}
Given a spanning tree $T$ rooted in $u$, let $c$ be the centroid in $T$ and $T'$ be the tree with $c$ as the root and the same tree edges as $T$. We have $maxdepth_T \le 2\cdot maxdepth_{T'}$ where $maxdepth_T$ represents the largest vertex depth in $T$.
\end{lemma}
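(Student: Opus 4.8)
The plan is to translate the depths in the statement into path distances in the underlying (unrooted) tree and then invoke the triangle inequality. First I would observe that re-rooting changes neither the vertex set nor the tree edges, so the pairwise path distances are identical in $T$ and $T'$. Writing $d(x,y)$ for the number of tree edges on the unique path between $x$ and $y$, the depth of a vertex $w$ in $T$ is exactly $d(u,w)$ and its depth in $T'$ is exactly $d(c,w)$. Hence
\[
maxdepth_T=\max_{w}d(u,w),\qquad maxdepth_{T'}=\max_{w}d(c,w),
\]
i.e.\ the two quantities are just the eccentricities of the old root $u$ and the new root $c$, respectively.

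Next I would apply the triangle inequality, which is valid because $d$ is a metric on a tree: for every vertex $w$,
\[
d(u,w)\le d(u,c)+d(c,w).
\]
Maximizing both sides over $w$ gives $maxdepth_T\le d(u,c)+maxdepth_{T'}$. It then remains only to control the term $d(u,c)$. Since $u$ is itself one of the vertices entering the eccentricity of $c$, we have $d(u,c)\le\max_{w}d(c,w)=maxdepth_{T'}$. Substituting this bound yields $maxdepth_T\le 2\cdot maxdepth_{T'}$, which is the claim.

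The argument has essentially no obstacle; the only facts used are that tree distances are independent of the choice of root and that distances satisfy the triangle inequality. It is worth remarking that the centroid property of $c$ is never actually invoked, so the inequality holds when $T'$ is obtained by re-rooting $T$ at \emph{any} vertex; the centroid is simply the case relevant to the algorithm, for which $maxdepth_{T'}$ equals the radius of the tree. The factor of $2$ is essentially tight, as witnessed by taking $T$ to be a path rooted at one endpoint, whose centroid is the midpoint.
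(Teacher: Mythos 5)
Your proof is correct and follows essentially the same route as the paper's: both bound the depth of any vertex in $T$ (a tree distance from $u$) by going through the new root $c$, using that every vertex, including $u$ itself, lies within $maxdepth_{T'}$ of $c$, which is exactly the triangle inequality you invoke. Your write-up is in fact cleaner and makes explicit the observation (also implicit in the paper) that the centroid property of $c$ is never needed.
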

\vspace{-0.5em}

\begin{proof}

    \rr{For any two vertices $u$ and $v$ in tree $T'$ with the centroid $c$ as the root, their distances to the centroid $c$ must be less than $\depth_{T'}(v)$. Therefore, there must exist a path passing through $c$ between $u$ and $v$, and the length of this path is less than $2\cdot\depth_{T'}(v)$. Therefore, the distance between any two vertices in $\depth_T(v)$ must be less than $2\cdot\depth_{T'}(v)$.}
\end{proof}

\reflem{centroid} shows that the theoretical max depth is still well-bounded even if we never rotate and balance the tree. On the other hand, \dtree cannot guarantee to always maintain the centroid as the root. It is a trade-off between the depth and the additional efforts to reduce it.
Note that the depth determines not only the query efficiency. Tree updates can also benefit from low depth since scanning from a vertex to the root happens in both edge insertion and deletion. Therefore, in our implementation, we still perform some rotation operations to make the root close to the centroid but never update the tree in query processing.

\stitle{Improved Query Processing.} Our query algorithm of \baseidxname is called \algbasequery. The algorithm only searches the tree root of each query vertex and identifies if their roots are the same.
Our performance studies show that the average depth of \baseidxname is competitive to that of \dtree, and our improved query algorithm \algbasequery is more efficient in most real datasets. We will further eliminate the dependency between the query efficiency and the average depth in \refsec{opt}, which achieves almost-constant query efficiency.

\vspace{-0.5em}
\begin{observation}
Search replacement edges in processing tree edge deletion of \dtree is expensive.
\end{observation}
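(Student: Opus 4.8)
The plan is to turn the qualitative word ``expensive'' into a concrete cost analysis of the replacement-edge search, and then exhibit an input on which this cost is as large as a full graph scan. First I would recall the deletion procedure from \refsubsec{dtree}: after a tree edge is removed, the spanning tree is split into a larger subtree $T_1$ and a smaller subtree $T_2$, and the \dtree examines \emph{every} non-tree neighbor of \emph{every} vertex of $T_2$ so as to collect all candidate replacement edges and then pick the one reaching the shallowest vertex of $T_1$. Hence the work performed is exactly $\sum_{v \in V(T_2)} \kw{nbr_{scan}}(v)$, which — whether neighbors are kept in a balanced binary search tree or scanned linearly — is at most $\Theta\!\bigl(\sum_{v \in V(T_2)} |\nbr(v)|\bigr)$, i.e. bounded by the number of edge endpoints incident to $T_2$, and is tight when almost all incident edges are non-tree edges.

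Next I would lower-bound this sum by construction. Because $T_2$ is only guaranteed to be the \emph{smaller} half, $|V(T_2)|$ may be as large as $\lfloor n/2 \rfloor$. I would fix a graph in which every vertex has degree $\Theta(\avgdeg)$ and whose spanning tree, upon deletion of one tree edge, splits into two parts of size $\Theta(n)$; then a single deletion forces the \dtree to scan $\Theta(n\cdot\avgdeg)=\Theta(m)$ endpoints. This already matches the $O(m)$ cost of recomputing connectivity from scratch by BFS, which is exactly the baseline the index is meant to beat, so the claim follows.

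The sharper point I would emphasize is that this cost is paid even when a valid replacement edge is available almost immediately. Because the \dtree insists on the globally shallowest candidate (its BFS heuristic), it cannot stop after seeing the first replacement edge but must finish scanning all of $T_2$. I would make this precise by comparing, for a single deletion, the cost $c_{\mathrm{first}}$ of reaching the first replacement edge with the cost $c_{\mathrm{all}}$ of scanning all of $T_2$, and showing that on the constructed family the ratio $c_{\mathrm{all}}/c_{\mathrm{first}}$ is unbounded: place one shallow replacement edge incident to the vertex of $T_2$ visited first, while $\Theta(n)$ further vertices are still scanned in vain. This isolates the \emph{absence of early termination} as the true source of inefficiency, which is precisely what the new heuristic in \basetree will later repair.

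The main obstacle is that ``expensive'' is not a priori well-posed: the per-vertex cost $\kw{nbr_{scan}}$ depends on the chosen neighbor data structure, and the relevant notion of cost (worst case for one deletion, or amortized over a sequence?) must be pinned down before any bound is meaningful. I would therefore phrase the claim as a worst-case bound for a single tree-edge deletion, fixing the data-structure model so that $\kw{nbr_{scan}}(v)=\Theta(|\nbr(v)|)$, and only then invoke the constructed family. The genuinely delicate part is designing this family so that it simultaneously forces a balanced split (giving $|V(T_2)|=\Theta(n)$) and a high incident-edge count, while being stable under the centroid and BFS heuristics — that is, verifying that \dtree's own rebalancing does not reshape the tree in a way that avoids the bad split before the deletion happens.
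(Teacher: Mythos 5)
Your proposal is correct and follows essentially the same route as the paper: the paper's own justification for this observation is exactly that \dtree scans the non-tree neighbors of \emph{all} vertices in the smaller subtree to collect every candidate and apply its BFS heuristic, so almost half the edges of the graph are scanned in the worst case, and that early termination upon the first replacement edge is the remedy. Your contribution is only to make this informal argument rigorous (explicit cost sum, a worst-case family achieving $\Theta(m)$, and the unbounded $c_{\mathrm{all}}/c_{\mathrm{first}}$ ratio), which goes beyond but does not diverge from the paper's reasoning.
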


\vspace{-0.5em}

Edge deletion in \dtree takes much more time compared with edge insertion. \rr{It is caused by searching the replacement edge to reconnect trees when deleting a tree edge. 
Assume a tree edge $(u,v)$ is deleted where $v$ is the parent of $u$. \dtree searches the neighbors of all vertices in the subtree rooted by $u$ to find all edges that can reconnect the tree. Then they pick one of them based on certain heuristics. As a result, almost half edges in the graph will be scanned in the worst case. 
Our main target is to significantly improve the efficiency of edge deletion without sacrificing the efficiency of query processing and edge insertion. Our method can terminate searching the subtree once any replacement edge is found.}

\vspace{-0.5em}
\begin{observation}
Maintaining children and non-tree neighbors for each vertex is expensive.
\end{observation}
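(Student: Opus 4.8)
The plan is to bound the extra work that maintaining children sets and non-tree-neighbor sets forces on every tree mutation, and to contrast that overhead with the cost of the mutation itself. I would read ``expensive'' not as an absolute bound but as the statement that these auxiliary structures multiply the natural per-operation cost by a $\kw{nbr_{update}}$ factor (or inflate the space), even though the connectivity functionality never consults them; the gap between the $O(\height \cdot \kw{nbr_{update}})$ insertion cost of \dtree and the $O(\height)$ cost in \reftab{complexity} is precisely the quantity I want to attribute to this maintenance.

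First I would isolate the operations that make the children sets stale. Re-rooting (rotation) is the crucial one: when a tree is re-rooted along a tree path $p_0, p_1, \ldots, p_k$, every parent-child relationship on the path is reversed, so for each $p_i$ we must delete its former child $p_{i+1}$ from the children set and insert its former parent $p_{i-1}$. This is $\Theta(k)$ set modifications, each costing $\kw{nbr_{update}}$, so a rotation that would otherwise touch only $\Theta(k)$ parent pointers now costs $\Theta(k \cdot \kw{nbr_{update}})$. Because the centroid heuristic triggers rotations inside ordinary insertions and deletions, this multiplicative penalty is paid again and again rather than once.

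Next I would handle non-tree neighbors, whose sets change only when an incident edge switches between tree and non-tree status: when the BFS heuristic cuts a tree edge and promotes the inserted edge during non-tree-edge insertion, and when a replacement edge is promoted during tree-edge deletion. Each such switch forces one $\kw{nbr_{update}}$ update per endpoint. I would then make explicit the space trade-off hidden inside $\kw{nbr_{update}}$: a balanced binary search tree gives $\kw{nbr_{update}} = \Theta(\log \avgdeg)$ but inflates every mutation by a logarithmic factor, whereas a hash set drives $\kw{nbr_{update}}$ to a constant only at the price of $\Theta(\avgdeg + b)$ space per vertex for $b$ buckets, so one cannot escape the penalty in both time and space at once.

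The main obstacle is that the claim is comparative, so the argument is a reduction rather than a single inequality: I must show that every $\kw{nbr_{update}}$ charge counted above is pure overhead, i.e.\ that neither set is genuinely required. The endpoint is to exhibit substitutes that recover the same information for free --- enumerating a subtree by walking the disjoint-set doubly linked list in place of explicit children, and recovering non-tree neighbors by scanning $\nbr(u)$ and testing tree membership on the fly in place of an explicit set --- so that the identical connectivity results are obtained without ever paying these maintenance costs. Turning ``never needed'' from a plausible statement into a precise one, by checking that each downstream use of children and of non-tree neighbors admits such a substitute, is the delicate part of the plan.
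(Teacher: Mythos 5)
Your accounting of where the overhead arises---every re-rooting reverses each parent--child pair along the rotated path, every promotion of an edge between tree and non-tree status touches both endpoints' sets, each touch costs $\kw{nbr_{update}}$, and the balanced-BST time penalty versus hash-set space penalty cannot both be avoided---is exactly the paper's justification of this observation and of the $O(h \cdot \kw{nbr_{update}})$ versus $O(h)$ gap in its complexity table. Up to that point you are on the paper's route, just stated more explicitly than the paper bothers to.

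The gap is in your proposed substitute for the children sets, which you yourself flag as the delicate step. You suggest enumerating a subtree ``by walking the disjoint-set doubly linked list,'' but that structure cannot do this job. The children DLL belongs to the \dstree, a path-compressed disjoint-set forest that the paper introduces later and only to obtain constant-time queries; its tree edges are generally not graph edges and its nearly flat, compression-dependent shape bears no relation to the spanning tree's subtrees, so walking it cannot enumerate the spanning subtree rooted at a vertex---which is precisely what the replacement-edge search requires. Invoking it here also reverses the paper's logic: the DLL is itself a maintained auxiliary structure, and the paper introduces it exactly because in \dstree one \emph{cannot} recover children by scanning graph neighbors. The paper's actual substitute goes the other way: in the spanning tree (\baseidxname) every tree edge is a graph edge, so the children of $x$ are recovered for free by scanning $\nbr(x)$ and testing $x = \parent(y)$ (Lines 9--13 of \refalg{basedelete}), the remaining neighbors are the non-tree neighbors, and cross-tree membership is decided by walking ancestors toward the root. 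Your substitute for non-tree neighbors matches this; your substitute for children does not, and with it the ``pure overhead'' reduction, as you state it, fails for the children half of the claim.
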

\vspace{-0.5em}

\dtree maintains children and non-tree neighbors for finding the replacement edge as mentioned above. In exchange, the two sets require updating every time we rotate the tree or replace tree edges which frequently happens in query processing, edge insertion, and edge deletion. Maintaining them for each vertex is costly.
Based on these observations, our implementation \baseidxname only maintains a subset of attributes in \dtree for each vertex including: 

\begin{itemize}
\item $\parent(u)$: the parent of $u$ in the tree;
\item $\stsize(u)$: the size of subtree rooted in $u$.
\end{itemize}

\begin{algorithm}[t!]
\caption{\algbaseinsert}
\label{alg:baseinsert}
\KwIn{a new edge $(u,v)$ and the \baseidxname index}
\KwOut{the updated \baseidxname}

$root_u \leftarrow$ compute the root of $u$\;
$root_v \leftarrow$ compute the root of $v$\;

\tcc{non-tree edge insertion}
\If{$root_u = root_v$}{
    \lIf{$\depth(u)<\depth(v)$}{$\kw{swap}(u,v)$}
    \lIf{$\depth(u)-\depth(v) \le 1$}{\Return}
    \tcc{reduce tree deviation}
    
    $w \leftarrow u$\;
    \For{$1 \le i < \frac{\depth(u)-\depth(v)}{2}$}{
        $w \leftarrow \parent(w)$\;
    }
    $\algunlink(w)$\;
    $\alglink(\algreroot(u),v,root_v)$\;
    \Return\;

}

\tcc{tree edge insertion}
\If{$\stsize(root_u) > \stsize(root_v)$}{
    $\kw{swap}(u,v)$\;
    $\kw{swap}(root_u,root_v)$\;
}
$\alglink(\algreroot(u),v,root_v)$\;

\end{algorithm}

\vspace{-0.5em}
\subsection{Edge Insertion}
\label{subsec:baseinsert}

Our improved algorithm for edge insertion is called \algbaseinsert. Before presenting details, we introduce three tree operators. They are the same as those in \dtree except excluding updates for children and non-tree neighbors.

\begin{itemize}
\item \rr{$\algreroot(u)$ rotates the tree and makes $u$ as the new root. It updates the parent-child relationship and the subtree size attribute from $u$ to the original root. The time complexity of $\algreroot()$ is $O(\depth(u))$.}

\item \rr{$\alglink(u,v,root_v)$ adds a tree $T_u$ rooted in $u$ to the children of $v$. $root_v$ is the root of $v$. Given that the subtree size of $v$ is changed, it updates the subtree size for each vertex from $v$ to the root. We apply the centroid heuristic by recording the first vertex with a subtree size larger than $\stsize(root_v)/2$. If such a vertex is found, we reroot the tree, and the operator returns the new root. The time complexity of $\alglink()$ is $O(\depth(v))$.}

\item \rr{$\algunlink(u)$ disconnect the subtree of $u$ from the original tree. All ancestors of $u$ are scanned to update the subtree size. The time complexity of $\algunlink()$ is $O(\depth(u))$.}
\end{itemize}

\noindent We present the pseudocode of the improved edge insertion in \refalg{baseinsert}. Most processes are the same as that of \dtree. We first conduct a connectivity query to identify if two vertices are in the same tree. Lines 3--11 apply the BFS heuristic for non-tree edge insertion. When the depth gap between $u$ and $v$ is over $1$, a major difference from \dtree is about the strategy of BFS heuristic in Line 7. \dtree uses the threshold $\depth(u)-\depth(v)-2$ instead of $\frac{\depth(u)-\depth(v)}{2}$. To reduce the average depth, we add half vertices from $u$ to its ancestor with the same depth of $v$ to the subtree rooted in $u$ (Lines 7--9). Then we add the updated tree rooted in $u$ to the children of $v$. 

\begin{figure}[t!]
\centering
\vspace{-0.5em}
\includegraphics[width=0.5\columnwidth]{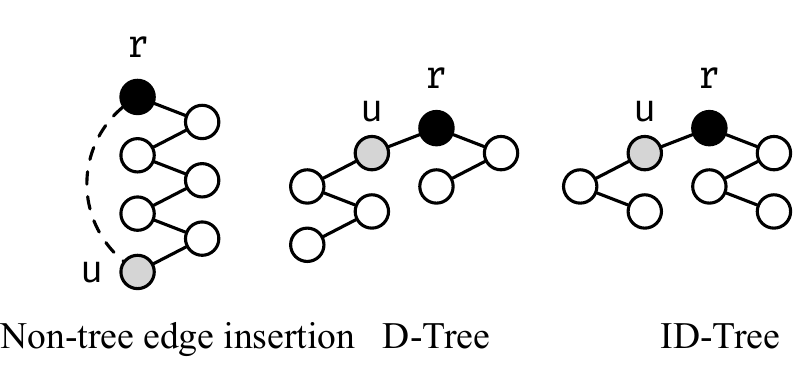}
\vspace{-1em}
\caption{Non-tree edge insertion in \dtree and \basetree.}
\label{fig:intra}
\end{figure}

\vspace{-0.5em}
\begin{example}
\reffig{intra} shows the different strategies of BFS heuristic between \dtree and our \basetree. We insert a non-tree edge $(u,r)$. We have $\depth(u) = 6$ and $\depth(r) = 0$. Based on the strategy of \dtree, we add $u$ together with its three ancestors as a child subtree of $r$, which is presented in the middle figure. However, based on our strategy, we add $u$ together with its two ancestors as a child subtree of $r$, which is presented in the right figure. The average depth of the tree is reduced from $1.9$ to $1.7$ in this example.
\end{example}
\vspace{-0.5em}

For tree edge insertion (Lines 12--14), our process is the same as \dtree, which merges a smaller tree into a bigger tree. 

\begin{figure}[t!]
\centering
\includegraphics[width=0.5\columnwidth]{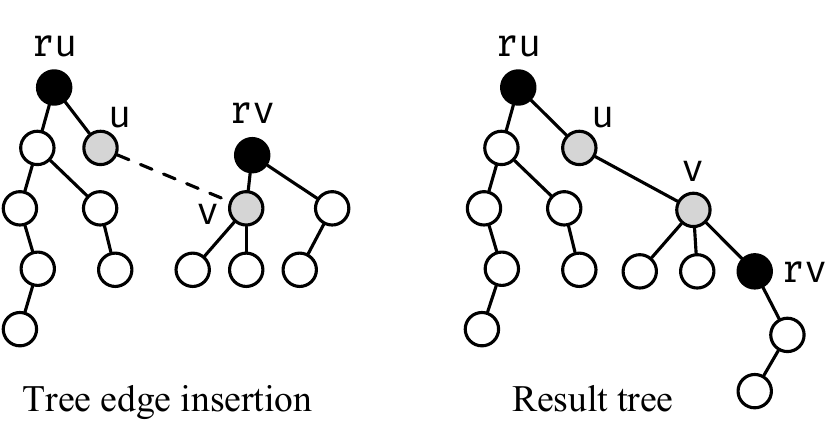}
\vspace{-0.8em}
\caption{\rr{An example of tree edge insertion for \basetree.}}
\label{fig:inter}
\end{figure}
\vspace{-0.2em}
\begin{figure}[t!]
\vspace{-0.3em}
\centering
\includegraphics[width=0.5\columnwidth]{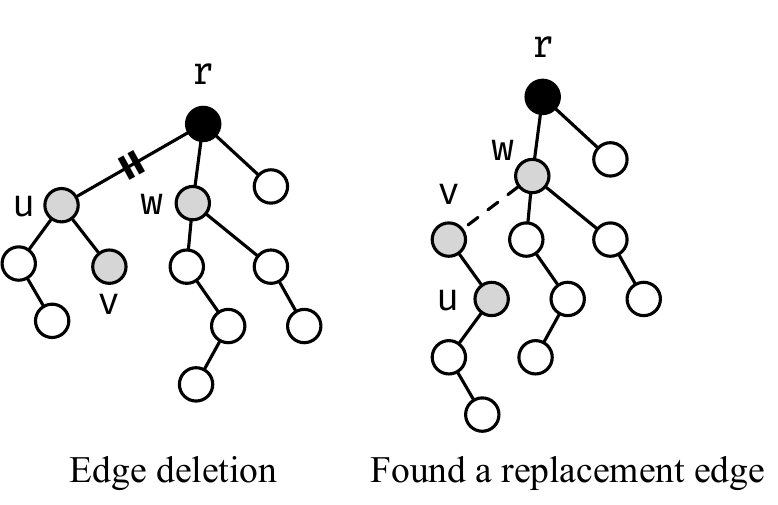}
\vspace{-1em}
\caption{An example of tree edge deletion for \basetree.}
\label{fig:delete}
\end{figure}

\vspace{-0.5em}
\begin{example}
\rr{An example for tree edge insertion is shown in \reffig{inter}. After inserting $(u,v)$, we reroot the tree rooted by $rv$ to $v$ and add the updated tree rooted by $v$ to the children of $u$.}
\end{example}
\vspace{-0.5em}




\vspace{-0.7em}
\subsection{Edge Deletion}
\label{subsec:basedelete}

\begin{algorithm}[t!]
\caption{\algbasedelete}
\label{alg:basedelete}
\KwIn{an existing edge $(u,v)$ and the \baseidxname}
\KwOut{the updated \baseidxname}

\lIf{$\parent(u) \neq v \land \parent(v) \neq u $}{\Return}
\lIf{$\parent(v) = u$}{$\kw{swap}(u,v)$}
$root_v \leftarrow \algunlink(u)$\;

\tcc{reduce the worst-case time complexity of searching replacement edge in subtree}
\lIf{$\stsize(root_v) < \stsize(u)$}{$\kw{swap}(u,root_v)$}

\tcc{search subtree rooted in $u$}
$Q \leftarrow$ an empty queue, $Q.push(u)$\;
$S \leftarrow \{u\}$\;
\tcc{$S$ maintains all visited vertices}

\While{$Q \neq \emptyset$}{
    $x \leftarrow Q.pop()$\;
    \ForEach{$y \in \nbr(x)$}{
        \lIf{$y = \parent(x)$}{\textbf{continue}}
        \uElseIf{$x = \parent(y)$}{
            $Q.push(y)$\;
            $S \leftarrow S \cup \{y\}$\;

        }
        \Else{
            $succ \leftarrow \kw{true}$\;
            \ForEach{$w$ from $y$ to the root}{

                \uIf{$w \in S$}{
                    $succ \leftarrow \kw{false}$\;
                    $\textbf{break}$\;
                }
                \Else{
                    $S \leftarrow S \cup \{w\}$\;
                }
                
            }
            \If{$succ$}{
                $root_v \leftarrow \alglink(\algreroot(x),y,root_v)$\;
                \Return\;
            }

        }
    }
}
\end{algorithm}

Our improved edge deletion algorithm is called \algbasedelete. The pseudocode is presented in \refalg{basedelete}. Similar to \dtree, we need to search for a replacement edge when a tree edge is deleted, and we always search from the smaller tree (Line 4).
Compared with \dtree, our strategy to find a replacement edge is completely different. The general idea is to immediately terminate the search once a replacement edge is found. We use a queue $Q$ to maintain all visited vertices. In Line 6, $S$ maintains the set of all visited vertices in the subtree of $u$. Without children and non-tree neighbors, we directly search graph neighbors (Line 9) for each vertex popped from the queue. Lines 11--13 are for children. We add them to the queue for further exploration. For each non-tree neighbor $y$ (Lines 14--24), we check if $y$ is from the different tree by scanning the ancestors of $y$ (Lines 16--21). An optimization here is to terminate the current iteration once we reach a vertex recorded in the subtree of $u$ (Lines 17--19).
If $succ$ keeps positive after all iterations of Lines 16--21, it means we already reach the root of $y$, and $y$ belongs to a different tree. As a result, we link two trees via the edge $(x,y)$ (Lines 22--24). Otherwise, all visited vertices in Line 17 are in the subtree of $u$, and we continue to search for the next possible replacement edge. Maintaining all visited vertices in $S$ guarantees each vertex is scanned only once for the whole process of searching for replacement edges.

\vspace{-0.5em}
\begin{example}
A running example of tree edge deletion is shown in \reffig{delete}. In the left figure, $(r,u)$ is deleted, and the tree is split into two subtrees. We search for the replacement edge from the smaller tree, which is rooted in $u$. Assume that a replacement edge $(v,w)$ is found. We reroot the smaller tree to $v$ and link two trees by connecting $(v,w)$. The right figure shows the final result.
\end{example}
\vspace{-0.5em}

\section{Theoretical Analysis}
\label{sec:opt}

\vspace{0.9em}
\subsection{Edge Insertion}

\vspace{0.7em}
\begin{theorem}
\label{thm:baseinsert}
The time complexity of \refalg{baseinsert} is $O(h)$, where $h$ is the average depth.
\end{theorem}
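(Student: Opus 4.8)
The plan is to show that every primitive operation invoked by \refalg{baseinsert} costs time proportional to the depth of the single vertex it touches, then to bound the total work of each branch by $O(\depth(u)+\depth(v))$, and finally to read this off as $O(h)$ by replacing the endpoint depths with the average depth.

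First I would recall the per-operation costs stated earlier in the excerpt: computing the root of a vertex $x$ by walking parent pointers costs $O(\depth(x))$, and the three tree operators satisfy $\algreroot(x)=O(\depth(x))$, $\alglink(x,v,root_v)=O(\depth(v))$, and $\algunlink(x)=O(\depth(x))$. Hence Lines~1--2, which compute $root_u$ and $root_v$, already cost $O(\depth(u)+\depth(v))$, and every subsequent operation will be charged to a single vertex depth.

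Next I would analyze the two branches separately. For tree-edge insertion (Lines~12--14), after the optional swap the algorithm executes $\algreroot(u)$ followed by $\alglink(u,v,root_v)$, contributing $O(\depth(u))+O(\depth(v))$, so this branch is $O(\depth(u)+\depth(v))$. For non-tree-edge insertion (Lines~3--11) I would first ensure $\depth(u)\ge\depth(v)$ (Line~4) and dispose of the trivial case $\depth(u)-\depth(v)\le 1$ in $O(1)$. Otherwise the loop walking $w$ up from $u$ (Lines~7--9) runs at most $(\depth(u)-\depth(v))/2=O(\depth(u))$ iterations; $\algunlink(w)$ costs $O(\depth(w))\le O(\depth(u))$ since $w$ is an ancestor of $u$; the subsequent $\algreroot(u)$ acts inside the \emph{detached} subtree rooted at $w$, where the depth of $u$ is $\depth(u)-\depth(w)=O(\depth(u))$; and the final $\alglink(\cdot,v,root_v)$ costs $O(\depth(v))$. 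Summing these gives $O(\depth(u)+\depth(v))$ for this branch as well.

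Having established that a single insertion of edge $(u,v)$ costs $O(\depth(u)+\depth(v))$, the last step is to pass to the average. Since $h$ is defined as the mean of $\depth(x)$ over all tree vertices, treating the endpoints as typical vertices gives $\depth(u)+\depth(v)=O(h)$, which yields the claimed average-case bound $O(h)$. The main obstacle is the non-tree branch: one must check that detaching at the midpoint $w$ keeps the walk-up, the $\algunlink$, and the $\algreroot$ costs \emph{simultaneously} within $O(\depth(u))$ rather than accidentally producing a product such as $O(\depth(u)\cdot\depth(v))$. The key observation that resolves this is that each operator is pinned to exactly one vertex depth (that of $w$, of $u$ within the detached subtree, or of $v$), so the total collapses to a clean sum and the averaging step becomes immediate.
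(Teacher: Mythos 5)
Your proof is correct and follows essentially the same approach as the paper's: charge root-finding, $\algreroot$, $\algunlink$, and $\alglink$ to the depths of the vertices they touch, bound each branch by $O(\depth(u)+\depth(v))$, and conclude $O(h)$ by the average-depth interpretation. Your version is in fact more detailed than the paper's (which simply asserts each step costs $O(h)$), particularly in tracking the walk-up loop and the cost of $\algreroot$ inside the detached subtree in the non-tree branch.
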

\vspace{-0.5em}

\vspace{-0.5em}
\begin{proof}
    Computing the roots of $u$ and $v$ requires $O(h)$ time. For non-tree edge insertion, if the depth gap between $u$ and $v$ is less than 1, no other operation is needed; otherwise, an existing edge needs to be unlinked and a new inserted edge needs to be linked, which also takes $O(h)$ time. For tree edge insertion, the two trees can be linked directly after rerooting, which also takes $O(h)$ time. 
\end{proof}

\subsection{Edge Deletion}

Our strategy to find a replacement edge is simple but efficient in both theory and practice. Even though we scan graph neighbors for a vertex from the queue, we show that the search space is quite small by the following lemma. We assume that the distribution of vertices and edges is uniform.

\vspace{-0.5em}
\begin{lemma}
\label{lem:expect}
The expected total number of iterations of Line 9 in \refalg{basedelete} is $O(h)$, where $h$ is the average tree depth. 
\end{lemma}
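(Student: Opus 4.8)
The plan is to read the quantity in question as the number of graph neighbors inspected during the search, and then bound that number by splitting it into tree-edge scans and non-tree-edge scans. Each execution of Line~9 inspects exactly one neighbor $y\in\nbr(x)$ of a vertex $x$ that has been popped from $Q$, and the whole search ranges over the smaller subtree, which I will write as $T_u$; by Line~4 we have $|T_u|=\stsize(u)\le n/2$ (more precisely, at most half the vertices of the affected connected component). Hence the total number of Line~9 iterations equals the number of tree edges of $T_u$ plus the number of non-tree edges incident to $T_u$ that are touched before the search returns. I would bound these two contributions independently.

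For the tree edges, the exploration of $T_u$ driven by $Q$ is an ordinary BFS, so each tree edge of $T_u$ is scanned at most twice (once from each endpoint, once as a skipped parent edge and once as a child edge that enqueues a vertex). This gives a deterministic bound of $2(|T_u|-1)=O(|T_u|)$, with no appeal to randomness. For the non-tree edges I would invoke the uniformity assumption: conditioned on the vertex set of $T_u$, the far endpoint of a non-tree edge is (almost) uniform over the component, so it lies outside $T_u$, i.e. it is a valid replacement edge, with probability at least $1-|T_u|/n\ge 1/2$. Since Lines~22--24 return the moment the first replacement edge is detected, the sequence of cross/not-cross outcomes over the scanned non-tree edges is a fixed-order sequence of independent Bernoulli trials with success probability $\ge 1/2$, stopped at the first success; the number of such scans is therefore stochastically dominated by a geometric random variable of mean $\le 2=O(1)$. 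I would stress that this $O(1)$ bound survives the case where $T_u$ has no replacement edge at all (the component genuinely splits), because the geometric partial sum $\sum_{k\ge 1}(1/2)^{k-1}$ that upper-bounds the expected count converges irrespective of how many non-tree edges $T_u$ actually carries.

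Combining the two parts yields an expected count of $O(|T_u|)+O(1)=O(|T_u|)$. The last step is to replace $|T_u|$ by $h$: because Line~4 always searches the smaller side we have $|T_u|\le\stsize(u)$, and averaging $\stsize(u)$ over the non-root vertices is exactly the average subtree size, which equals $h+1$ via the identity $\sum_{u}\stsize(u)=\sum_{u}(\depth(u)+1)$. Thus the expected searched size is $O(h)$, and so is the expected number of Line~9 iterations.

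The step I expect to be the main obstacle is the probabilistic bound for the non-tree edges. The delicate point is stating precisely in what sense the non-tree endpoints are uniform and mutually independent under the assumed model, and verifying that the data-dependent BFS order (which fixes the sequence in which non-tree edges are presented) does not corrupt the per-edge success probability $\ge 1/2$ needed for geometric domination — this is clean once one observes that the scanning order is determined solely by the deterministic tree structure and is independent of the random endpoints. By comparison, the $O(|T_u|)$ tree-edge bound and the reduction from $|T_u|$ to $h$ through the average-subtree-size identity are routine.
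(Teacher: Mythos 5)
Your proposal is correct and follows essentially the same route as the paper's proof: both decompose the Line~9 iterations into tree-edge scans, bounded by the searched subtree size and hence by $O(h)$ via the identity $\sum_u \stsize(u)=\sum_u(\depth(u)+1)$, and non-tree-edge scans, bounded in expectation by a constant because the smaller side is searched, so each non-tree edge crosses to the other side with probability at least $1/2$ and the search stops at the first success. Your write-up is simply a more careful rendering of the paper's argument (explicit geometric domination, the no-replacement-edge case, and the independence of the scan order), so there is nothing substantive to add.
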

\vspace{-0.5em}

\vspace{-0.5em}
\begin{proof}
    \rr{The sum of the subtree size of all vertices in a tree is equal to the sum of the depth of all vertices. Therefore, the average subtree size of a vertex can be considered as $O(h)$. In \refalg{basedelete}, we need to traverse tree edges and non-tree edges in the subtree. The tree edges are visited at most $O(h)$ times.} For non-tree edges, since the BFS search is always performed on the tree with the smaller $st\_size$ rooted at $u$, each non-tree edge has a probability of being the replacement edge greater than 1/2. Therefore, the expected total number of iterations of Line 9 is still $O(h)$.
\end{proof}
\vspace{-0.5em}

Based on \reflem{expect}, we show the time complexity for our improved algorithm for edge deletion.

\vspace{-0.5em}
\begin{theorem}
\label{thm:idelete}
\rr{The expected time complexity of \refalg{basedelete} is $O(h)$.}
\end{theorem}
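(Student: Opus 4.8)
The plan is to split the running time of \algbasedelete into four pieces and bound each by $O(h)$ in expectation: (i) the initial \algunlink(u) on Line~3; (ii) the outer neighbor scans driven by Line~9; (iii) the inner ancestor-walks of Lines~16--21; and (iv) the terminal \algreroot and \alglink on Line~23. Piece (ii) is exactly the quantity controlled by \reflem{expect}, so it contributes $O(h)$ with no further work. For pieces (i) and (iv) I would invoke the fact, recorded when the tree operators are introduced in \refsec{base}, that \algunlink, \algreroot and \alglink each cost time proportional to the depth of their argument, and then argue that this depth is $O(h)$ in expectation. The remaining and genuinely delicate piece is (iii), because a single neighbor visit counted by (ii) can launch a non-constant walk up toward the root.

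\textbf{Amortizing the ancestor-walks against $S$.} The key idea is to charge the total cost of all inner walks to the visited set $S$ rather than to the degree sum. First I would establish the invariant that, at every moment before the algorithm returns on Line~24, every vertex ever placed in $S$ lies in the smaller subtree $T_u$ rooted at $u$. This follows because $S$ starts as $\{u\}$, the tree children inserted on Line~13 stay inside $T_u$, and whenever the inner loop walks up from a non-tree neighbor $y$ that lies in $T_u$ it only traverses ancestors inside $T_u$ and must reach $u\in S$ (or an earlier recorded ancestor) before escaping, so it adds only $T_u$-vertices and sets $succ=\kw{false}$; the sole walk that reaches a foreign root and adds vertices outside $T_u$ is the final successful one, which returns immediately. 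Given the invariant, the membership test $w\in S$ on Line~17 guarantees that each vertex is inserted into $S$ at most once over the whole run, so the number of vertex insertions across all inner walks is at most $|T_u|$, plus one terminating step (a break or a root-hit) per inner-loop invocation. Since the number of inner-loop invocations is at most the number of Line~9 iterations, both summands are $O(h)$ in expectation by \reflem{expect}; the final successful walk, together with the subsequent \alglink, costs $O(\depth(y))=O(h)$ in the same depth-is-$O(h)$ convention used in \refthm{baseinsert}.

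\textbf{Depth-proportional costs.} For pieces (i) and (iv) it remains to see that the relevant depths are $O(h)$ in expectation. Because the sum of subtree sizes over a tree equals the sum of vertex depths, which is $h\cdot|T|$, the average subtree size is $\Theta(h)$; and since Line~4 always redirects the search onto the tree of smaller \stsize, the expected size and depth of $T_u$ are $O(h)$. Hence \algunlink(u) and the terminal \algreroot(x)/\alglink each cost $O(h)$ in expectation. Adding the four pieces gives the claimed $O(h)$ expected bound.

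\textbf{Main obstacle.} I expect the crux to be the rigorous handling of piece (iii): one must both prove the $S\subseteq T_u$ invariant, so that no vertex is re-inserted and no walk ``leaks'' into the large tree and runs long, and keep the amortization from double-counting against \reflem{expect} --- the neighbor scans of Line~9 and the ancestor walks of Line~16 are distinct loops with different iteration counts, and the latter must be charged to $|S|$ rather than to $\sum\degree$. A secondary subtlety is that both the expected-size estimate for $T_u$ and the ``probability $>1/2$'' replacement-edge argument underlying \reflem{expect} rely on the uniform-distribution assumption stated just before that lemma, so I would make that dependence explicit to fix the precise meaning of ``expected'' in the statement.
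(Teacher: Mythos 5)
Your proof is correct, but it takes a genuinely different route from the paper's on the one delicate piece. The paper's proof is three sentences long: it bounds every ancestor walk of Lines 16--21 by $O(h)$, then invokes the geometric argument inside \reflem{expect} --- each scanned non-tree edge is a replacement edge with probability greater than $1/2$, so the expected number of walks is at most $2$ --- and concludes with $2\cdot O(h)$ for the walks on top of the $O(h)$ outer iterations. You instead amortize: you prove the invariant that $S$ stays inside the smaller subtree $T_u$ until the single successful walk, observe that the membership test on Line 17 then lets each vertex be inserted at most once so the total insertion work over all failed walks is at most $|T_u|$, which is $O(h)$ in expectation, and charge one terminating step per walk to the $O(h)$ iteration count of \reflem{expect}; only the final successful walk is paid for with a per-walk depth bound. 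Your route costs more writing (the $S\subseteq T_u$ invariant, which the paper never states, plus its proof), but it buys a cleaner handling of the expectation: the paper multiplies an expected number of walks by a per-walk bound of $O(h)$, silently treating every individual walk as costing at most the \emph{average} depth, whereas your failed-walk bound is deterministic given $|T_u|$ and the iteration count, and randomness enters only through the expected size of $T_u$, through \reflem{expect}, and through the one successful walk. You also explicitly account for the \algunlink, \algreroot and \alglink costs (your pieces (i) and (iv)), which the paper's proof leaves implicit, and you correctly flag that the uniform-distribution assumption stated before \reflem{expect} is what fixes the meaning of ``expected'' in the theorem --- a dependence the paper's proof also has but does not restate.
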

\vspace{-0.5em}

\vspace{-0.5em}
\begin{proof}
    \rr{For each iteration of Line 16 in \refalg{basedelete}, we traverse the path from $y$ to the root, which takes $O(h)$ time. Based on \reflem{expect}, the expected number of visits to non-tree edges is bounded by $2$. Therefore, the expected time complexity of \refalg{basedelete} is $O(h)$.}
\end{proof}



\section{Constant-time Query Processing}
\label{sec:opt}

The ideas of \dtree and our improved version in \refsec{base} are to maintain a near balanced spanning tree since the query time depends on the depth of each vertex. In this section, we eliminate the dependency and \textbf{further improve the query efficiency to almost-constant}. Meanwhile, we \textbf{bound the same theoretical running time for edge insertion and edge deletion}.

\vspace{-0.7em}
\subsection{Utilizing the Disjoint-Set Structure}
\vspace{-0.2em}

Our idea is inspired by the disjoint-set data structure \cite{DBLP:journals/jacm/TarjanL84}. It provides two operators, $\kw{Find}$ and $\kw{Union}$, which are presented in \refalg{ds}. $\kw{Find}$ returns an id for the set containing the input item, and $\kw{Union}$ merges the sets of two given items. The structure maintains each set as a tree and uses the tree root to represent the set. Two crucial optimizations are adopted. The first is path compression. 
When path compression is applied, in the path from a query item to the root in \kw{Find}, the algorithm connects every visited item to the root directly.
The second operation is union by size. Given that the $size$ of each item in \kw{Union} represents the number of items in the tree, we always merge the smaller tree with the larger tree. With these two optimizations, the amortized time complexity for both $\kw{Find}$ and $\kw{Union}$ is $O(\alpha(n))$, where $n$ is the number of items, and $\alpha()$ is the inverse Ackermann function. Due to the extremely slow growth rate of $\alpha()$, it remains less than $5$ for all possible values of $n$ that can be represented in the physical universe. 

\begin{algorithm}[t!]
\caption{Disjoint-set operators}
\label{alg:ds}
\SetKwProg{myproc}{Procedure}{}{}
\myproc{$\kw{Find}(x)$}{
    \If{$x.parent \neq x$}{
       $x.parent \leftarrow \kw{Find}(x)$\;
       \Return{$x.parent$}\;
    }
    \Return{$x$}\;
}

\SetKwProg{myproc}{Procedure}{}{}
\myproc{$\kw{Union}(x,y)$}{
    $x \leftarrow \kw{Find}(x)$\;
    $y \leftarrow \kw{Find}(y)$\;
    \lIf{$x = y$}{\Return}
    \lIf{$x.size > y.size$}{$\kw{swap}(x,y)$}
    $x.parent \leftarrow y$\;
    $y.size \leftarrow x.size + y.size$\;
}
\end{algorithm}

Using disjoint set is a fundamental method to detect all connected components and is also natural to deal with scenarios with only edge insertions. However, given an edge deletion disconnecting two connected components, it is hard to split a set based on the disjoint-set structure. Identifying all vertices belonging to one of the resulting connected components is challenging.
We define \dstree as the tree structure maintained in \refalg{ds}. Note that some additional attributes will be maintained for each vertex in \dstree, which are used for edge deletion and will be introduced later.
Our idea is to simultaneously maintain a \baseidxname and a \dstree for each connected component. We utilize \dstree to improve the performance of query processing and edge insertion. We update \dstree with the help of \baseidxname for edge deletion.





We introduce the data structure of \dstree in \refsubsec{optstructure}. In \refsubsec{optoperator}, we study the details of several \dstree operations which serve in our final algorithms for edge insertions and deletions. Details of the final query processing algorithm will also be covered in \refsubsec{optoperator}. Then, we introduce our final edge insertion algorithm and edge deletion algorithm in \refsubsec{optinsert} and \refsubsec{optdelete}, respectively. The complexity analysis of these two algorithms is also covered there.



\subsection{\dstree Structure}
\label{subsec:optstructure}


\stitle{Optimal Children Maintenance in \dstree.} We discuss the data structure of \dstree in this subsection. We start by considering the attributes of each vertex in \dstree. For an edge deletion, it may require removing several vertices from \dstree, and we need to connect all children for each removed vertex back to the tree. Unlike \dtree, tree edges in \dstree may not be graph edges, which makes searching children from scratch (like Line 9 of \refalg{basedelete}) not feasible. Therefore, a data structure to maintain children of each vertex in \dstree is expected to support the following tasks.

\vspace{0.5em}
\begin{itemize}
\item Task 1. Deleting a vertex $u$ from children of $v$;
\item Task 2. Scanning all children of $u$;
\item Task 3. Inserting a vertex $u$ into the children set of $v$.
\end{itemize}
\vspace{0.5em}

The first two tasks are required when we remove $u$ from a \dstree. The last task is required when we delete the parent of $u$ or take the union of two \dstrees rooted in $u$ and $v$, respectively. A straightforward idea is to use a hash table. However, even if a large number of buckets are used to achieve a high efficiency for insertion and deletion, scanning all children in the second task is time-consuming.
To overcome this challenge, we adopt a doubly-linked list (DLL) structure to maintain all children for each vertex in \dstree. Specifically, we use a structure called \dsnode to represent each vertex in \dstree and maintain the following attributes for \dsnode.

\vspace{-0.2em}
\begin{table}[h!]
\begin{tabular}{l l}
- $id$        & // id of the corresponding vertex; \\
- $parent$    & // pointer to the parent's $\dsnode$ in $\dsname$-tree; \\
- $pre$       & // previous pointer in the DLL of the parent's \\ & // children; \\
- $next$      & // next pointer in the DLL of the parent's \\ & // children; \\
- $children$     & // start position of the DDL of children.
\end{tabular}
\end{table}

Note that we do not maintain the subtree size for each vertex. This is because we only use the subtree size of the \dstree root when linking two \dstrees. The subtree size of a vertex $u$ in \dstree is the same as that in \basetree if $u$ is the root in both trees.
The children attribute points to the first child in DLL. To remove a child $u$, we connect the previous child ($\dsnode.pre$) and the next child ($\dsnode.next$) in DLL. To insert a child $u$, we add $u$ to the beginning of the DLL. The following lemma holds.

\vspace{-0.5em}
\begin{lemma}
Inserting a new child or deleting an existing child for a vertex in \dstree is completed in $O(1)$ time.
\end{lemma}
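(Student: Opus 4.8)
The plan is to verify directly that each operation touches only a constant number of pointers, which is immediate once we observe that the DLL representation gives $O(1)$ access to the relevant \dsnode pointers and never requires scanning the list. The crucial setup is that every operation is handed the \dsnode of the vertex being inserted or removed, together with (or reachable via) the \dsnode of its parent $v$; hence no search is ever needed to locate the affected node within $v.children$.

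First I would handle insertion. To add a child $u$ at the head of the DLL of $v$, I would perform a constant sequence of pointer assignments: set $u.next$ to the current first child $v.children$; if that former first child exists, update its $pre$ pointer to $u$; set $v.children \leftarrow u$; and finally set $u.parent \leftarrow v$ with $u.pre$ pointing to nothing. Each step is a single assignment, so insertion uses only $O(1)$ pointer updates. Next I would handle deletion. Since the \dsnode of $u$ is given directly, I splice $u$ out by reconnecting its DLL neighbors: if $u.pre$ exists, set $u.pre.next \leftarrow u.next$, otherwise $u$ is the head and I set $v.children \leftarrow u.next$; symmetrically, if $u.next$ exists, set $u.next.pre \leftarrow u.pre$. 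This is again a constant number of assignments.

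Because both procedures perform only finitely many pointer reads and writes and involve no traversal of the children list, each completes in $O(1)$ time, which establishes the lemma. There is no genuine ``hard part'' here; the only thing to check carefully is the handling of boundary cases, namely when $u$ is the first child (so $u.pre$ is null and $v.children$ must be redirected) or the last child (so $u.next$ is null). These are covered by the conditional updates above and do not affect the constant bound. I would note explicitly that the $O(1)$ guarantee relies on not maintaining a subtree-size attribute per vertex (as remarked before the lemma), so no ancestor chain needs updating upon a single child insertion or deletion.
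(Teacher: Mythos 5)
Your proof is correct and takes essentially the same approach as the paper: both argue that inserting at the head of the children DLL and splicing out a given \dsnode require only a constant number of pointer assignments, with no list traversal and (as you rightly note) no ancestor updates since per-vertex subtree sizes are not maintained. The only cosmetic difference is that the paper sidesteps your boundary-case conditionals by placing virtual sentinel nodes at both ends of the DLL, so the $pre$ and $next$ pointers are never null.
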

\vspace{-0.5em}

For task 2, it is clear to see that all children can be scanned following the DLL, and the time only depends on the number of children, which is also optimal. 

\subsection{\dstree Operators}
\label{subsec:optoperator}

We introduce several operators to manipulate \dstrees which are used in our final algorithms. We add a suffix DS to distinguish certain operators from those of \dtree. 
All operators are presented in \refalg{dsoperator}, and we will show that all of them can be implemented in amortized constant time, which is optimal.

\begin{algorithm}[t!]
\caption{\dstree operators}
\label{alg:dsoperator}

\SetKwProg{myproc}{Procedure}{}{}
\myproc{$\algdsunlink(u)$}{
    

    $\dsnode(u).pre.next \leftarrow \dsnode(u).next$\;
    $\dsnode(u).next.pre \leftarrow \dsnode(u).pre$\;

    $\dsnode(u).parent \leftarrow \dsnode(u)$\;
    $\dsnode(u).pre \leftarrow \kw{Null}$\;
    $\dsnode(u).next \leftarrow \kw{Null}$\;
}

\vspace{0.2em}
\SetKwProg{myproc}{Procedure}{}{}
\myproc{$\algdslink(u,v)$}{
    \tcc{union without find and comparing size}
    \tcc{the input satisfies $\stsize(u) \le \stsize(v)$}
    \tcc{union two \dsname-Trees}
    $\dsnode(u).parent \leftarrow \dsnode(v)$\;


    \tcc{add $u$ to the new DLL}
    $\dsnode(u).pre \leftarrow \dsnode(v).children$\;
    $\dsnode(u).next \leftarrow \dsnode(v).children.next$\;
    $\dsnode(u).pre.next \leftarrow \dsnode(u)$\;
    $\dsnode(u).next.pre \leftarrow \dsnode(u)$\;
}

\vspace{0.2em}
\SetKwProg{myproc}{Procedure}{}{}
\myproc{$\algfind(u)$}{
    \If{$\dsnode(u).parent \neq \dsnode(u)$}{
        $root \leftarrow \algfind(\dsnode(u).parent.id)$\;
        $\algdsunlink(u)$\;
        $\algdslink(u,root)$\;
        \Return $root$\;
    }
    \Return $u$\;
}

\vspace{0.2em}
\SetKwProg{myproc}{Procedure}{}{}
\myproc{$\algdsdel(u)$}{
    \tcc{assign children of $u$ to the root}
    $root_u \leftarrow \algfind(u)$\;
    $\algdsunlink(u)$\;
    \ForEach{child $w$ of $u$ in $\dsname$-Tree}{
        $\algdsunlink(w)$\;
        $\algdslink(w,root_u)$\;
    }
    
}

\vspace{0.2em}
\SetKwProg{myproc}{Procedure}{}{}
\myproc{$\algdsreroot(u)$}{
    $root_u \leftarrow \algfind(u)$\;
    $\kw{swap}(\dsnode(u),\dsnode(root_u))$\;
    $\dsnode(u).id \leftarrow u$\;
    $\dsnode(root_u).id \leftarrow root_u$\;
    
}

\end{algorithm}

\stitle{UnlinkDS and LinkDS.} We start with operators that are straightforward to be implemented. Changing the parent of a tree node often happens in manipulating \dstrees, \rr{and a typical scenario is the path compression optimization to find the root in a disjoint-set tree (i.e., all visited vertices are assigned as the children of the root).}
\algdsunlink disconnects the vertex $u$ from its parent. It removes $u$ from the children DLL of its parent. For ease of presentation, we add virtual vertices at the beginning and end of DLL respectively. In this way, the $pre$ pointer (Line 2) and the $next$ pointer (Line 3) are not \kw{Null}. 

\algdslink adds a vertex to the children DLL of the other vertex and is a simplified version of \kw{Union}. 
%
In our final update algorithms, we identify the root and the size of each \dstree before invoking \algdslink. Therefore, unlike the original \kw{Union} operator, we do not need to execute $\kw{Find}$ to find the root of each vertex and compare the size of two trees. We set $v$ as the parent of $u$ in Line 8 and add $u$ to the children DLL of $v$. For ease of presentation, we add virtual vertices at the beginning and end of DLL respectively. $\dsnode(v).children$ (Line 9) points to the virtual beginner of children DLL of $v$. In this way, the $pre$ pointer (Line 11) and the $next$ pointer (Line 12) are not \kw{Null}.
\rr{The strategy of \algdslink is consistent with the \kw{Union} operator of disjoint-set structure \cite{DBLP:journals/jacm/TarjanL84}. }The time complexity of both \algdsunlink and \algdslink is $O(1)$.

\stitle{FindDS for Query Processing.} Based on $\algdsunlink$ and $\algdslink$, we reorganize the \kw{Find} operator in the original disjoint set based on the attributes of \dsnode. The operator is executed to find the root of each vertex.
Recall that in our query processing algorithm for \baseidxname, nothing needs to be maintained. Therefore, we can process queries based on \dstree and do nothing for \dtree. Our final query processing algorithm is called $\algoptquery$. It invokes $\algfind$ to find the root of each vertex and identifies if they are the same. The theoretical querying time is summarized below.

\vspace{-0.5em}
\begin{theorem}
The amortized complexity of $\algoptquery$ for connectivity query processing is $O(\alpha(n))$, where $\alpha(n)$ is the inverse Ackermann function of the number of vertices $n$ in the graph.
\end{theorem}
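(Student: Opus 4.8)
The plan is to reduce the claim to the classical amortized analysis of the disjoint-set forest with both path compression and union by size, due to Tarjan and van Leeuwen \cite{DBLP:journals/jacm/TarjanL84}. First I would observe that $\algoptquery$ on a pair $(u,v)$ does nothing more than invoke $\algfind(u)$ and $\algfind(v)$ and then compare the two returned roots for equality; the comparison is $O(1)$, so the cost of a query is dominated by the cost of two $\algfind$ calls. Hence it suffices to show that a single $\algfind$ runs in amortized $O(\alpha(n))$ time.

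The heart of the argument is to verify that, despite the DLL-based reorganization of \dstree, the operator $\algfind$ reproduces exactly the pointer dynamics of the textbook \kw{Find} with path compression. The recursion in $\algfind$ first computes the root of the parent and then, via the pair $\algdsunlink(u)$ followed by $\algdslink(u,root)$, detaches $u$ from its current parent and reattaches it as a direct child of the root. Unrolling the recursion, every vertex on the path from $u$ to the root is reattached directly to the root, which is precisely path compression. I would emphasize that the DLL attributes ($pre$, $next$, $children$) are auxiliary bookkeeping that do not alter the parent-pointer forest on which the amortized bound depends; by the earlier lemma each $\algdsunlink$ and $\algdslink$ is $O(1)$, so the per-call overhead of maintaining the DLL is absorbed into the path length. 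Symmetrically, I would note that every union performed when building \dstree (through $\algdslink$ with the invariant $\stsize(u) \le \stsize(v)$) attaches the smaller tree beneath the larger one, so the structure also obeys union by size.

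With these two structural facts established, the classical result applies verbatim: over any sequence of $m$ \kw{Find} and union operations on $n$ elements, a disjoint-set forest with path compression and union by size runs in $O(m\,\alpha(n))$ total time, so each operation costs amortized $O(\alpha(n))$. Since a query is two $\algfind$ calls plus constant work, $\algoptquery$ answers a connectivity query in amortized $O(\alpha(n))$ time, as claimed.

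I expect the main obstacle to be the middle step: carefully arguing that the reorganized, DLL-augmented operators are pointer-for-pointer equivalent (on the parent forest) to the standard implementation, so that the amortized potential argument of \cite{DBLP:journals/jacm/TarjanL84} transfers without modification. A secondary subtlety worth flagging is that queries are interleaved with the edge-deletion routine, which removes and re-unions vertices in \dstree; one must confirm that these deletions only ever perform $O(1)$-time detach/attach steps and unions that still respect the size ordering, so that they do not invalidate the amortized accounting for the \kw{Find} operations.
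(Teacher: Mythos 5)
Your proposal is correct and takes essentially the same route as the paper: the paper's own proof is a two-line assertion that $\algoptquery$ is ``consistent with'' the classical \kw{Find} operator of the disjoint-set structure of Tarjan and van Leeuwen and therefore inherits its $O(\alpha(n))$ amortized bound. Your write-up simply supplies the details the paper leaves implicit---that the DLL-based $\algdsunlink$/$\algdslink$ pair realizes path compression in $O(1)$ per vertex and that $\algdslink$ preserves union by size---plus a flag on the interaction with deletions, which the paper does not address either.
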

\vspace{-0.5em}

\begin{proof}
\rr{The strategy of \algoptquery is consistent with the \kw{Find} operator of Disjoint-Set structure \cite{DBLP:journals/jacm/TarjanL84}. \algoptquery has the same time complexity as \kw{Find}.}
\end{proof}
\vspace{-0.5em}

Given that $\alpha(n) < 5$ as mentioned earlier, this makes an amortized constant time for our query processing algorithm in practice.

\stitle{Isolate.} The \algdsdel operator deletes a vertex from a \dstree. 
A \dstree may require to be split into two trees for edge deletion. To this end, we \algdsdel all vertices belonging to the smaller \basetree from the old \dstree and union all vertices which are isolated as a new \dstree.
Instead of adding all children of $u$ to its parent in the pseudocode, we first do a path compression and find the root of the \dstree. Then, we link each child to the root by executing the \algdslink operator (Line 25). 

\vspace{-0.5em}
\begin{lemma}
\label{lem:dschildren}
The amortized time complexity of \algdsdel is $O(1)$.
\end{lemma}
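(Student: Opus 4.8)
The plan is to bound \algdsdel by an amortized argument after isolating the one genuinely non-constant piece of work. A single call \algdsdel(u) does three things: a call \algfind(u) with path compression, one \algdsunlink(u), and a loop that detaches each child of $u$ and reattaches it to the root using \algdsunlink and \algdslink. The first two parts are immediate: \algfind inherits the amortized $O(\alpha(n))$ bound established for \algoptquery, and \algdsunlink is $O(1)$; since $\alpha(n)<5$ in practice, both count as amortized $O(1)$. Hence the whole difficulty is the child-relocation loop, which naively costs $O(c_u)$, where $c_u$ is the number of children of $u$ in the \dstree. The goal is to show this $O(c_u)$ is paid for by potential that the structure has already accumulated.

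The core of the proof I would write is a potential function that absorbs the loop. Let $\Phi$ be the number of \dstree nodes at depth at least two, i.e.\ those whose parent is not a root. The key is the interaction between \algfind(u) and the following \algdsunlink(u). After \algfind(u) compresses the $u$-to-root path, $u$ becomes a direct child of the root, so each of its $c_u$ children sits at depth two and together they hold $c_u$ units of $\Phi$. When \algdsunlink(u) turns $u$ into a root, all $c_u$ children are promoted to depth one in a single $O(1)$ step, releasing exactly $c_u$ units of potential. That released potential is precisely what funds the loop: every child is detached and relinked to the root (again at depth one) by $O(1)$-time \algdsunlink and \algdslink calls that leave $\Phi$ unchanged. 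Summing, the amortized cost of \algdsdel(u) is $O(\alpha(n)) + O(1) - c_u + O(c_u) = O(1)$.

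The main obstacle is making this accounting globally consistent rather than self-referential. I must argue that the $c_u$ units withdrawn during an isolation were genuinely deposited earlier and that no edge is charged twice; this amounts to checking that every operation that can push a node to depth at least two—chiefly the union performed by \algdslink when it places one tree-root beneath another, together with path compression—deposits the matching potential, and that \algdsdel always relinks children to the \emph{canonical root} (never to an intermediate vertex), so that relocations cannot cascade and re-scan the same edge. Because \algdslink and \algfind are exactly the \kw{Union} and path-compressing \kw{Find} of the structure of Tarjan and van~Leeuwen \cite{DBLP:journals/jacm/TarjanL84}, the subtle step is to fold the inverse-Ackermann amortization of \algfind and the depth-based credit for the relocation loop into one consistent potential funded by the same deepening operations. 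A secondary boundary case to dispatch is when $u$ is itself the \dstree root (so \algfind(u) returns $u$ and no depth-two children exist); here I would rely on the split procedure rerooting via \algdsreroot so that isolated vertices are never the current root, which keeps the depth-two accounting valid.
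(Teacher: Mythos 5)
Your local accounting for \algdsdel itself is correct and rather elegant: after $\algfind(u)$ compresses $u$ to depth one, the $c_u$ children of $u$ sit at depth two; $\algdsunlink(u)$ promotes them all in one step, releasing $c_u$ units of your potential $\Phi$; and the relocation loop re-attaches them as children of the root (depth one), so no fresh potential is needed. The genuine gap is the step you defer as a consistency ``check'': funding the deposits. The only operation that pushes nodes to depth two is \algdslink when it places one \dstree root beneath another, i.e., the union in Line 5 of \refalg{optinsert}. Because of path compression, the demoted root $root_u$ typically holds essentially \emph{all} vertices of its component as direct children, so this single $O(1)$-time union must deposit $\Theta(c_{root_u})$ units of $\Phi$, which is $\Theta$ of the smaller component's size. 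Under your potential the amortized cost of \algdslink --- and hence of edge insertion --- becomes proportional to the smaller component's size, which is not $O(1)$ and is not bounded by the insertion budget $O(h)$; so the potential proves the lemma only by silently shifting an unbounded liability onto insertions, contradicting the paper's companion bounds (the $O(1)$ claim for \algdslink and \refthm{finalinsert}). The scenario is realizable: build a star-shaped component whose \dstree root $x$ holds $k$ children, merge it into a slightly larger component ($x$ is demoted in $O(1)$ time), then delete the one remaining edge incident to $x$; now $S=\{x\}$, yet $\algdsdel(x)$ costs $\Theta(k)$, and in your scheme all $k$ credits must have been deposited by that single $O(1)$ union. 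A secondary unresolved point: you cannot simply ``fold in'' the Tarjan--van Leeuwen potential, because \algdsunlink, \algdsdel and \algdsreroot are not operations of that structure; turning a non-root into a root can \emph{increase} the rank-based potential, so even the $O(\alpha(n))$ bound for \algfind needs re-verification in this extended operation set.

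For comparison, the paper's own proof takes a completely different and far more elementary route: it observes that the total number of parent--child links in a \dstree is at most $n$, so the amortized (in effect, average) number of children per vertex is $O(1)$, and it couples this with the structural guarantee, stated immediately after the lemma, that \algdsdel is never executed on the root --- the only vertex whose child count is inflated by path compression. Your potential-based route is more ambitious, since it aims at a worst-case amortized statement rather than an averaging one, and the isolate-side bookkeeping is exactly right; but the union deposit is the crux of the difficulty, not a routine verification, and completing the argument would need genuinely global ingredients (union-by-size doubling, the invariant that \dstree roots coincide with \basetree roots, and charging each child-acquisition to the $O(1)$ operation that performs it), none of which appear in your outline.
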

\vspace{-0.5em}

\vspace{-0.5em}
\begin{proof}
\rr{The lemma is straightforward since the number of all children in the tree is $n$, and the amortized children number of each vertex is $O(1)$.}
\end{proof}
\vspace{-0.5em}

\rr{The $\algdsdel$ operation removes a given vertex $u$ from its connected component. In the operation, all children of $u$ are added back to the \dstree as the children of the root, which guarantees the children are not removed. Note that given the path compression optimization, the children number of the \dstree root is very large, and that for all other tree vertices is very small. 
\textbf{Our final algorithm for edge deletion will guarantee to never execute $\algdsdel$ for the root. } }

\vspace{-0.3em}
\begin{example}
An example of executing \algdsdel is shown in \reffig{dll}. Assume that $r$ is the \dstree root. To delete the vertex $u$, we first connect the previous vertex and the next vertex in the doubly linked list. Then we connect the children of $u$ to the root $r$.
\end{example}
\vspace{-0.3em}

\begin{figure}[t!]
\centering
\includegraphics[width=0.5\columnwidth]{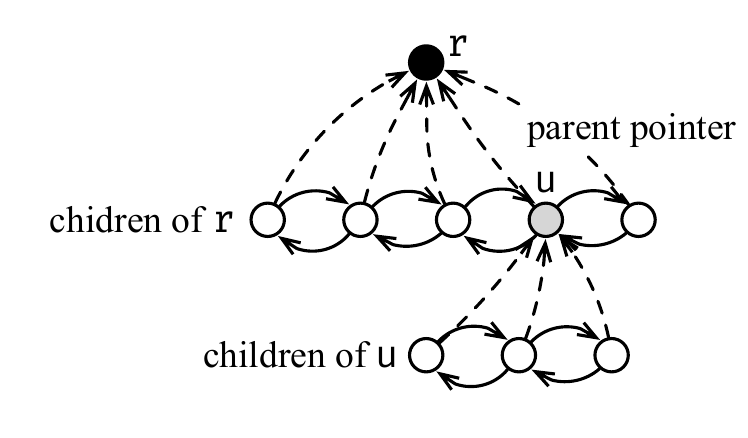}
\vspace{-0.8em}
\caption{An example of $\algdsdel(u)$.}
\label{fig:dll}
\end{figure}

\stitle{RerootDS.} The \algdsreroot operator is crucial to keep the \dstree root consistent with that of \basetree. 
Several potential ways exist to reroot \dstree. One way is to rotate the \dstree similar to the \algreroot operator in \basetree where all tree edges do not change. Given the new root $u$, another way is to first delete $u$ from the tree by invoking the \algdsdel operator and then assign $u$ as the parent of the original root by invoking the \algdslink operator.
However, both of the above methods increase the depth of vertices in \dstree, which reduces the efficiency of \algfind and may even break the amortized constant time complexity of \algfind.

Motivated by this, our method to reroot \dstree is to replace the new root $u$ with the original root $root_u$ directly. By replacement, we mean to exchange the children and the parent of $u$ and $root_u$. Note that the parent attribute of every child of $root_u$ will be pointed to $u$. Even if the amortized number of children for each tree vertex is $O(1)$ as proved in \reflem{dschildren}, the size of all children of $root_u$ can be very large. To improve the reroot efficiency, we store a \dsnode pointer instead of a vertex id in the parent attribute for each \dsnode. In this way, as shown in Lines 28--30 of \refalg{dsoperator}, we swap the corresponding \dsnode objects of two vertices, and update their id for the new vertex. The time complexity of \algdsreroot is $O(1)$.

\subsection{The Final Edge Insertion Algorithm}
\label{subsec:optinsert}

For edge insertion, we \kw{Union} two \dstrees if a tree edge is inserted to connect two connected components. Additionally, we propose an optimization to improve the performance of both insertion and deletion. The optimization keeps the root of \dstree the same as that of \basetree. It benefits the following operations in edge insertion and deletion, respectively.

\vspace{0.5em}
\begin{enumerate}
\item \textit{Avoid searching \basetree root.} 
As shown in \refalg{baseinsert}, we need to search the root of \basetree frequently. Given that finding the \basetree root takes amortized constant time, keeping their roots consistent speeds up the edge insertion significantly.


\item \textit{Avoid deleting \basetree root.}
In Line 3 of \refalg{basedelete}, we split the subtree $T_u$ of $u$ from the original tree if a replacement edge is not found. In our final edge deletion framework discussed later, we will delete all vertices in $T_u$ from \dstree. Keeping the two roots of \basetree and \dstree consistent guarantees that the \dstree root is never deleted. An immediate challenge after deleting the \dstree root is to identify a new root. Then we need to link all children of $u$ back to the new root. Due to the path compression optimization, the number of children is extremely large. Connecting all children of the \dstree root to the new root is costly. 
\end{enumerate}
\vspace{0.5em}

\begin{algorithm}[t!]
\caption{\algoptinsert}
\label{alg:optinsert}
\KwIn{an existing edge $(u,v)$ and the \optidxname index}
\KwOut{the updated \optidxname}

$root_u \leftarrow \algfind(u)$\;
$root_v \leftarrow \algfind(v)$\;
Lines 3--14 of \refalg{baseinsert}\;
$\algreroot(u)$\;
$\algdslink(root_u,root_v)$\;
$\alglink(u,v,root_v)$\;
\end{algorithm}

\begin{algorithm}[t!]
\caption{\algoptdelete}
\label{alg:optdelete}
\KwIn{an existing edge $(u,v)$ and the \optidxname index}
\KwOut{the updated \optidxname}
$(u,root_v,succ,S) \leftarrow \algbasedelete(u,v)$\;
$\algdsreroot(root_v)$\;
\lIf{$succ$}{\Return}
\tcc{$u$ is the root of the smaller \basename-Tree}

$\algdsdel(u)$\;
$S \leftarrow S \setminus \{u\}$\;
\ForEach{$w \in S$}{
    $\algdsdel(w)$\;
    $\algdslink(w,u)$\;
}

\end{algorithm}

\noindent We present the final algorithm for edge insertion in \refalg{optinsert}. The final index is called \optidxname (short for I\underline{D}-Tree a\underline{n}d \underline{D}S-Tree), and the insertion algorithm is called $\algoptinsert$. 
%
%
In Lines 1--2, we find the root of each vertex in \dstree by invoking \algfind. Given that roots of two tree are consistent, fining root in \dstree is much more efficient. 
For non-tree edge insertion, we reorganize the \basetree same as \refalg{baseinsert}. Given that the tree root does not change, we do nothing for \dstree.
For tree edge insertion, we need \kw{Union} two \dstrees given that two connected components are connected. After Line 14 of \refalg{baseinsert}, we already know $root_u$ and $root_v$ are roots of two original \dstrees and the subtree size of $root_u$ is smaller than that of $root_v$. Note that in Line 4, we \algreroot the \basetree of $root_u$ to $u$ but do not keep the root of \dstree consistent. This would not break the correctness since the tree will be merged into the larger tree, and the root of the final \dstree will be correct.
In Line 5, we invoke \algdslink to assign $root_u$ as the child of $root_v$ in \dstrees which essentially merges two \dstrees.
Finally, we link \basetrees as before.
Given the implementation details of all \dstree operators, we have the following theorem.


\vspace{-0.5em}
\begin{theorem}
\label{thm:finalinsert}
\rr{\refalg{optinsert} has the same running time complexity as \refalg{baseinsert}.}
\end{theorem}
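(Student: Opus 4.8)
The plan is to decompose \refalg{optinsert} into its constituent operations and bound each one, then argue the total matches the $O(h)$ bound established for \refalg{baseinsert} in \refthm{baseinsert}. The operations split naturally into two groups: those acting on the \basetree, which are exactly the operations already analyzed in \refthm{baseinsert}, and the newly introduced \dstree operations \algfind and \algdslink.

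First I would handle the \basetree side. Line 3 of \refalg{optinsert} reuses Lines 3--14 of \refalg{baseinsert}, and Lines 4 and 6 perform an \algreroot and an \alglink. Each of \algreroot and \alglink costs $O(\depth(\cdot)) = O(h)$ by their definitions, and Lines 3--14 of \refalg{baseinsert} form precisely the body whose cost \refthm{baseinsert} already bounds by $O(h)$. Hence the total \basetree work is $O(h)$, no worse than the original algorithm.

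Next I would account for the \dstree operations. The two \algfind calls in Lines 1--2 replace the $O(h)$ root computation that \refalg{baseinsert} performs in its first two lines; each \algfind runs in amortized $O(\alpha(n))$ time because it implements the standard disjoint-set \kw{Find} with path compression, while \algdslink in Line 5 costs $O(1)$. The crucial observation is that the algorithm maintains the invariant that the \dstree root coincides with the \basetree root, and it is exactly this invariant that lets us substitute the amortized-$O(\alpha(n))$ \algfind for a depth-$h$ scan; consequently this group of operations is dominated by the $O(h)$ \basetree work rather than adding a new dominant term.

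The main obstacle I expect is the amortization mismatch: \refthm{baseinsert} states a worst-case $O(h)$ bound, whereas the \algfind calls are only amortized $O(\alpha(n))$ over a sequence of operations. The cleanest resolution is to observe that $\alpha(n) < 5$ for every realistic value of $n$, so $O(\alpha(n))$ is a small constant that is absorbed into $O(h)$; summing the two groups then yields an amortized $O(h)$ bound, which matches \refalg{baseinsert} up to this vanishing amortization. I would finish by checking both branches separately: for a non-tree edge insertion the early return inside the reused Lines 3--11 bypasses Lines 4--6 entirely, so only the root-finding and the $O(h)$ \basetree rewiring execute, whereas for a tree edge insertion the swaps in Lines 12--14 are $O(1)$ and the dominant cost is again the \algreroot and \alglink of Lines 4 and 6; in either case the running time is amortized $O(h)$.
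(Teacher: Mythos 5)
Your proposal is correct and follows the same reasoning the paper relies on: the paper gives no explicit proof beyond noting that "given the implementation details of all DS-Tree operators" the theorem follows, and your argument is exactly that justification spelled out — the \basetree work (Lines 3--14 of \refalg{baseinsert}, plus \algreroot and \alglink) is the $O(h)$ already bounded by \refthm{baseinsert}, while the added \dstree operations (\algfind at amortized $O(\alpha(n))$, \algdslink at $O(1)$) contribute only constant overhead that is absorbed into $O(h)$. Your explicit treatment of the root-consistency invariant and the amortized-versus-worst-case distinction is a welcome refinement of what the paper leaves implicit.
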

\vspace{-0.8em}

\subsection{The Final Edge Deletion Algorithm}
\label{subsec:optdelete}

Recall that in \refalg{basedelete} for edge deletion, we first \kw{unlink} the \basetree. After Line 4 of \refalg{basedelete}, $u$ is the root of the smaller tree for searching replacement edge, and $root_v$ is the root of the larger tree. 
To update the corresponding \dstree, we still consider two cases based on the existence of the replacement edge. If a replacement edge is found, the connected component does not update, and the tree root is $root_v$ in Line 23 of \refalg{basedelete}. We only reroot the \dstree to $root_v$ in this case.
If a replacement edge is not found, we need to split the original \dstree into two trees for two resulting \basetrees. To this end, our idea is to delete every vertex belonging to the subtree of $u$ in \basetree from the \dstree. Then, we \kw{Union} all deleted vertices and form the new \dstree. An immediate challenge is how to identify all vertices belonging to the subtree of $u$ in \basetree. We discuss this in the following lemma.

\begin{lemma}
\label{lem:subtree}
Given $u$ in Line 5 of \refalg{basedelete}, if no replacement edge is found (i.e., $succ$ turns false), $S$ is the set of all vertices in the subtree rooted in $u$ when the algorithm terminates.
\end{lemma}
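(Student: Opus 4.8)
The plan is to fix notation and prove a set equality by two inclusions. Let $T_u$ be the tree rooted at the vertex $u$ as it stands at Line 5 (that is, after the possible swap on Line 4), and let the remaining component be the tree rooted at $root_v$; since the \algunlink on Line 3 splits the component's spanning tree into exactly these two trees, every vertex lies in exactly one of $V(T_u)$ and $V(\text{tree of } root_v)$, and these are disjoint. Working under the hypothesis that the while loop exits by emptying $Q$ rather than returning on Line 24, I would show $V(T_u)\subseteq S$ and $S\subseteq V(T_u)$ separately. The single structural fact I would lean on throughout is that $u$ is placed in $S$ on Line 6 and never removed, so $u\in S$ holds at every instant.

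For the easy inclusion $V(T_u)\subseteq S$, I would observe that Lines 11--13 are an ordinary downward BFS on $T_u$: $u$ is pushed first, and whenever a vertex $x$ is popped, each tree child $y$ of $x$ (detected by $x=\parent(y)$) is pushed onto $Q$ and added to $S$. Because a vertex leaves $Q$ only by being popped, and because the no-replacement hypothesis forbids any early return, every pushed vertex is eventually popped and expands its children. An induction on depth then shows that all descendants of $u$, i.e.\ all of $V(T_u)$, are pushed and hence lie in $S$.

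The substantive inclusion is $S\subseteq V(T_u)$, which I would maintain as a loop invariant and contradict at its first violation. Suppose some vertex outside $V(T_u)$ enters $S$, and consider the first such addition; just before it, $S\subseteq V(T_u)$ still holds. It cannot arise from Lines 12--13, since a tree child of a vertex of $T_u$ is again in $T_u$. Hence it arises from the inner walk on Lines 16--21 while scanning a non-tree neighbor $y$ of a popped vertex $x\in S\subseteq V(T_u)$, and the decisive case split is on whether $y\in V(T_u)$. If $y\in V(T_u)$, the upward sequence $y,\parent(y),\dots$ stays inside $T_u$ until it reaches $u$; but $u\in S$, so the walk meets an $S$-vertex and breaks (Lines 17--19) before proceeding beyond $u$, so only $T_u$-vertices are added, contradicting the assumed violation. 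If instead $y\notin V(T_u)$, then $y$ lies in the tree of $root_v$, whose path from $y$ to $root_v$ is disjoint from $V(T_u)\supseteq S$; the walk therefore reaches $root_v$ without ever meeting $S$, leaves $succ$ true, and forces the return on Line 24, contradicting the no-replacement hypothesis. Both cases being impossible, the invariant never breaks, so $S\subseteq V(T_u)$ at termination.

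The crux of the argument, and the step I expect to require the most care, is exactly this case analysis of the inner walk: one must recognize that the permanent membership $u\in S$ acts as a sentinel that simultaneously (i) confines every legitimate upward walk started inside $T_u$ to $T_u$, and (ii) forces any walk that truly escapes $T_u$ to run all the way to the foreign root $root_v$ without meeting $S$---precisely the event the no-replacement hypothesis rules out. Combining the two inclusions gives $S=V(T_u)$, which is the claim.
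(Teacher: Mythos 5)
Your proposal is correct. Note, however, that the paper states \reflem{subtree} without giving any proof at all (it proceeds directly from the lemma to the deletion algorithm \refalg{optdelete}), so there is no paper argument to compare against; your write-up supplies the missing justification. Both halves of your argument are sound: the inclusion $V(T_u)\subseteq S$ follows because, under the no-replacement hypothesis, the loop never returns early, so the tree-child BFS of Lines 11--13 exhausts the subtree; and the inclusion $S\subseteq V(T_u)$ is exactly the right invariant, with the correct case analysis --- an upward walk started inside $T_u$ is stopped by the sentinel $u\in S$ (or an earlier $S$-vertex) and so adds only $T_u$-vertices, while a walk started at a neighbor $y\notin V(T_u)$ climbs a path disjoint from $S$, keeps $succ$ true, and forces the return at Line 24, contradicting the hypothesis. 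This second case is the one subtle point (vertices outside $T_u$ \emph{can} enter $S$, but only in an execution that immediately returns with a replacement edge, which the lemma's hypothesis excludes), and you handle it explicitly.
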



Based on \reflem{subtree}, we present our final deletion algorithm, called $\algoptdelete$, for \optidxname index in \refalg{optdelete}. After updating \basetrees, we first reroot the \dstree to $root_v$ in Line 2. Given $succ = \kw{true}$ in Line 3, we terminate the algorithm since the connected component does not update. Otherwise, we delete all vertices in $S$ from the \dstree in Lines 4--8. 
We first delete $u$ since $u$ will be the root of the new \dstree. Then, we iteratively delete each vertex $w$ from $S$ and link $w$ to $u$. \rr{$S$ contains all vertices in the subtree of $u$ in the ID-tree. Given that the parent of $u$ is $root_v$ (Line 1), $root_v$ is not in $S$. In Line 2 of \refalg{optdelete}, we update the root of the disjoint set tree to $root_v$. Therefore, we never \algdsdel the root of the disjoint-set tree.} From the perspective of disjoint set, we perform the \kw{Union} operation on $u$ and $w$ in Line 8. As introduced in \refsubsec{optinsert}, $\algdslink(w,u)$ directly assigns $w$ as a child of $u$ since the subtree size of $u$ is guaranteed larger than that of $w$ in \dstree and both of them are roots before $\kw{Union}$.

\begin{theorem}
\label{thm:finaldelete}
\rr{\refalg{optdelete} has the same expected running time complexity as \refalg{basedelete}.}
\end{theorem}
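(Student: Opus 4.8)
The plan is to decompose the running time of \refalg{optdelete} into the cost of the single call to \algbasedelete in Line~1 and the cost of the additional \dstree maintenance in Lines~2--8, and then to argue that the latter never asymptotically dominates the former. Since \refthm{idelete} already establishes that \algbasedelete runs in expected $O(h)$ time, it suffices to show that the extra work is bounded by a constant multiple of the work already performed by \algbasedelete, and ideally pointwise rather than merely in expectation, so that the expectation carries over directly.

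First I would dispose of Line~2. The \algdsreroot operation runs in $O(1)$ time, as argued in the discussion of RerootDS, because it merely swaps the two \dsnode objects for $root_v$ and the current \dstree root and updates their ids; this contributes only a constant additive term and can be ignored.

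Next I would bound Lines~4--8, which are executed only when $succ$ is false. By \reflem{subtree}, $S$ is exactly the vertex set of the subtree rooted at $u$ in the \basetree, so $|S| = \stsize(u)$. The loop performs one \algdsdel and one \algdslink per vertex of $S$; by \reflem{dschildren} each \algdsdel is amortized $O(1)$ and each \algdslink is $O(1)$, so the whole block costs amortized $O(\stsize(u))$. Here I would also recall the guarantee, secured by keeping the two roots consistent in Line~2, that $root_v \notin S$, so that \algdsdel is never invoked on the \dstree root; this is precisely the condition under which the amortized bound of \reflem{dschildren} applies.

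The key step is to charge this $O(\stsize(u))$ term against the cost already paid inside \algbasedelete. Lines~4--8 run precisely in the case $succ = \textbf{false}$, i.e.\ when no replacement edge exists; but in that case the while-loop of \refalg{basedelete} runs to completion and, via the tree-edge branch, pushes and pops every vertex of the subtree of $u$ onto $Q$, scanning each such vertex's neighbors. Hence \algbasedelete has already spent $\Omega(\stsize(u))$ time traversing tree edges alone, so $\stsize(u)$ is dominated pointwise by the running time of \algbasedelete whenever the extra block is entered; in the complementary case ($succ = \textbf{true}$) the block is skipped entirely. Combining the three parts, the total cost of \refalg{optdelete} is $O(1)$ plus a constant multiple of the cost of \algbasedelete, so it inherits the expected $O(h)$ bound of \refthm{idelete}. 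The main obstacle I anticipate is justifying that the amortized $O(1)$ accounting of \reflem{dschildren} remains valid when the \algdsdel calls are interleaved across many edge deletions and with the path compression performed inside \algfind; this requires viewing the amortization over the entire lifetime of the \dstree rather than within a single deletion, and verifying that never deleting the root keeps the per-vertex child count amortized constant.
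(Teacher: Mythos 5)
Your proof is correct and takes essentially the same approach as the paper's: both bound the added \dstree maintenance (Lines 2--8 of \refalg{optdelete}) by $O(|S|)$ using the constant-time \dstree operators, and charge this against the $\Omega(|S|)$ work that \algbasedelete already performs traversing the subtree of $u$ when no replacement edge is found. Your version is in fact slightly more careful, since you state the charging direction explicitly (the paper only says \algbasedelete is ``also bounded by $|S|$'') and you correctly flag that the amortized $O(1)$ bound for \algdsdel must be interpreted over the lifetime of the structure, relying on the guarantee that the \dstree root is never deleted.
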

\vspace{-1em}
\begin{proof}
On the basis of \refalg{basedelete}, \refalg{optdelete} also needs to traverse all the vertices in $S$ to decompose the DS-tree. Since all DS-Tree operations are $O(1)$, the added operation is bounded by $|S|$. \refalg{basedelete} is also bounded by $|S|$, so the complexity of the two deletion operations is still the same.
\end{proof}


\section{2-Edge Connectivity Maintenance}
\label{sec:2ecc}

We extend our connectivity maintenance algorithms to maintain 2-edge connectivity between vertices. Two vertices are 2-edge connected if there exist at least two edge-disjoint paths between them, meaning that the removal of any single edge would not disconnect the two vertices. Maintaining 2-edge connectivity is important because it captures the robustness of a network against edge failures and is widely used in applications such as network reliability analysis \cite{krekovic2025reducing}, biological analysis \cite{chaudhuri2022network}. 
A 2-edge connected component in a graph is a maximal subgraph in which any two vertices are 2-edge connected. Two vertices are 2-edge connected if and only if they are in the same 2-edge connected component.
In this section, we propose a spanning-tree-based index to maintain 2-edge connected components. Section 8 introduces several basic operations for index maintenance. Section 9 presents a method that uses the Disjoint-Set Structure to further improve query efficiency to constant theoretical time.

\subsection{Maintaining 2-Edge Connected Components based on Spanning Trees}
\label{subsec:motivation}

In this section, we propose an index based on spanning trees for maintaining 2-edge connected components. 
We first study the relationship between connected component and 2-edge connected component.

\begin{lemma}
\label{lem:cc2cc}
Given a connected component $C$, a spanning tree $T$ of $C$, each 2-edge connected component of $C$ corresponds to a connected substructure of $T$.
\end{lemma}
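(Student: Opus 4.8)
The plan is to characterize 2-edge connectivity entirely in terms of the spanning tree $T$ and the bridges of $C$, and then deduce that the vertex set of each 2-edge connected component induces a connected subtree of $T$. Throughout, for vertices $u,v$ I write $P_T(u,v)$ for the unique tree path between them in $T$.

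First I would establish the basic fact that every bridge of $C$ is a tree edge of $T$. Indeed, if $e=(x,y)$ is a non-tree edge, then $e$ together with $P_T(x,y)$ forms a cycle, so deleting $e$ leaves the spanning tree $T$ intact and hence keeps $C$ connected; thus $e$ cannot be a bridge. Contrapositively, all bridges lie in $T$. The heart of the argument is then the following characterization: two vertices $u,v$ are 2-edge connected if and only if $P_T(u,v)$ contains no bridge. For the forward direction I argue by contraposition: if some bridge $e$ lies on $P_T(u,v)$, then deleting $e$ disconnects $C$ and separates $u$ from $v$ (they lie in the two different tree components obtained by removing $e$), so every path between $u$ and $v$ must use $e$ and no two edge-disjoint paths can exist. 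For the reverse direction, I show that deleting any single edge $e'$ keeps $u$ and $v$ connected: if $e'$ is not a bridge this is immediate, and if $e'$ is a bridge then by assumption it is a tree edge off $P_T(u,v)$, so $u$ and $v$ lie on the same side of the cut it induces and remain connected in $C-e'$. By the edge form of Menger's theorem this gives two edge-disjoint paths between $u$ and $v$, i.e.\ 2-edge connectivity.

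Finally I would use the characterization to finish quickly. Let $W$ be the vertex set of a 2-edge connected component, that is, an equivalence class of the 2-edge-connected relation. Take any $u,v \in W$ and any vertex $w$ lying on $P_T(u,v)$. Since $P_T(u,w)$ is a sub-path of $P_T(u,v)$, it inherits the property of containing no bridge, so by the characterization $u$ and $w$ are 2-edge connected and hence $w \in W$. Therefore the entire tree path between any two vertices of $W$ is contained in $W$, which is precisely the statement that the restriction of $T$ to $W$ is connected, i.e.\ a connected substructure (subtree) of $T$.

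The step I expect to be the main obstacle is the reverse direction of the characterization, where I must rule out disconnection of $u$ and $v$ under the deletion of \emph{any} single edge; the delicate case is when the deleted edge is itself a bridge lying off $P_T(u,v)$, together with cleanly invoking (or re-deriving, in elementary terms) the edge version of Menger's theorem. Everything after the characterization reduces to the short observation that a sub-path of a bridge-free tree path is itself bridge-free, so that argument should be routine.
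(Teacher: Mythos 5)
Your proof is correct, and it is genuinely more complete than the one in the paper, though both start from the same seed observation. The paper's proof consists essentially of three sentences: bridges disconnect, 2-edge connected subgraphs contain no bridges, every bridge is a tree edge, hence each 2-edge connected component is connected in $T$ --- the final inference is asserted rather than argued, leaving implicit exactly the step you identify as the heart of the matter. You fill that gap with the characterization that $u,v$ are 2-edge connected if and only if the tree path $P_T(u,v)$ contains no bridge (proved in both directions, with the delicate case of deleting a bridge off the path handled correctly), and then close with the sub-path argument showing that a 2-edge connected component is tree-path-closed, hence induces a connected subtree. Two remarks on what each approach buys. The paper's terseness is defensible because the missing step is a standard fact (the 2-edge connected components are the connected components left after deleting all bridges, and the bridges are tree edges), but as written it is not a self-contained proof. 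Your characterization, by contrast, is self-contained and also gives you transitivity of the 2-edge-connected relation for free (since a union of two bridge-free tree paths covers the tree path between the endpoints), whereas the paper imports transitivity separately as Lemma~\ref{lem:2ecctrans} from Holm et al.; your lemma is also closely related to the paper's Lemma~\ref{lem:rep}, which recasts ``no bridge on the tree edge'' as ``the tree edge has a replacement edge.'' One small economy available to you: since the paper \emph{defines} 2-edge connectivity of a vertex pair as equivalent to surviving any single edge deletion, you may cite that definition directly instead of invoking the edge version of Menger's theorem, removing the one external ingredient in your argument.
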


\begin{proof}
    A bridge is an edge whose removal disconnects the graph, and a subgraph is 2-edge connected if it remains connected after the removal of any single edge. A 2-edge connected graph has no bridge. Since every bridge must appear as a tree edge in the spanning tree, every 2-edge connected component in the spanning tree must be connected. 
\end{proof}

\begin{example}
    A possible spanning tree and non-tree edges of
Figure 1 are presented in Figure 2. Solid lines indicate tree edges in the spanning tree and dashed lines indicate non-tree edges that
are not in the spanning tree. We can see that for the three 2-edge connected components in Figure 1, each 2-edge connected component can correspond to a connected substructure in the spanning tree.
\end{example}



Based on \reflem{cc2cc}, we can propose a spanning-tree-based index to maintain 2-edge connectivity. If two vertices $u$ and $v$ in a spanning tree are 2-edge connected, all adjacent vertices along the tree path between $u$ and $v$ are 2-edge connected. In particular, connected components can be maintained by spanning trees, and 2-edge connected components can be maintained by maintaining certain auxiliary structures in spanning trees. The structures are motivated by the following lemmas.


\begin{lemma}
\label{lem:rep}
Given a connected component $C$, a spanning
tree $T$ of $C$ and a tree edge $(u,v)$ in $T$, $u$ and $v$ are in the same 2-edge connected component if and only if there exists at least one replacement edge for $(u,v)$.
\end{lemma}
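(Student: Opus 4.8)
The plan is to reduce the 2-edge-connectivity condition to the absence of a bridge, and then translate that into the combinatorial statement about replacement edges. First I would fix notation: deleting the tree edge $(u,v)$ splits $T$ into two subtrees, $T_u$ containing $u$ and $T_v$ containing $v$, whose vertex sets partition $V(C)$; a \emph{replacement edge} is by definition a non-tree edge with one endpoint in $T_u$ and the other in $T_v$. The key reduction is the observation that, because $u$ and $v$ are joined by the tree edge $(u,v)$, they lie in the same 2-edge connected component if and only if there is a second $u$--$v$ path that is edge-disjoint from $(u,v)$, i.e. if and only if $(u,v)$ is not a bridge of $G$.

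For the $(\Leftarrow)$ direction I would start from a replacement edge $(x,y)$ with $x \in T_u$ and $y \in T_v$ and build an explicit alternate path: take the tree path from $u$ to $x$ inside $T_u$, cross $(x,y)$, and follow the tree path from $y$ to $v$ inside $T_v$. This path avoids the edge $(u,v)$ entirely, so together with the single edge $(u,v)$ it yields two edge-disjoint $u$--$v$ paths, proving that $u$ and $v$ are 2-edge connected.

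For the $(\Rightarrow)$ direction I would reason about the structure of $G-(u,v)$. If $u$ and $v$ are 2-edge connected, then at least two edge-disjoint paths join them, so deleting the single edge $(u,v)$ cannot separate them; hence there is a $u$--$v$ path $P$ in $G$ that does not use $(u,v)$. Since $u \in T_u$, $v \in T_v$, and these sets partition $V(C)$, the path $P$ must contain an edge with one endpoint in $T_u$ and the other in $T_v$. That crossing edge is not $(u,v)$ (which $P$ avoids), and the only \emph{tree} edge running between $T_u$ and $T_v$ is $(u,v)$ itself; therefore the crossing edge is a non-tree edge, i.e. a replacement edge for $(u,v)$.

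The main obstacle, though a mild one, is pinning down that this crossing edge is necessarily a non-tree edge: this rests on the fact that $(u,v)$ is the \emph{unique} tree edge between $T_u$ and $T_v$, which is immediate from the definition of the split but must be stated explicitly. The remaining work is routine bookkeeping: checking edge-disjointness in the forward construction and confirming that $T_u$ and $T_v$ partition the vertices of $C$. I also note that the underlying equivalence ``2-edge connected $\Leftrightarrow$ not a bridge'' is consistent with \reflem{cc2cc}, which already records that every bridge occurs as a tree edge, so no non-tree edge can ever serve as a bridge.
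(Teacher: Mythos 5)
Your proposal is correct and follows essentially the same route as the paper's proof: the forward direction extracts a crossing non-tree edge from a $u$--$v$ path avoiding $(u,v)$, and the reverse direction builds an alternate $u$--$v$ path through the replacement edge. You simply make explicit the details the paper leaves implicit (the $T_u$/$T_v$ partition, the uniqueness of $(u,v)$ as the tree edge between the two sides, and the reduction of 2-edge connectivity of adjacent vertices to $(u,v)$ not being a bridge), which strengthens rather than changes the argument.
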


\begin{proof}
    If a tree edge $(u,v)$ has a replacement edge $(x,y)$, there must exist a path containing $(x,y)$ that connects $u$ and $v$. If there exists a path connecting $u$ and $v$ that does not contain $(u,v)$, this path must contain a non-tree edge that can serve as a replacement for $(u,v)$.
\end{proof}


\begin{lemma}
\label{lem:2ecctrans}
Given three verices $u,v$ and $w$, if $u$ and $v$ are 2-edge connected, and $v$ and $w$ are also 2-edge connected, $u,v$ and $w$ are in the same 2-edge connected component \cite{DBLP:journals/jacm/HolmLT01}. 
\end{lemma}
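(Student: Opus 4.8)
The plan is to establish that 2-edge connectivity is an equivalence relation on the vertices of a connected component; transitivity is precisely the nontrivial direction, and once it holds the three given vertices must lie in a common equivalence class, which is exactly the vertex set of a 2-edge connected component. The starting point is the characterization stated at the beginning of \refsec{2ecc}: two vertices $x$ and $y$ are 2-edge connected if and only if, for \emph{every} single edge $e \in E$, $x$ and $y$ remain connected in $G \setminus \{e\}$. I would work with this edge-removal form rather than the edge-disjoint-paths form, since it makes the transitivity argument almost immediate and sidesteps any explicit appeal to Menger's theorem.

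First I would fix an arbitrary edge $e \in E$ and reason inside the static graph $G \setminus \{e\}$. Because $u$ and $v$ are 2-edge connected, the characterization yields a path between $u$ and $v$ in $G \setminus \{e\}$; likewise, because $v$ and $w$ are 2-edge connected, there is a path between $v$ and $w$ in $G \setminus \{e\}$. Concatenating these two paths at $v$ gives a walk, and hence a path, joining $u$ and $w$ in $G \setminus \{e\}$, so $u$ and $w$ lie in the same connected component of $G \setminus \{e\}$. This step is routine: it is nothing more than the transitivity of ordinary connectivity within a fixed graph.

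Since the edge $e$ was arbitrary, $u$ and $w$ stay connected after deleting \emph{any} single edge of $G$, and the characterization then certifies that $u$ and $w$ are themselves 2-edge connected. Hence $u$, $v$, and $w$ are pairwise 2-edge connected, so they belong to a single equivalence class of the relation, i.e., to one maximal subgraph in which every pair of vertices is 2-edge connected. By definition this is a single 2-edge connected component, which is exactly the claim.

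The only genuinely delicate point is the equivalence between the two definitions of 2-edge connectivity (two edge-disjoint paths versus survival of any single-edge deletion), which is the $k=1$ case of the edge version of Menger's theorem; the paper already adopts this equivalence in its defining sentence, so I would invoke it directly rather than reprove it. With that in hand, the entire argument collapses to transitivity of plain connectivity in one fixed graph, which is the easy part, and the main conceptual content is simply recognizing that 2-edge connectivity is an equivalence relation whose classes are the 2-edge connected components.
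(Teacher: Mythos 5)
Your proof is correct, but it is worth noting that the paper itself does not prove this lemma at all: it is stated with a citation to Holm et al.\ \cite{DBLP:journals/jacm/HolmLT01} and imported as a known fact. Your argument is therefore a genuine addition rather than a variant of the paper's reasoning. What you do differently is to work from the single-edge-removal characterization (two vertices are 2-edge connected iff they remain connected in $G \setminus \{e\}$ for every edge $e$), which the paper's own definition already asserts is equivalent to the edge-disjoint-paths formulation, so you are entitled to use it without reproving Menger's theorem. From there, transitivity collapses to transitivity of ordinary connectivity in the fixed graph $G \setminus \{e\}$, quantified over all $e$ --- a clean, fully elementary two-step argument. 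The final step, passing from pairwise 2-edge connectivity of $u,v,w$ to membership in a single 2-edge connected component, is also consistent with the paper's conventions, since the paper explicitly defines components so that two vertices are 2-edge connected iff they lie in the same component (and your transitivity result is precisely what makes that relation an equivalence relation with well-defined classes). What your approach buys is self-containedness: the lemma no longer rests on an external reference. What the citation buys the paper is brevity and deference to the standard source; there is no mathematical content in the paper's treatment to compare against beyond that.
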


Based on \reflem{rep}, if we know the number of replacement edges of each tree edge, we can identify the 2-edge connectivity between the two vertices of the tree edge. 
Based on \reflem{2ecctrans}, we can deduce the 2-edge connectivity between any two vertices if we can maintain the 2-edge connectivity between the two vertices of every tree edge.
Below, we formally present our first index for 2-edge connectivity, named \basetidxname.
Compared to \baseidxname, \basetidxname maintains an additional attribute for each vertex $u$ as follows.

\begin{algorithm}[t!]
\caption{Basic Operations for \basetidxname}
\label{alg:2etree}

\proc{$\alglink(u,v)$}{
    $u.\parent \gets v$\;
    $u.rep \gets 0$\;

    $v.\stsize \gets v.\stsize + u.\stsize$\;
    
}

\proc{$\algcut(u,v)$}{
    $u.\parent \gets \kwnull$\;
    \While{$v \neq \kwnull$}{
        $v.\stsize \gets v.\stsize - u.\stsize$\;
        $v \gets v.\parent$\;
    }
}

\proc{$\algreroot(u)$}{
    \lIf{$u.\parent = \kwnull$}{\Return}
    $v \gets u.\parent$\;
    $\algreroot(v)$\;

    $v.\parent \gets u$,
    $u.\parent \gets \kwnull$\;
    $swap(u.rep,v.rep)$\;

    $v.\stsize \gets v.\stsize - u.\stsize$\;
    $u.\stsize \gets v.\stsize + u.\stsize$\;
    
}

\end{algorithm}

\begin{algorithm}[t!]
\caption{\algtquery}
\label{alg:2equery}
\KwIn{two query vertices $u$ and $v$}
\KwOut{the 2-edge connectivity between $u$ and $v$}

    $root_u \leftarrow GetC^{2}root(u)$\;
    $root_v \leftarrow GetC^{2}root(v)$\;
    \Return $root_u = root_v$\;
\proc{$GetC^{2}root(u)$}{

    \While{$u.rep \neq 0$}{
        $u \gets u.\parent$\;
        
    }   
    \Return $u$\;
    }

\end{algorithm}

\begin{itemize}
\item $rep(u)$: the number of replacement edges of the tree edge $(u,u.parent)$. If $u$ is the root, $u.rep=0$.

\end{itemize}

Given the additional label $rep$ stored for each vertex compared with \baseidxname, we present several basic tree operators for \basetidxname in \refalg{2etree}. 

\stitle{Link and Cut-Bridge.} Link connects the roots of two trees and updates the subtree size accordingly. The vertex $v$ is the root of the merged tree. We set $u.rep = 0$ because there is no replacement edge for the new tree edge $(u, v)$. Cut-Bridge deletes a bridge edge $(u, v)$ from the tree where $v$ is the parent of $u$. $u$ will be the root of one tree. The time complexity of Link and Cut-Bridge is $O(1)$.

\stitle{ReRoot.} ReRoot rotates the tree to make $u$ as the tree root. During rotation, we also correspondingly update $\stsize$ and $rep$ (Lines 15--17). The time complexity of ReRoot is $O(h)$.

\subsection{Query Processing Based on \basetidxname}
\label{subsec:2ID-tree query}

Given two query vertices, our idea for 2-edge connectivity testing is to identify if they are in the same \basetidxname. To this end, we define the concept of C$^{2}$root as follows.

\begin{definition}
\label{def:2IDroot}
    Given a connected component $C$, a spanning
tree $T$ of $C$, and a 2-edge connected component $TC$, the C$^{2}$root $r$ is the vertex with the lowest depth in the \basetidxname of $TC$. 
\end{definition}

For each vertex $u$ in $TC$, we can also call $r$ in \refdef{2IDroot} as the C$^{2}$root of $u$. Given two query vertices, they have the same C$^{2}$root if and only if they are in the same 2-edge connected component. 


\begin{lemma}
\label{lem:2IDroot}
    Given a connected component $C$, a spanning
tree $T$ of $C$, and a 2-edge connected component $TC$, if $r$ is the C$^{2}$root of $TC$, either $r$ is the root of $T$, or the tree edge $(r, r.\parent)$ has no replacement edge.
\end{lemma}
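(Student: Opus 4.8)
The plan is to proceed by a case split on whether $r$ is the root of $T$, and to reduce the nontrivial case to \reflem{rep}, which characterizes the existence of a replacement edge for a tree edge by the 2-edge connectivity of its endpoints. If $r$ is the root of $T$, the statement holds immediately, so throughout I assume $r$ is not the root and aim to show that the tree edge $(r, r.\parent)$ has no replacement edge.

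The key structural fact I would establish first is that $r$, the vertex of minimum depth in $TC$, is well defined and that \emph{every} vertex of $TC$ has depth at least $\depth(r)$. This rests on \reflem{cc2cc}, by which the 2-edge connected component $TC$ corresponds to a connected substructure of the spanning tree $T$. A connected subtree is closed under the tree paths joining its members: for any two $a,b \in TC$, the unique $T$-path between them rises to their lowest common ancestor and descends again, so that ancestor also lies in $TC$. In particular, if two distinct vertices of $TC$ both attained the minimum depth, neither would be an ancestor of the other, their common ancestor would lie in $TC$ with strictly smaller depth, and minimality would be contradicted. Hence the minimum-depth vertex is unique, so \refdef{2IDroot} is unambiguous and $r$ is genuinely the topmost vertex of $TC$.

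With this in hand the conclusion follows directly. Since $r$ is not the root, it has a parent $r.\parent$ with $\depth(r.\parent) = \depth(r) - 1 < \depth(r)$. Because no vertex of $TC$ has depth smaller than $\depth(r)$, the parent $r.\parent$ cannot belong to $TC$; thus $r$ and $r.\parent$ lie in different 2-edge connected components. By \reflem{rep}, the tree edge $(r, r.\parent)$ therefore has no replacement edge, which is exactly what the lemma asserts in this case.

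The main obstacle is the structural claim of the second paragraph rather than any calculation: one must argue carefully that the minimum-depth vertex of the connected substructure corresponding to $TC$ is unique and sits above all of $TC$, so that ``the vertex with the lowest depth'' is well posed and its parent provably escapes $TC$. Once that closure-under-tree-paths property is settled, the two invocations of \reflem{cc2cc} and \reflem{rep} finish the proof with no further work.
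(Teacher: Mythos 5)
Your proof is correct and takes essentially the same route as the paper's: both hinge on \reflem{rep} combined with the depth-minimality of $r$ (the paper simply argues the contrapositive, namely that a replacement edge for $(r, r.\parent)$ would place $r.\parent$ in $TC$ at strictly smaller depth, contradicting the definition of the C$^{2}$root). Your second paragraph establishing uniqueness of the minimum-depth vertex via \reflem{cc2cc} is sound but not needed for this lemma, since minimality alone already forces $r.\parent \notin TC$, even if ties were possible.
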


\begin{proof}
    Based on \refdef{2IDroot}, the root of the spanning tree can be an C$^{2}$root. If $r$ is not the root of the spanning tree and the tree edge $(r, r.\parent)$ has a replacement edge, $r.\parent$ can be the vertex with the lower depth in $TC$. Therefore, $r$ is a C$^{2}$root if $(r, r.\parent)$ has no replacement edge.
\end{proof}

Based on \reflem{2IDroot}, we present our query algorithm in \refalg{2equery}. We determine whether two query vertices are in the same 2-edge connected component by comparing their C$^{2}$roots (Lines 1--3). To get the C$^{2}$root of a vertex $u$, we need to traverse the ancestors of $u$ in descending order of depth, where the C$^{2}$root of $u$ is the first vertex with $rep=0$ (Lines 4--7). The time complexity of the query algorithm is $O(h)$.

\section{Update Processing Based on \basetidxname}

The updating procedure for the spanning tree structure in ID$^{2}$Tree is the same as \baseidxname. Therefore, we mainly focus on the maintenance of the replacement edge count. 

\begin{lemma}
\label{lem:nontree}
Given the spanning tree $T$ for a graph $G$, let $T'$ be the spanning tree for the graph $G'$ by inserting or deleting an edge from $G$. The replacement edge of every tree edge in $T$ does not change if the non-tree edges do not change, i.e., $G \setminus T = G' \setminus T'$.
\end{lemma}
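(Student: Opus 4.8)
The plan is to reduce the statement to a single claim about tree paths. Recall that in a spanning forest, a non-tree edge $(x,y)$ is a replacement edge for a tree edge $f$ precisely when removing $f$ separates $x$ from $y$, i.e.\ when $f$ lies on the unique tree path between $x$ and $y$. Writing $N := G \setminus T = G' \setminus T'$ for the common set of non-tree edges, we have $T = G \setminus N$ and $T' = G' \setminus N$, and the replacement edges of a tree edge $f$ in $T$ (resp.\ $T'$) are exactly those $(x,y) \in N$ whose $T$-path (resp.\ $T'$-path) passes through $f$. Hence it suffices to prove the following: \emph{for every non-tree edge $(x,y) \in N$, the tree path between $x$ and $y$ is identical in $T$ and in $T'$.} Once this holds, membership of any tree edge $f$ common to $T$ and $T'$ on such a path is the same in both, so the replacement-edge sets coincide.

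First I would pin down how $T$ and $T'$ relate. Since $G$ and $G'$ differ by exactly one edge $e$, a short set computation gives $T \,\triangle\, T' = (G \,\triangle\, G') \setminus N = \{e\} \setminus N$. I would rule out $e \in N$: if $e$ is inserted then $e \notin G \supseteq N$, and if $e$ is deleted then $e \in N$ would force $e \in G' = G \setminus \{e\}$, which is impossible. Therefore $T \,\triangle\, T' = \{e\}$, and $e$ is a tree edge of whichever tree contains it. This leaves exactly two cases: a \emph{tree-edge insertion} with $T' = T \cup \{e\}$, or a \emph{tree-edge deletion} with $T' = T \setminus \{e\}$.

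The core of the argument is to show, in each case, that no non-tree edge's tree path uses $e$; path preservation then follows by uniqueness of paths in a tree. For the insertion case, $e$ must join two distinct components of the forest $T$ (otherwise $T'$ would contain a cycle); every $(x,y) \in N \subseteq G$ has both endpoints in one component of $T$, so its $T$-path stays inside that component, survives in $T' \supseteq T$, and is the unique $T'$-path. For the deletion case, I would argue that $e$ is necessarily a bridge with no replacement edge: removing $e$ splits its component of $T$ into two vertex sets $P$ and $Q$, and since $T' = T \setminus \{e\}$ is a valid spanning forest of $G' = G \setminus \{e\}$, no edge of $G'$—in particular no edge of $N$—can join $P$ and $Q$ (otherwise $P$ and $Q$ would lie in a single component of $G'$ yet be disconnected in $T'$). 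Consequently every $(x,y) \in N$ has $x,y$ on the same side, so its $T$-path avoids $e$ and coincides with its $T'$-path.

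The step I expect to be the main obstacle is this deletion case, which is exactly where the hypothesis $G \setminus T = G' \setminus T'$ does the real work by forcing the deleted tree edge to have been a bridge; making the ``no edge of $G'$ joins $P$ and $Q$'' claim rigorous via the definition of a spanning forest of $G'$ is the delicate point. With path preservation established, I would conclude by observing that for every tree edge $f$ common to $T$ and $T'$, the set of non-tree edges whose path contains $f$ is unchanged, so its replacement-edge set—and hence the stored $rep$ label—is preserved; the single edge $e$ is precisely the updated edge and is handled separately (it is absent from $T$ in the insertion case, and is a removed bridge with no replacement edge in the deletion case).
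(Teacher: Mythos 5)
Your proposal is correct and follows essentially the same route as the paper's proof: it reduces to the same two cases (a tree-edge insertion joining two components, and a bridge deletion with no replacement edge) and rests on the same key claim that the tree path between the endpoints of every non-tree edge is unchanged. The only difference is rigor — you derive the case analysis from the hypothesis $G \setminus T = G' \setminus T'$ via the symmetric-difference computation and justify path preservation explicitly, where the paper merely asserts both.
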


\begin{proof}
$G \setminus T = G' \setminus T'$ happens in two cases. The first case is that a new edge is inserted, and the edge connects two existing trees. The second case is that a tree edge is deleted, and no existing edge can reconnect the trees. In both cases, the path in the tree between two terminals of any non-tree edge does not change. Therefore, the replacement edge does not update.
\end{proof}

\begin{algorithm}[t!]
\caption{\algteinsert}
\label{alg:2einsert}
\KwIn{a new edge $(u,v)$ and the \basetidxname index}
\KwOut{the updated \basetidxname}

$root_u \leftarrow$ compute the root of $u$\;
$root_v \leftarrow$ compute the root of $v$\;

\tcc{non-tree edge insertion}
\If{$root_u = root_v$}{
$fu \gets u, fv \gets v$\;
    \While{$fu \neq fv$}{
    \uIf{$fu.\stsize < fv.\stsize$}{
    $fu.rep\gets fu.rep+1$\;
    $fu \gets fu.parent$\;
    }
    \Else{
    $fv.rep\gets fv.rep+1$\;
    $fv \gets fv.parent$\;
    }
    
}

}
\tcc{tree edge insertion}
Lines 12--15 of \refalg{baseinsert}\;

\end{algorithm}

\begin{algorithm}[t!]
\caption{\algtdeletent}
\label{alg:2edeletent}
\KwIn{a non-tree edge $(u,v)$ and the \basetidxname index}
\KwOut{the updated \basetidxname}

    \While{$u \neq v$}{
    \uIf{$u.\stsize < v.\stsize$}{
    $u.rep\gets u.rep-1$\;
    $u \gets u.parent$\;
    }
    \Else{
    $v.rep\gets v.rep-1$\;
    $v \gets v.parent$\;
    }
    
}

\end{algorithm}

\reflem{nontree} shows that updates of replacement edges are mainly caused by the update of non-tree edges, which includes the following two cases.


\begin{itemize}

\item\stitle{Insert or delete a non-tree edge.} Inserting a non-tree edge $(u,v)$ will add a replacement edge for all the edges in the tree path between $u$ and $v$. Deleting a non-tree edge $(u,v)$ will remove a replacement edge for all the edges in the tree path between $u$ and $v$. Edges outside path between $u$ and $v$ in the spanning tree are not affected.

\item\stitle{Delete a tree edge with replacement edges.} Cutting the tree edge $(u,v)$ causes the spanning tree to be split into two trees. We need to reconnect the two trees by changing one of the replacement edges of $(u,v)$ from a non-tree edge to a tree edge.

\end{itemize}

\subsection{ID$^{2}$Tree Maintenance for Insertion}

In this section, we propose edge insertion algorithm for maintaining the ID$^{2}$Tree in \refalg{2einsert}. 
To insert a new tree edge $(u,v)$, it will connect two spanning trees, and no replacement edge of any tree edge would change. Therefore, tree edge insertion of \basetidxname is the same as \baseidxname (Line 12 in \refalg{2einsert}). 

We next discuss the insertion of non-tree edges. The new non-tree edge $(u,v)$ can be a replacement edge for all edges in the tree path between $u$ and $v$ in the tree. The $rep$ of every edge in the path increases by 1 (Lines 3--11 in \refalg{2einsert}).

\begin{lemma}
    The time complexity of \refalg{2einsert} is $O(h)$.
\end{lemma}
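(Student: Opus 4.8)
The plan is to bound the running time of \algteinsert by analyzing its two branches separately, showing each costs $O(h)$. First I would handle the preamble: computing $root_u$ and $root_v$ in Lines 1--2 requires tracing each query vertex up to the root of its spanning tree, which takes $O(h)$ time since $h$ is the average depth.

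For the tree edge insertion case, Line 12 invokes Lines 12--15 of \refalg{baseinsert}, whose cost is already established as $O(h)$ by \refthm{baseinsert}. So this branch inherits the bound immediately, and no additional replacement-edge bookkeeping is needed here (as argued via \reflem{nontree}, inserting a tree edge changes no replacement counts).

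For the non-tree edge insertion case (Lines 3--11), the key observation is that the while-loop walks simultaneously from $u$ and $v$ up their respective root-paths, at each step advancing whichever pointer currently sits on the vertex with the smaller subtree size, until the two pointers meet at the lowest common ancestor (LCA) of $u$ and $v$. Each iteration performs only $O(1)$ work (an increment of $rep$ and a single parent step). Hence the loop cost is proportional to the total number of tree edges traversed, which equals the length of the tree path between $u$ and $v$. Since this path is bounded by $\depth(u) + \depth(v)$, it is $O(h)$ on average.

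The main obstacle I anticipate is justifying that the interleaved ascent strategy (advancing the smaller-subtree side) correctly terminates exactly at the LCA and traverses precisely the edges on the $u$--$v$ tree path, rather than overshooting or missing edges. The crucial fact is that along any root-to-leaf chain the subtree sizes are strictly increasing toward the root, so the pointer on the deeper/smaller side is always the one that must advance to eventually align depths and converge; once both pointers reach the LCA they coincide and the loop halts. Establishing this convergence rigorously (and confirming each path edge has its $rep$ incremented exactly once) is the delicate part, but once accepted the per-iteration $O(1)$ cost and the $O(h)$ path-length bound combine to give the claimed $O(h)$ total, matching the dominant cost of the preamble and the tree-edge branch.
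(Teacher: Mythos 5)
Your proof is correct and takes essentially the same route as the paper: the paper's proof is a one-line assertion that the additional maintenance of $rep$ does not change the $O(h)$ bound inherited from \refalg{baseinsert}, which is precisely what your case analysis (preamble, tree-edge branch, and the $O(1)$-per-step walk along the $u$--$v$ tree path) establishes in detail. Your worry about LCA convergence is resolved exactly as you sketch --- a pointer sitting at the LCA is never advanced because the other pointer, being a strict descendant, has strictly smaller subtree size --- so your argument just fills in what the paper leaves implicit.
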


\begin{proof}
    The additional maintenance of $rep$ compared to the \baseidxname would not change the complexity.
\end{proof}

It is worth noting that our non-tree edge insertion algorithm does not apply the BFS heuristic shown in lines 3--11 of \refalg{baseinsert}. The BFS heuristic can reduce the average search depth but does not increase the time complexity of insertion operations. However, during the maintenance of a \basetidxname, this involves the deletion of a tree edge. The number of replacement edges for more edges can be affected, and maintaining these replacement numbers is time-consuming. The application of the BFS heuristic will increase the complexity of the insertion algorithm. We will discuss this in detail in our edge deletion algorithm.

\begin{algorithm}[t!]
\caption{\algtdeletet}
\label{alg:2edeletet}
\KwIn{a tree edge $(u,v)$ and the \basetidxname index}
\KwOut{the updated \basetidxname}

\lIf{$v.parent = u$}{$\kw{swap}(u,v)$}
$ru\gets u$,$rv\gets v$\;
\While{$rv.parent \neq \kwnull$}{
    $rv \gets rv.parent$\;
}
\If{$ru.\stsize > rv.\stsize - ru.\stsize$}{$\algreroot(ru),\kw{swap}(ru,rv)$\;}
$P \gets Getrep(ru,rv)$\;

\ForEach{$(x,y) \in P$}{
        $\algtdeletent(x,y)$\;
        }
$\algcut(u,v)$\;

\ForEach{$(x,y) \in P$}{
        $\algteinsert(x,y)$\;
        }

\proc{$Getrep(u,v)$}{
    $P \gets$ an empty edge set\;
    $Q \gets$ an empty queue\;
    $S \gets$ an empty set\;
    
        $Q.push(u)$,
        $S.insert(u)$\;
        \While{$Q \neq \emptyset$}{
        $x \gets Q.top()$,
        $Q.pop()$\;
        \ForEach{$y \in \nbr(x)$}{
        \If{$y.parent = x \land y \neq v$}{
        $Q.push(y)$,
        $S.insert(y)$\;
        
        }
        \Else{
        $P.insert(x,y)$\;
        }

        }
        }
        \ForEach{$(x,y) \in P$}{
        \If{$y \in S$}{
            $P.erase((x,y))$\;
        }
        }

    \Return $P$\;
    }

\end{algorithm}

\subsection{ID$^{2}$Tree Maintenance for Deletion}

In this section, we propose the algorithm of
maintaining the ID$^{2}$Tree for edge deletion. We will
discuss the deletion of non-tree edges and tree edges separately.

\stitle{Non-tree edge deletion.} The existing non-tree edge $(u,v)$ is a replacement edge for all edges in tree path between $u$ and $v$.
The $rep$ of all edges in tree path between $u$ and $v$ decreases by 1. We propose our non-tree edge deletion algorithm in \refalg{2edeletent}. The time complexity of \refalg{2edeletent} is $O(h)$.

\stitle{Tree edge deletion.} If a tree edge $(u,v)$ is deleted, the spanning tree will be split into two trees. We need to find a replacement edge to reconnect the two trees. If the $rep$ of the tree edge $(u,v)$ is 0, we can just cut it and no other edge's $rep$ will change. If the $rep$ of the tree edge $(u,v)$ is not 0, the $rep$ of some other tree edges may change during this processing. To ensure all $rep$ can be correctly maintained, our idea is to first delete all replacement edges of the tree edge $(u,v)$, transforming $(u,v)$ into a tree edge without replacement edges, then cut $(u,v)$, and finally reinsert all deleted replacement edges. When we reinsert these replacement edges, the first edge will be inserted as a tree edge, and the other edges will be inserted as a non-tree edge. 

\begin{algorithm}[t!]
\caption{DS$^{2}$Tree operators}
\label{alg:2dsoperator}

\SetKwProg{myproc}{Procedure}{}{}
\myproc{$\algdslink(u,v)$}{

    $root_u \gets \algfind(u)$\;
    $root_v \gets \algfind(v)$\;

    \If{$\dsnode(root_v).dsize \le \dsnode(root_u).dsize$}{
    $swap(root_u,root_v)$\;
    }
    $root_u.dsize \gets root_u.dsize+root_u.dsize$\;
    Lines 8--12 of \refalg{dsoperator}\;
    
}

\vspace{0.2em}
\SetKwProg{myproc}{Procedure}{}{}
\myproc{$\algdsdel(u)$}{
    Lines 19--23 of \refalg{dsoperator}\;
    $root_u.dsize \gets root_u.dsize-1$\;
    $u.dsize \gets 1$\;
    
}

\vspace{0.2em}
\SetKwProg{myproc}{Procedure}{}{}
\myproc{$\algdsreroot(u)$}{
    Lines 27--30 of \refalg{dsoperator}\;
    $u.dsize \gets root_u.dsize$\;
    
}

\end{algorithm}

We propose our tree edge deletion algorithm in \refalg{2edeletet}. We need to search the replacement edges of the tree edge.  Similar to \baseidxname, we always search from the
smaller tree (Lines 1--6). We first delete all the replacement edges (Lines 8--9) and cut $(u,v)$ as a tree edge without replacement edges (Line 10). Finally, we reinsert all the replacement edges (Lines 11--12). $Getrep$ is to compute all the replacement edges of $(u,v)$. We store all vertices of the small tree in $S$, and all non-tree edges containing these vertices in $P$ (Lines 18--24). We traverse the edges in $P$, removing all edges where both vertices lie on the small tree (Lines 25--27). Only edges where one vertex in the small tree and the other in the big tree remain in $P$ (i.e., all replacement edges).

\begin{lemma}
    The time complexity of \refalg{2edeletet} is $O(h^{2}d)$, where $d$ is average degree of a vertex.
\end{lemma}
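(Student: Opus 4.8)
The plan is to separate the running time of \refalg{2edeletet} into the cost of computing the replacement-edge set $P$ via $Getrep$, and the cost of the subsequent deletion (Lines 8--9) and reinsertion (Lines 11--12) of those edges, and then to argue that the latter dominates and equals $O(h^2 d)$.

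First I would bound the size of the smaller subtree that $Getrep$ explores. Lines 1--6 guarantee that the BFS starts at the root $ru$ of the smaller of the two trees produced by cutting $(u,v)$; let $k = |S|$ denote its number of vertices. Reusing the observation from \reflem{expect} that the sum of subtree sizes over a tree equals the sum of depths, the average subtree size is $O(h)$, so $k = O(h)$. The BFS in $Getrep$ visits each vertex of $S$ once and scans all of its graph neighbors, at total cost $\sum_{x \in S}\deg(x) = O(kd) = O(hd)$ on average; the final filtering loop over $P$ is also $O(|P|) = O(hd)$, since $P$ is a subset of the non-tree edges incident to $S$ and hence $|P| \le \sum_{x\in S}\deg(x) = O(hd)$.

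Next I would account for Lines 8--12, which I claim gives the dominant term. Each call to \algtdeletent (respectively \algteinsert) walks the tree path between the two endpoints of a replacement edge, updating $rep$ along the way, and therefore costs $O(h)$; the root computations performed inside \algteinsert are likewise $O(h)$. Since both loops iterate over all $|P| = O(hd)$ replacement edges, Lines 8--9 and Lines 11--12 each cost $O(|P|\cdot h) = O(hd \cdot h) = O(h^2 d)$. All remaining operations are cheaper: locating the root (Lines 3--5), the single \algreroot (Line 6), and the \algcut (Line 10) each traverse at most one root-to-vertex path and cost $O(h)$. Summing over all contributions, the total is $O(h^2 d)$.

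The main obstacle is justifying the two bounds $k = O(h)$ and $|P| = O(hd)$, which are precisely what reduce the naive worst case $O(nd)$ to $O(h^2 d)$: both rest on averaging arguments (average subtree size equals average depth, and average degree $d$), in the same spirit as \reflem{expect}. The remaining subtlety is recognizing that the quadratic factor $h^2$ does not come from any single expensive traversal but from the \emph{product} of the number of replacement edges ($O(hd)$) with the per-edge path length ($O(h)$) incurred when each replacement edge is first removed as a non-tree edge and then reinserted.
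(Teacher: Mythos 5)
Your proof is correct and follows essentially the same route as the paper's: both bound the smaller subtree by $O(h)$ vertices via the sum-of-subtree-sizes-equals-sum-of-depths observation, deduce $|P| = O(hd)$, and charge $O(h)$ per replacement edge for its deletion and reinsertion, yielding $O(h^2 d)$. Your version is simply more explicit about the secondary costs (the $Getrep$ BFS, the filtering pass, the reroot and cut), which the paper's terser proof leaves implicit.
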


\begin{proof}
    The sum of the subtree size of all vertices in a tree is equal to the sum of the depth of all vertices. Therefore, the average size of the subtree of a vertex can be considered as $O(h)$. The size of $P$ in Line 7 is $O(hd)$. All edges in $P$ will be deleted by \refalg{2edeletent} and reinsert by \refalg{2einsert}. Therefore, the time complexity of \refalg{2edeletet} is $O(h^{2}d)$.
\end{proof}

\vspace{-0.6cm}
\section{Improving 2-Edge Connectivity Query Efficiency with Disjoint-Set Structure}

In the DND-tree introduced in \refsec{opt}, we maintain a disjoint-set structure for each connected component, enabling queries to be processed in almost constant time. To achieve almost constant-time queries for 2-edge connectivity, we propose a new approach \opttidxname. \opttidxname extends \basetidxname by maintaining an additional disjoint-set structure, called the DS$^{2}$Tree, for each 2-edge connected component. To be specific, \opttidxname = \basetidxname + DS$^{2}$Tree.



\subsection{DS$^{2}$Tree structure}

Similar to \dstree, we use a disjoint-set structure to maintain each 2-edge connected component, and the disjoint-set structure supports constant-time queries. We propose DS$^{2}$Tree as the disjoint-set structure in our algorithm for 2-edge connectivity. It is worth noting that during updates, we do not maintain consistency between the roots of the \basetidxname and the DS$^{2}$Tree (we will discuss the reason in \refsubsec{tdelete}). Therefore, the root of a DS$^{2}$Tree needs to maintain the size of the DS$^{2}$Tree. Different from \dstree, the DS$^{2}$Tree maintains the following additional attribute for each tree vertex.

\begin{table}[h]
\begin{tabular}{l l}

- $dsize$     & // If this vertex is the root of the DS$^{2}$Tree, \\ & // $dsize$ is the number of vertices in the DS$^{2}$Tree.
\end{tabular}
\end{table}

We show how to maintain $dsize$ in \refalg{2dsoperator}. In LinkDS of \refalg{2dsoperator}, we ensure that the smaller trees are linked to the larger trees (Lines 4--5). In Isolate and ReRoot, we change $dsize$ of the root of the DS$^{2}$Tree (Lines 10--11 and Line 14). The time complexity of the query in DS$^{2}$Tree can reach $O(1)$.

\subsection{\opttidxname Maintenance for Insertion}

We first discuss the case that the 2-edge connecticity of two vertices of a tree edge changes. 

\begin{lemma}
\label{lem:insrep}
    Given a tree edge $(u,v)$, the two 2-edge connected components that respectively contain $u$ and $v$ can be merged if and only if the $rep$ of the tree edge $(u, v)$ changes from 0 to 1.
\end{lemma}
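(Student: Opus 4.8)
The plan is to prove both implications by reducing the 2-edge connectivity of the endpoints of the tree edge to the single counter $rep(u)$, using \reflem{rep} and \reflem{2ecctrans}. Throughout I would write $v = u.\parent$, so that ``the $rep$ of the tree edge $(u,v)$'' is exactly $rep(u)$, and I would invoke \reflem{rep} in the form: $u$ and $v$ lie in the same 2-edge connected component if and only if $rep(u) \ge 1$. Since only non-tree edge insertions can alter any $rep$ value (\reflem{nontree}, whose hypothesis $G \setminus T = G' \setminus T'$ holds for a tree-edge insertion), the only update that can merge the two components is a non-tree edge insertion, i.e.\ the case handled in Lines 3--11 of \refalg{2einsert}.

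For the ($\Leftarrow$) direction I would assume $rep(u)$ changes from $0$ to $1$ and argue as follows. Before the update $rep(u)=0$, so by \reflem{rep} the edge $(u,v)$ is a bridge and $u,v$ sit in distinct 2-edge connected components, say $A \ni u$ and $B \ni v$. After the update $rep(u)=1 \ge 1$, so by \reflem{rep} the vertices $u$ and $v$ are now 2-edge connected. Chaining \reflem{2ecctrans}: any $a \in A$ is 2-edge connected to $u$, $u$ is 2-edge connected to $v$, and $v$ to any $b \in B$, so $A$ and $B$ now lie in a common 2-edge connected component. Hence the two components merge.

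For the ($\Rightarrow$) direction I would assume the 2-edge connected components of $u$ and $v$ merge as a result of the insertion, which means $u$ and $v$ were in different components beforehand and the same component afterward; by \reflem{rep} this gives $rep(u)=0$ before and $rep(u)\ge 1$ after. It then remains to sharpen $rep(u)\ge 1$ to $rep(u)=1$, and this is the crux: a single inserted non-tree edge $(a,b)$ is a replacement edge for a tree edge precisely when that tree edge lies on the unique tree path between $a$ and $b$ (equivalently, when $(a,b)$ crosses the cut obtained by removing that tree edge), so $(a,b)$ can contribute at most $1$ to any individual $rep$ counter. Therefore the insertion raises $rep(u)$ by at most $1$, and combined with $rep(u)=0$ beforehand this forces the post-insertion value to be exactly $1$.

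The main obstacle is this final step of the ($\Rightarrow$) direction, namely pinning the increment to exactly $1$ rather than merely ``at least $1$''. It rests entirely on the characterization of replacement edges as the non-tree edges whose tree path passes through $(u,v)$, which guarantees a single inserted edge is counted at most once per tree edge. Everything else is a direct application of \reflem{rep} and \reflem{2ecctrans}; the only additional care needed is to confirm via \reflem{nontree} that a tree-edge insertion leaves all $rep$ values, and hence all 2-edge connected components inside the component of $u$, untouched, so that the scenario in the lemma is genuinely the non-tree insertion.
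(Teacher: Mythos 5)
Your proof is correct and takes essentially the same route as the paper's: both directions come down to \reflem{rep}, plus the observation that a single inserted non-tree edge can raise the $rep$ counter of $(u,v)$ by at most one. The paper's own proof is just a terser case analysis (newly inserted tree edge; $rep$ going from $x$ to $x+1$ with $x>0$; $rep$ going from $0$ to $1$) that leaves implicit the unit-increment fact and the appeal to \reflem{2ecctrans} and \reflem{nontree} that you spell out.
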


\begin{proof}
    If $(u, v)$ is a newly inserted tree edge, it is clear that no 2-edge connected components will be changed. If the $rep$ of $(u, v)$ changes from $x$ to $x+1(x>0)$, $u$ and $v$ are already 2-edge connected. According to \reflem{rep}, when $(u, v)$ gets a replacement edge, $u$ and $v$ become 2-edge connected. We can union the two 2-edge connected components containing $u$ and $v$ respectively.
\end{proof}

When a non-tree edge $(u,v)$ is inserted, the $rep$ of each edge in the tree path between $u$ and $v$ increases by one. Based on \reflem{insrep}, if the $rep$ of a tree edge $(x,y)$ in the tree path between $u$ and $v$ changes from 0 to 1, the two 2-edge connected components that respectively contain $u$ and $v$ can be merged. 

We propose our edge insertion algorithm of \opttidxname in \refalg{2edinsert}. If $u$ and $v$ are in the same spanning tree, we traverse all edges in the tree path between $u$ and $v$. If the $rep$ of an edge changes from 0 to 1, we use LinkDS to merge the two 2-edge-connected components that respectively contain its vertices (Lines 6--8 and 11--13). If $u$ and $v$ are not in the same spanning tree, \refalg{2edinsert} is the same as \refalg{2einsert}.

\begin{lemma}
    The time complexity of \refalg{2edinsert} is $O(h)$.
\end{lemma}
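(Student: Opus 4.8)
The plan is to treat \refalg{2edinsert} as \refalg{2einsert} augmented with a bounded number of DS$^{2}$Tree union operations, and then inherit the already-established $O(h)$ bound for \refalg{2einsert}. First I would split along the two cases the algorithm handles. When $u$ and $v$ lie in different spanning trees (tree-edge insertion), the excerpt states that \refalg{2edinsert} coincides exactly with \refalg{2einsert}; no 2-edge connected components merge, so the $O(h)$ bound is inherited directly and there is nothing further to prove. The entire argument therefore concentrates on the non-tree-edge case.

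For the non-tree-edge case, the algorithm walks the tree path between $u$ and $v$, incrementing $rep$ on each path edge exactly as \refalg{2einsert} does, and additionally invokes \algdslink whenever an edge's $rep$ transitions from $0$ to $1$ (the merge licensed by \reflem{insrep}). Following the paper's own convention (as used in \refthm{baseinsert} and the lemma for \refalg{2einsert}), I would charge the path traversal and the $rep$ updates as $O(h)$. The key extra quantity to control is the number of \algdslink invocations: since at most one union is triggered per path edge, that count is bounded by the path length, which is again $O(h)$.

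Next I would account for the cost of each \algdslink. By \refalg{2dsoperator}, \algdslink performs two \algfind calls together with constant pointer surgery; with path compression and union by $dsize$ this is amortized $O(\alpha(n))$ per call, and since $\alpha(n) < 5$ it is treated as constant. Multiplying the $O(h)$ union calls by amortized-constant cost gives $O(h)$ additional work, so the total is $O(h) + O(h) = O(h)$, mirroring the earlier remark that maintaining $rep$ does not change the complexity.

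The main obstacle, and the point I would state most carefully, is the amortized accounting for the disjoint-set operations: I must argue that issuing up to $O(h)$ \algfind/\algdslink calls \emph{inside a single insertion} does not inflate the per-operation amortized cost beyond $O(\alpha(n))$. A single \algfind can be long in the worst case (before compression), so worst-case-per-call reasoning is insufficient. The resolution is that the inverse-Ackermann guarantee is a global amortized bound over the whole sequence of DS$^{2}$Tree operations, independent of how those operations are grouped across individual edge insertions; hence each \algdslink contributes only amortized constant time regardless of which insertion issues it. Once this is acknowledged, the added DS$^{2}$Tree maintenance costs no more asymptotically than the base algorithm, yielding the $O(h)$ bound.
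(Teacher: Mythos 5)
Your proposal is correct and follows essentially the same route as the paper: the paper's proof simply observes that \refalg{2edinsert} differs from \refalg{2einsert} only by the added \algdslink calls, asserts each such call is constant time, and concludes the $O(h)$ bound is inherited. Your version is somewhat more careful than the paper's --- you explicitly bound the number of \algdslink invocations by the path length and justify the amortized $O(\alpha(n))$ cost of the embedded \algfind calls (which the paper glosses as ``$O(1)$''), but this is a refinement of the same argument, not a different one.
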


\begin{proof}
    Compared with \refalg{2einsert}, \refalg{2edinsert} use LinkDS to merge two 2-edge-connected components. LinkDS can be done in $O(1)$. Therefore, The time complexity of \refalg{2edinsert} is the same as \refalg{2einsert}.
\end{proof}

\begin{algorithm}[t!]
\caption{\algtedinsert}
\label{alg:2edinsert}
\KwIn{a new edge $(u,v)$ and the \basetidxname index}
\KwOut{the updated \basetidxname}

Lines 27--30 of \refalg{2einsert}\;

\tcc{non-tree edge insertion}
\If{$root_u = root_v$}{
$fu \gets u, fv \gets v$\;
    \While{$fu \neq fv$}{
    \uIf{$fu.\stsize < fv.\stsize$}{
    $fu.rep\gets fu.rep+1$\;
    \If{$fu.rep=1$}{
    $\algdslink(fu,fu.parent)$\;
    }
    $fu \gets fu.parent$\;
    }
    \Else{
    $fv.rep\gets fv.rep+1$\;
    \If{$fv.rep=1$}{
    $\algdslink(fv,fv.parent)$\;
    }
    $fv \gets fv.parent$\;
    }
    
}

}

Lines 12--15 of \refalg{baseinsert}\;

\end{algorithm}

\subsection{\opttidxname Maintenance for Deletion}
\label{subsec:tdelete}

In this section, we propose our edge deletion algorithm for \opttidxname. We will
discuss the deletion of non-tree edges and tree edges separately. 

\begin{lemma}
\label{lem:delrep}
    Given a tree edge $(u,u.\parent)$ and a non-tree edge to be deleted, the two 2-edge connected component $ECC$ that contain $u$ and $v$ can be split if and only if the $rep$ of the tree edge $(u, u.\parent)$ changes from 1 to 0. 
\end{lemma}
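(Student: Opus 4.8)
The plan is to prove the equivalence by reducing both directions to the replacement-edge characterization of \reflem{rep}, which states that for a tree edge $(u,u.\parent)$ the endpoints $u$ and $u.\parent$ lie in the same 2-edge connected component if and only if $(u,u.\parent)$ has at least one replacement edge, i.e. $rep(u)\ge 1$. Here $v$ denotes $u.\parent$, and the deleted non-tree edge is assumed to lie on the tree path whose edges include $(u,u.\parent)$; otherwise $rep(u)$ is unchanged and neither condition can hold, so the equivalence is vacuous. When $(u,u.\parent)$ does lie on that path, the deleted non-tree edge is itself one of its replacement edges, so its removal decreases $rep(u)$ by exactly one.

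First I would handle the ``if'' direction. Assume $rep(u)$ changes from $1$ to $0$. Before the deletion $rep(u)=1\ge 1$, so by \reflem{rep} the vertices $u$ and $v=u.\parent$ belong to a common 2-edge connected component $ECC$. After the deletion $rep(u)=0$, so by \reflem{rep} again $u$ and $v$ are no longer 2-edge connected, and the tree edge $(u,u.\parent)$ has become a bridge. By \reflem{cc2cc} the component $ECC$ occupies a connected substructure of the spanning tree, and removing the now-bridge edge $(u,u.\parent)$ from this substructure separates it into the part containing $u$ and the part containing $v$. Using the transitivity of \reflem{2ecctrans}, every vertex that remains 2-edge connected to $u$ forms one component and every vertex that remains 2-edge connected to $v$ forms the other, so $ECC$ splits into two.

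For the ``only if'' direction I would argue directly. A split of $ECC$ across the edge $(u,u.\parent)$ means precisely that $u$ and $v=u.\parent$, which shared a component before the deletion, lie in different components afterwards. The ``before'' membership gives $rep(u)\ge 1$ and the ``after'' separation gives $rep(u)=0$, both via \reflem{rep}. Since a single non-tree-edge deletion lowers $rep(u)$ by at most one, the only way to pass from $rep(u)\ge 1$ to $rep(u)=0$ is the transition $rep(u)=1\to 0$; a jump such as $2\to 0$ is impossible. This mirrors, as its inverse, the insertion statement \reflem{insrep}.

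The step I expect to be the main obstacle is pinning down what ``the $ECC$ can be split'' means so that it is genuinely attributed to this tree edge: deleting one non-tree edge decreases $rep$ along an entire tree path, and several of those edges may lie inside the same original 2-edge connected component. The care needed is to localize the claim to the single edge $(u,u.\parent)$ --- identifying a split ``at this edge'' with the separation of its two endpoints --- and to justify the ``decrease by at most one'' bound, which is what rules out degenerate transitions and makes the value $1\to 0$ the exact threshold for separation.
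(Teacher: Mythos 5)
Your proof is correct and takes essentially the same route as the paper's: both directions reduce to the replacement-edge characterization of \reflem{rep}, concluding that $u$ and $u.\parent$ are separated exactly when $rep$ drops to $0$, with the paper stating the ``only if'' direction contrapositively (if $rep$ stays positive, the component is not split). Your extra care --- treating the vacuous case where the deleted non-tree edge is off the tree path, and the at-most-one-decrease argument ruling out transitions like $2\to 0$ --- only makes explicit what the paper's shorter proof leaves implicit.
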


\begin{proof}
    After a non-tree edge is deleted, if the $rep$ of $(u, u.\parent)$ is still greater than 1, $u$ and $u.\parent$ remain in the same 2-edge-connected component. If the $rep$ of $(u, u.\parent)$ changes from 1 to 0, $u$ and $u.\parent$ are not 2-edge connected now. $u$ and the descendants of $u$ that are in the $ECC$ will be removed from the $ECC$ and form a new 2-edge connected component.
\end{proof}

\stitle{Non-tree edge deletion.} We let vertex $x$ be the the lowest common ancestor of $u$ and $v$. We traverse all edges in the tree path between $u$ and $x$ and the path between $v$ and $x$ in descending order of depth. If the $rep$ of an edge in the tree path between $u$ (or $v$) and $x$ changes from 0 to 1, we need to find $u$ (or $v$) and all the descendants of $u$ (or $v$) that are 2-edge connected to $u$ (or $v$). All these vertices will be removed from the current $ECC$ and form a new 2-edge-connected component. 

We do not maintain consistent roots between the ID$^{2}$Tree and the DS$^{2}$Tree. We will discuss the motivation for this strategy at the end of this section. Because of this, a problem may occur when we delete $u$ and its descendants from the current DS$^{2}$Tree. The current root of the DS$^{2}$Tree may be one of the vertices that need to be deleted. To avoid this, we need to select another vertex in the DS$^{2}$Tree, outside of $u$ and its descendants, to serve as the new root of the DS$^{2}$Tree. We traverse the edges in the tree path in order of decreasing depth. $Split2ECC(u)$ is called when the tree edge $(u, u.parent)$ in the tree path has its $rep$ changed from 1 to 0. $u.parent$ must be in the current DS$^{2}$Tree but not among $u$ or its descendants. Therefore, $u.parent$ is a suitable vertex to become the new root of the DS$^{2}$Tree. For implementation convenience, in $Split2ECC(u)$, we just let $u.parent$ as the new root of the DS$^{2}$Tree.

\begin{algorithm}[t!]
\caption{\algtddeletent}
\label{alg:2eddeletent}
\KwIn{an existing edge $(u,v)$ and the \opttidxname index}
\KwOut{the updated \opttidxname}

    \While{$u \neq v$}{
    \uIf{$u.\stsize < v.\stsize$}{
    $u.rep\gets u.rep-1$\;
    \If{$u.rep=0$}{
    $Split2ECC(u)$\;
    }
    $u \gets u.parent$\;
    }
    \Else{
    $v.rep\gets v.rep-1$\;
    \If{$v.rep=0$}{
    $Split2ECC(v)$\;
    }
    $v \gets v.parent$\;
    }
    
}

\proc{$Split2ECC(u)$}{

$\algdsreroot(u.parent)$\;

    $Q \gets$ an empty queue\;

        $Q.push(u)$\;
        $\algdsdel(u)$\;

        \While{$Q \neq \emptyset$}{
        $x \gets Q.top()$,
        $Q.pop()$\;
        \ForEach{$y \in \nbr(x)$}{
        \If{$y.parent = x \land y.rep>0$}{
        $Q.push(y)$\;
        $\algdsdel(y)$\;
        $\algdslink(y,u)$\;
        
        }     
        }
        }

    }

\end{algorithm}

We propose our non-tree edge deletion algorithm of \opttidxname in \refalg{2eddeletent}. If the $rep$ of an edge changes from 1 to 0, we use $Split2ECC$ to split the current 2-edge-connected component (Lines 3--5 and 8--10). In $Split2ECC$, we make $u.\parent$ as the new root of the DS$^{2}$Tree to avoid deleting the root (Line 13). We perform a BFS to find all the descendants of $u$ that are 2-edge connected to $u$ (Lines 17--23).  For each qualified vertex, we use $isolate$ to remove it from the current DS$^{2}$Tree, and then use LinkDS to add it to the new DS$^{2}$Tree containing $u$ (Lines 22--23). 

\begin{lemma}
    The time complexity of \refalg{2eddeletent} is $O(h^{2}d)$, where $d$ is average degree of a vertex.
\end{lemma}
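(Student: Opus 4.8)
The plan is to decompose the cost of \refalg{2eddeletent} into the work of the outer \textbf{while} loop that walks the two root-paths from $u$ and from $v$ up to their lowest common ancestor while decrementing $rep$, and the cumulative cost of the $Split2ECC$ calls that fire whenever some $rep$ drops from $1$ to $0$. Throughout I would lean on the earlier results that the DS$^{2}$Tree operators $\algdsdel$, $\algdslink$ and $\algdsreroot$ of \refalg{2dsoperator} each run in amortized $O(1)$ time (they only add constant $dsize$ bookkeeping on top of the DS-Tree operators, whose constant-time guarantees are \reflem{dschildren} and the surrounding discussion). Consequently the only non-constant contributions come from walking tree paths and from scanning graph neighbors inside the BFS of $Split2ECC$.

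First I would bound the outer loop. The tree path from $u$ to the lowest common ancestor together with the path from $v$ to it has total length $O(h)$, and each iteration performs only a constant amount of work besides the optional $Split2ECC$: one $rep$ decrement, one comparison of $\stsize$ to choose which pointer to advance, and one move to a parent. Hence the loop skeleton contributes $O(h)$.

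Next I would bound the dominant term, the $Split2ECC$ calls. Each call first reroots the DS$^{2}$Tree at the split vertex's parent with $\algdsreroot$ (amortized $O(1)$), then runs a BFS that, controlled by the guard $y.\parent = x \wedge y.rep>0$, visits precisely the vertices of the 2-edge-connected component being peeled off; for each popped vertex $x$ it scans all of $\nbr(x)$ and issues a constant number of $\algdsdel$ and $\algdslink$ operations. The visited set lies inside the subtree of the split vertex, so by the identity that the sum of all subtree sizes equals the sum of all depths --- giving an average subtree size of $O(h)$, exactly the estimate used in the analysis of \refalg{2edeletet} --- one $Split2ECC$ scans $O(h)$ vertices together with their incident edges, i.e.\ $O(hd)$ time, where $d$ is the average degree. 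Since a tree edge whose $rep$ falls to $0$ can occur at up to $O(h)$ positions along the traversed path, $Split2ECC$ is invoked $O(h)$ times, so its total cost is $O(h)\cdot O(hd)=O(h^{2}d)$, which dominates the $O(h)$ loop skeleton and yields the claimed bound.

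The step I expect to be the real obstacle is the accounting of the BFS work across the several $Split2ECC$ calls of a single deletion, together with the guarantee that the disjoint-set operations stay $O(1)$ under the reroot-by-replacement trick. For the accounting, the safe and clean way is to notice that the components peeled off at distinct split points are vertex-disjoint: a split edge already carries $rep=0$ afterwards, so no later BFS traverses back into an earlier component. This disjointness is what prevents double counting and even sharpens the BFS contribution to $O(hd)$; the stated $O(h^{2}d)$ then follows conservatively by multiplying the worst-case per-call cost $O(hd)$ by the worst-case number $O(h)$ of calls. For the second point, I would invoke that $\algdsreroot$ merely swaps two $\dsnode$ objects and resets $dsize$, so keeping the DS$^{2}$Tree root off the deleted subtree never forces us to relink all children of a heavy root --- this is precisely why $Split2ECC$ can reroot at $u.\parent$ before isolating $u$ and its descendants.
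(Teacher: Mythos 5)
Your proof is correct and follows essentially the same decomposition as the paper: the tree path between $u$ and $v$ has length $O(h)$, so $Split2ECC$ fires $O(h)$ times, and each call costs $O(hd)$ because the BFS scans the neighbors of the $O(h)$ vertices in the peeled subtree, giving $O(h^{2}d)$ overall. Your additional observations---the amortized $O(1)$ guarantee for the DS$^{2}$Tree operators and the disjointness of the peeled components (which would in fact sharpen the aggregate BFS cost to $O(hd)$)---go beyond the paper's terse argument but do not change the route.
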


\begin{proof}
    The number of edges in tree path between $u$ and $v$ is $O(h)$. $Split2ECC$ can be called $O(h)$ times. In $Split2ECC$, we need to search the subtree rooted at $u$, which has $h$ vertices. $Split2ECC$ can be done in $O(hd)$. The time complexity of \refalg{2eddeletent} is $O(h^{2}d)$.
\end{proof}

\begin{algorithm}[t!]
\caption{\algtddeletet}
\label{alg:2eddeletet}
\KwIn{an existing edge $(u,v)$ and the \opttidxname index}
\KwOut{the updated \opttidxname}

Lines 1--6 of \refalg{2edeletet}\;
$P \gets Getrep(ru,rv)$\;
\If{$\left | P\right |>1$}{
$\algtdeletet(x,y)$\;
}
\Else{

\ForEach{$(x,y) \in P$}{

$\algtddeletent(x,y)$\;

        }
$\algcut(u,v)$\;

\ForEach{$(x,y) \in P$}{

$\algtedinsert(x,y)$\;

        }
        }

\end{algorithm}

\stitle{Tree edge deletion.} We divide tree edge deletions into three cases: 

\begin{itemize}
    \item 1. Deleting a tree edge whose $rep$ is greater than 1.
    
    \item 2. Deleting a tree edge whose $rep$ equals 0.

    \item 3. Deleting a tree edge whose $rep$ equals 1.
\end{itemize}

\begin{lemma}
\label{lem:del2}
    Given a tree edge $(u,u.\parent)$, if $rep$ of  $(u,u.\parent)$ is greater than 1, the 2-edge connected component containing $u$ and $u.\parent$ does not change.
\end{lemma}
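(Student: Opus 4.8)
The plan is to prove the statement by tracking how the replacement‑edge counts evolve when the deleted tree edge is replaced, and then appealing to the characterizations already established in \reflem{cc2cc}, \reflem{rep} and \reflem{2ecctrans}. First I would record a monotonicity observation: deleting an edge can never create a new 2‑edge connection, so a 2‑edge connected component can only stay the same or split. Hence it suffices to show that no split occurs, i.e.\ that every pair of vertices in the component containing $u$ and $u.\parent$ stays 2‑edge connected after $(u,u.\parent)$ is removed. By \reflem{cc2cc} this component is a connected subtree $T_{ECC}$ of the spanning tree, and by \reflem{rep} together with the transitivity in \reflem{2ecctrans} it is enough to show that, after the update, every tree edge lying inside $T_{ECC}$ still possesses at least one replacement edge.

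Next I would use the hypothesis directly. Writing $e=(u,u.\parent)$ and $k=rep(u)\ge 2$, the fundamental cut of $e$ carries $k+1\ge 3$ graph edges: the tree edge $e$ itself plus its $k$ non‑tree replacement edges $f_1,\dots,f_k$. Each $f_i$ closes a fundamental cycle through $e$, so each $f_i$ lies inside the same 2‑edge connected component as $e$, which confirms that all of these edges are internal to $T_{ECC}$. When $e$ is deleted, the two tree fragments are reconnected by promoting one replacement edge, say $f_1$, to a tree edge. The key local computation is that the remaining $k-1\ge 1$ edges $f_2,\dots,f_k$ now close fundamental cycles through the new tree edge $f_1$, so they are replacement edges of $f_1$; therefore $rep(f_1)=k-1\ge 1$ and, by \reflem{rep}, the endpoints of $f_1$ stay 2‑edge connected.

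The remaining step is to extend this from the single reconnecting edge to \emph{all} other tree edges of $T_{ECC}$. Replacing the tree edge $e$ by $f_1$ changes only the tree paths of those non‑tree edges that previously ran through $e$ (namely $f_2,\dots,f_k$); every other non‑tree edge keeps its tree path, hence keeps its contribution to the counts of the edges on that path. So I would argue, edge by edge, that each tree edge $a\ne e$ inside $T_{ECC}$ still has a positive count in the updated tree, and then reassemble these pairwise connections into the single original component via \reflem{2ecctrans}.

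\textbf{Main obstacle.} I expect this last step to be the crux and the genuinely delicate point. The danger is that some tree edge $a$ inside $T_{ECC}$ might have drawn \emph{all} of its replacement edges from the set $\{f_1,\dots,f_k\}$, and the rerouting forced by the swap $e\to f_1$ may move those edges' tree paths off $a$, dropping $rep(a)$ to $0$ and splitting the component. Ruling this out is not immediate from the value $rep(u)>1$ alone, because that value measures only the size of the \emph{fundamental} cut of $e$ in the current spanning tree, not the size of the smallest edge cut through $e$; the argument must therefore show that deleting $e$ leaves every relevant cut inside $T_{ECC}$ of size at least two. Making this quantitative---proving that the surplus of replacement edges of $e$ is inherited by every tree edge whose replacements all came from $\{f_1,\dots,f_k\}$---is where the real work, and the real risk, lies.
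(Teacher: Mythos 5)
The steps you do carry out (monotonicity of 2-edge connectivity under deletion, membership of the $f_i$ in the same component, $rep(f_1)=k-1$ after the swap, and the invariance of tree paths avoiding $e$) are all sound, so the only gap is the one you flag --- but that gap is not merely delicate, it is unfillable, because the lemma is false. Counterexample: vertices $u,v,x,y$; tree edges $(v,u),(u,x),(x,y)$ rooted at $v$ (so $u.\parent=v$); non-tree edges $(x,v)$ and $(y,v)$. This graph is 2-edge connected, so the only 2-edge connected component is $\{u,v,x,y\}$, and the tree edge $(u,v)$ has $rep=2$, since both non-tree edges cross its fundamental cut. Deleting $(u,v)$ leaves $u$ with the single incident edge $(u,x)$, so the component splits into $\{u\}$ and $\{v,x,y\}$. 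This instantiates exactly the failure mode you describe: the tree edge $(u,x)$ drew both of its replacement edges from the replacement set $\{(x,v),(y,v)\}$ of $e=(u,v)$, and once $e$ is deleted and one of them is promoted into the tree, the $rep$ of $(u,x)$ drops to $0$. Your diagnosis --- that $rep(e)\ge 2$ bounds only the fundamental cut of $e$ and not every cut through $e$ --- is the precise reason: here $\{(u,v),(u,x)\}$ is a 2-edge cut of the graph, invisible to $rep(e)$.

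The paper's own proof of \reflem{del2} waves past this exact point: it asserts that having more than one replacement edge implies $u$ and $u.\parent$ remain 2-edge connected after the deletion, tacitly assuming that two replacement edges yield two edge-disjoint $u$--$u.\parent$ paths. In the example above, every $u$--$v$ path in the remaining graph must use $(u,x)$, so that assertion (and likewise the proof's claim that 2-edge connectivity between $u$ and its descendants is unaffected) is false. Your refusal to push the argument through at that step is therefore the correct judgement, not a shortcoming of your approach. A sound hypothesis would have to exclude every 2-edge cut containing $(u,u.\parent)$ --- for instance, requiring its endpoints to be 3-edge connected --- rather than $rep>1$; as stated, the first case of \refalg{2eddeletet}, which skips the DS$^{2}$Tree update on the strength of this lemma, can leave the index answering 2-edge connectivity queries incorrectly.
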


\begin{proof}
    We let the current 2-edge-connected component containing $u$ and $u.\parent$ be $ECC$. Deleting the tree edge $(u, u.\parent)$ does not affect the 2-edge connectivity between $u$ and its descendants that are 2-edge-connected to $u$. Similarly, the 2-edge connectivity among the other vertices in $ECC$, including $u.parent$, also remains unaffected. Since there is more than one replacement edge for $(u, u.\parent)$, $u$ and $u.parent$ remain 2-edge-connected after the deletion. Therefore, the 2-edge connected component containing $u$ and $u.\parent$ does not change.

\end{proof}

Based on \reflem{del2}, the 2-edge connected component would not be changed in the first case. We can just update the \basetidxname. We do not require that a 2-edge connected component has the same root in the \basetidxname and the DS$^{2}$Tree. Therefore, there is no need to update the DS$^{2}$Tree. 

For the second case of tree edge deletion, $u$ and $v$ are not in the same 2-edge connected component. We just cut $(u,v)$ without updating DS$^{2}$Tree.

For the third case of tree edge deletion, $u$ and $v$ will no longer be in the same 2-edge connected component. Similar to \refalg{2einsert}, we first find the replacement edge and delete it with \refalg{2eddeletent}. Next, we can cut $(u,v)$ as a tree edge without replacement edges. Finally, we can reinsert the deleted replacement edge with \refalg{2edinsert}.

We propose our edge deletion algorithm of \opttidxname in \refalg{2eddeletet}. We compute all replacement edges of $(u,v)$ and store them in $P$ (Lines 1--2). For the first case, i.e., when $|P|>1$, we can only update the \basetidxname with \refalg{2edeletet} (Lines 3--4). For the other two case, i.e., when $|P|=1$ or 0, we delete the edges in $P$ with \refalg{2eddeletent} (Lines 6--7) and cut $(u,v)$ (Line 8). Finally, we reinsert the deleted replacement edge with \refalg{2edinsert}.

\begin{lemma}
    The time complexity of \refalg{2eddeletet} is $O(h^{2}d)$.
\end{lemma}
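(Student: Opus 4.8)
The plan is to bound the running time of \refalg{2eddeletet} by summing the costs of its constituent steps, reusing the subroutine complexities already established earlier. First I would account for the preprocessing in Line~1, which executes Lines 1--6 of \refalg{2edeletet}: this locates the smaller of the two subtrees, traverses parent pointers to find the root, and possibly invokes \algreroot, each costing $O(h)$, so this stage is $O(h)$. Next, the call $Getrep(ru,rv)$ in Line~2 performs a BFS over the smaller subtree; since the sum of subtree sizes equals the sum of depths, the average subtree size of a vertex is $O(h)$, and each visited vertex scans $O(d)$ neighbors, so $Getrep$ costs $O(hd)$ and the returned replacement-edge set $P$ has size $O(hd)$. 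This matches the accounting already used in the analysis of \refalg{2edeletet}.

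The core of the argument is the case split on $|P|$. In the first case ($|P|>1$), the algorithm only updates the \basetidxname via \algtdeletet, whose complexity I may assume is $O(h^{2}d)$. In the remaining cases ($|P|\le 1$), the crucial observation is that each of the two \textbf{foreach} loops over $P$ executes at most once: when $|P|=0$ the loops are skipped entirely and we merely perform \algcut, and when $|P|=1$ the single replacement edge is processed once. Consequently \algtddeletent is invoked at most once, contributing $O(h^{2}d)$ by its established bound; \algcut contributes $O(h)$; and the single reinsertion via \algtedinsert contributes $O(h)$. Summing over both branches, the dominating term is $O(h^{2}d)$, and the $O(h)+O(hd)$ preprocessing costs are absorbed into it, giving a total of $O(h^{2}d)$.

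I expect the routine cost-accounting to constitute the bulk of the proof, with the only genuine subtlety being the justification that the else-branch loops run at most once. This follows directly because that branch is entered precisely when $|P|\le 1$, so the replacement-edge set contains at most one edge; this is what prevents the expensive subroutine \algtddeletent and the reinsertion \algtedinsert from being triggered more than once and thereby blowing the bound past a single $O(h^{2}d)$ term.
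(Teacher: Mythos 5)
Your proof is correct and follows essentially the same route as the paper's: a case split on $|P|$, reusing the previously established bounds for \algtdeletet, \algtddeletent, and \algtedinsert, with the key point being that the expensive $O(h^{2}d)$ subroutine \algtddeletent runs at most once when $|P|\le 1$. The only differences are presentational — you merge the paper's $|P|=0$ and $|P|=1$ cases into one branch and explicitly charge the $O(h)+O(hd)$ preprocessing and $Getrep$ costs (which the paper glosses over as ``$O(1)$'' in its second case) — so this is the same argument, if anything stated slightly more carefully.
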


\begin{proof}
    For the first case in tree edge deletion, We only update the \basetidxname. Therefore, the complexity is the same as \refalg{2edeletet}. It is $O(h^{2}d)$. For the second case, it can be done in $O(1)$. For the third case, there only one edge in $P$. We need to delete this edge with \refalg{2eddeletent} in $O(h^{2}d)$ and cut $(u,v)$ in $O(1)$. Finally, we reinsert this edge with \refalg{2edinsert} in $O(h)$. The total time complexity of the third case is still $O(h^{2}d)$. 
\end{proof}

\stitle{The reason for not keeping the roots consistent.} For the first case of tree edge deletion, we only update ID$^{2}$Tree but not DS$^{2}$Tree. However, if we still want to keep the roots consistent, we also need to update DS$^{2}$Tree. Since the structure of the spanning tree has changed, the roots of many ID$^{2}$Trees are also changed. The roots of many DS$^{2}$Tree, including the one containing $u$ and $v$, must be updated accordingly. At this point, we need to traverse the entire spanning tree to find all ID$^{2}$Trees whose roots have changed. This operation is time-consuming and increases the overall time complexity of the algorithm. In addition, the two benefits of maintaining consistent roots discussed in Section 6.4 do not apply to updates in \opttidxname. On one hand, efficiently locating the root of a ID$^{2}$Tree alone is insufficient in \opttidxname, as many update operations require traversing the edges of a tree path. On the other hand, as described in the introduction of $Split2ECC$, we provide a method to avoid deleting root without increasing complexity. Therefore, maintaining consistent roots offers no benefit for \opttidxname.



\section{Performance Studies}
\label{sec:exp}

	\begin{table*}[t!]
		\centering
		\begin{small}
  \resizebox{\linewidth}{!}{
			
			\begin{tabular}{c|c|c|c|c|c|c|c|c|c}
				\hline
				Dataset           & name & $n$         & $m$           & Type                    & $h_{\basetree}$ & $h_{\dtree}$ &$|S|$& $\#search$ & \rr{$h_{ID^{2}Tree}$}  \\ \hline\hline
fb-forum\footnotemark[3]    & FO   & 900   & 33,720    
& Temporal, Social      & 2.834     & 2.776 &1.112 &1.038 &1.629\\ \hline
ia-reality-call\footnotemark[3]    & IA   & 27,045    & 52,050    
& Temporal, Communications       & 1.578     & 1.001 &1.226 &1.578 &1.188\\ \hline
dynamic-dewiki\footnotemark[1]    & DE   & 2,166,670   & 86,337,879    
& Temporal, Hyperlink     & 2.828     & 2.778 &1.015 &1.001 &3.856\\ \hline
stackoverflow\footnotemark[2]     & ST   & 2,601,978     & 63,497,050       
&Temporal,~
QA          & 2.509     & 2.445 &1.037&1.005 &19.932 \\ \hline
soc-flickr-growth\footnotemark[3] & FL   & 2,302,926   & 33,140,017     
&Temporal,~
Social      & 1.163     & 1.190 &1.424&1.055 &4.340\\ \hline
edit-enwiki\footnotemark[1]       & EN   & 50,757,444  & 572,591,272    
&Temporal,~
Edit        & 2.472     & 2.160 &1.008&1.002 &8.528\\ \hline
delicious-ti\footnotemark[1]      & TI   & 38,289,742  & 301,183,605    
&Temporal,~
Feature     & 2.600       & 2.592 &1.051&1.029 &8.192\\ \hline
delicious-ui\footnotemark[1]      & UI   & 34,611,304  & 301,186,579    
&Temporal,~
Interaction & 3.550      & 3.321  &1.012&1.018 & 10.279\\ \hline
yahoo-song\footnotemark[1]        & YA   & 1,625,953   & 256,804,235    
&Temporal,~
Rating      & 2.197     & 2.554 &1.000&1.000 &2.261\\ \hline
Slashdot0811\footnotemark[2]       & S8   & 77,360   & 469,180     
& Unlabeled, Social          & 2.672     & 4.064 &1.059&1.002
 & 2.728\\ \hline
 Gowalla\footnotemark[2]       & GO   & 196,592   & 950,327     
& Unlabeled, Location          &  2.571     & 4.112 &1.173&1.022
 & 2.946\\ \hline
LiveJournal\footnotemark[1]       & LI   & 4,846,610   & 42,851,237     
& Unlabeled, Social          & 3.911     & 4.272 &1.068&1.002
 &3.954\\ \hline
twitter\_mpi\footnotemark[1]      & TM   & 52,579,683  & 1,614,106,187  
&Unlabeled,~
Social        & 2.353     & 2.462 &1.015&1.000
 &2.353\\ \hline
twitter-2010\footnotemark[1]      & T2   & 41,652,230  & 1,202,513,046  
&Unlabeled,~
Social        & 2.334     & 3.470 &1.013&1.000
&2.334 \\ \hline
friendster\footnotemark[1]        & FR   & 124,836,180 & 1,806,067,135  
&Unlabeled,~
Social        & 1.855     & 2.640 &1.047&1.003
&1.855\\ \hline
uk-2007\footnotemark[4]           & UK   & 133,633,040 & 4,663,392,591  
&Unlabeled,~
Hyperlink     & 4.003     & 5.249 &1.818&1.002
 &4.004\\ \hline
CTR\footnotemark[5]               & CT   & 14,081,817  & 16,933,413     
&Unlabeled,~
Road          & 2,356.592  & 1,883.230 &3.279&6.541
 &2588.938\\ \hline
W\footnotemark[5]                 & W    & 6,262,105   & 7,559,642      
&Unlabeled,~
Road          & 1,410.453  & 1,404.410 &4.194&10.840
 &1987.076\\ \hline
road-usa\footnotemark[3]     & US   & 23,947,348  & 28,854,312     
&Unlabeled,~
Road          & 2,864.871  & 2,726.730 &3.202&6.791
 &2865.402\\ \hline
			\end{tabular}
   }
		\end{small}
  \caption{The Description of Dataset. $h$ is the average vertex depth in different spanning trees. $m$ is the number of edges and $n$ is the number of vertices. $|S|$ is the average size of S in \refalg{optdelete}. $\#search$ is the actual average number of iterations which is mentioned in \reflem{expect}. 
  } 
			\label{tab:dataset}
			\vspace{-1em}
	\end{table*}










\stitle{Setup.} \rr{All algorithms are implemented in C++ and compiled with O3 level optimization. The experiments are conducted on a single machine with Intel Xeon Gold 6248 2.5GHz and 768GB RAM. All results are averaged over ten runs on the same machine.} 

\stitle{Dataset.} We evaluate many real datasets from different domains (Table \ref{tab:dataset}). These datasets can be found at konect\footnotemark[1], Stanford Large Network data set Collection\footnotemark[2], Network Repository\footnotemark[3], LAW\footnotemark[4] and DIMACS\footnotemark[5]. Nine out of nineteen datasets are temporal graphs, and the rest are unlabeled graphs. \rr{Our algorithms can be used for recommendation in social networks \cite{mirza2003studying}, transaction analysis in trading networks \cite{maesa2016uncovering}, path planning in road networks \cite{daniel2020gis}, etc.}

\stitle{Competitors.}
\rr{We evaluate the performance of queries and update operations for the following methods:}
\begin{itemize}
	\item \textbf{\optidxname.} Our final algorithm includes all optimizations.
	\item \textbf{\baseidxname.} Our algorithm shown in section \ref{sec:base}.
	\item \textbf{\dtree.} The algorithm proposed by  Chen et al. \cite{DBLP:journals/pvldb/ChenLHB22}. The source code from \cite{DBLP:journals/pvldb/ChenLHB22} is in Python. We re-implement it in C++.
    \item \textbf{\idtdsa.} A baseline to combine \baseidxname and the disjoint set. When a tree edge is deleted in \baseidxname and no replacement edge exists, we reconstruct the whole disjoint set.
    \item \textbf{\idtdsb.} The other baseline to combine \baseidxname and the disjoint set. When a tree edge is deleted in \baseidxname and no replacement edge exists, we reconstruct the disjoint set for all vertices in the old connected component containing the deleted edge.

    \item \textbf{\hdt.} The algorithm proposed by Holm et al. \cite{DBLP:conf/stoc/HolmLT98,DBLP:journals/jacm/HolmLT01}. The code is from an experimental paper \cite{iyer2001experimental}.
    
    \item \textbf{DND$^{2}$Tree.} Our final algorithm for 2-edge connectivity includes all optimizations (section 8).
    
    \item \textbf{ID$^{2}$Tree.} Our algorithm shown in (section 7).

    \item \textbf{HDT$^{2}$.} The algorithm for 2-edge connectivity proposed by Holm et al. \cite{DBLP:conf/stoc/HolmLT98,DBLP:journals/jacm/HolmLT01}. 

    \item \textbf{Rec$^{2}$.} A recomputation-based algorithm for maintaining 2-edge connectivity. It uses two groups of disjoint-Set Structures to maintain connected components and 2-edge-connected components. When a new edge is inserted, the corresponding sets are merged. When an old edge is deleted, all connected components and all 2-edge-connected components are recomputed.
\end{itemize}

	\footnotetext[1]{http://konect.cc/networks/} 
	\footnotetext[2]{https://snap.stanford.edu/data/} 
	\footnotetext[3]{https://networkrepository.com/} 
	\footnotetext[4]{http://law.di.unimi.it/datasets.php} 
 \footnotetext[5]{http://www.diag.uniroma1.it/challenge9/download.shtml}


 \begin{figure*}[t!]
	\centering
	\includegraphics[width=1\textwidth]{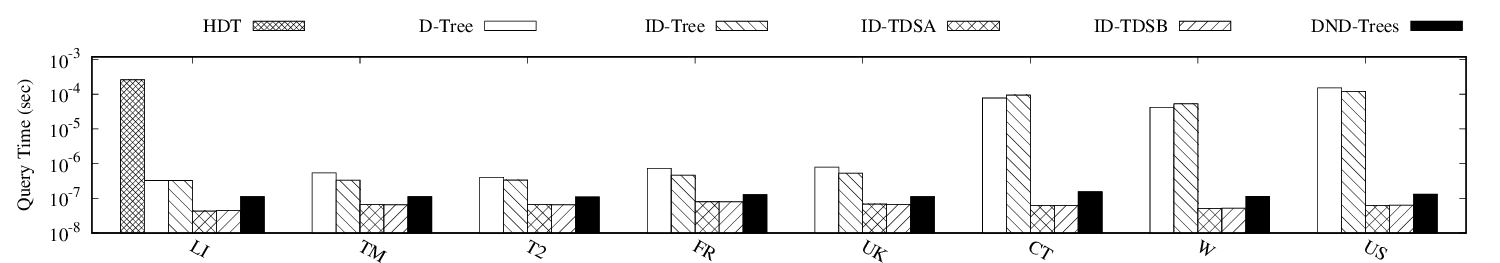}
	\vspace{-2em}
	\caption{Query time of unlabeled graphs.}
	\label{fig:Query time of unlabeled graphs}
\end{figure*}
	\begin{figure*}[t!]
	\centering
	\includegraphics[width=1\textwidth]{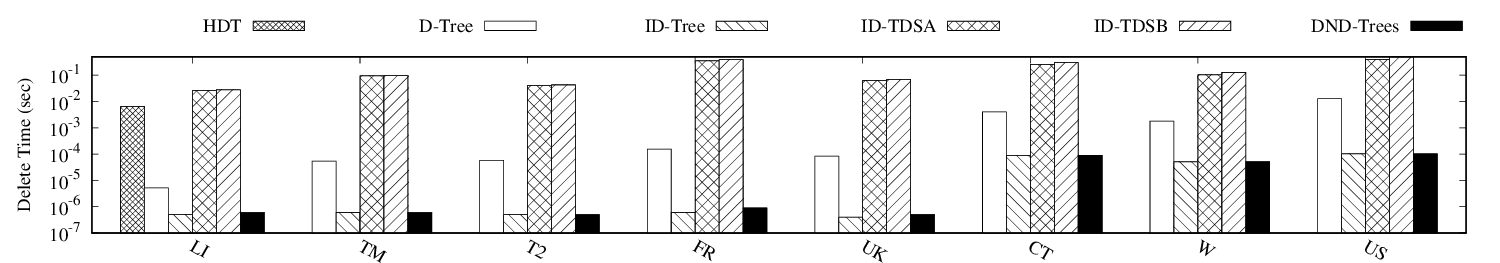}
	\vspace{-2em}
	\caption{Delete time of unlabeled graphs.}
	\label{fig:Delete time of unlabeled graphs}
\end{figure*}
\begin{figure*}[t!]
	\centering
	\includegraphics[width=1\textwidth]{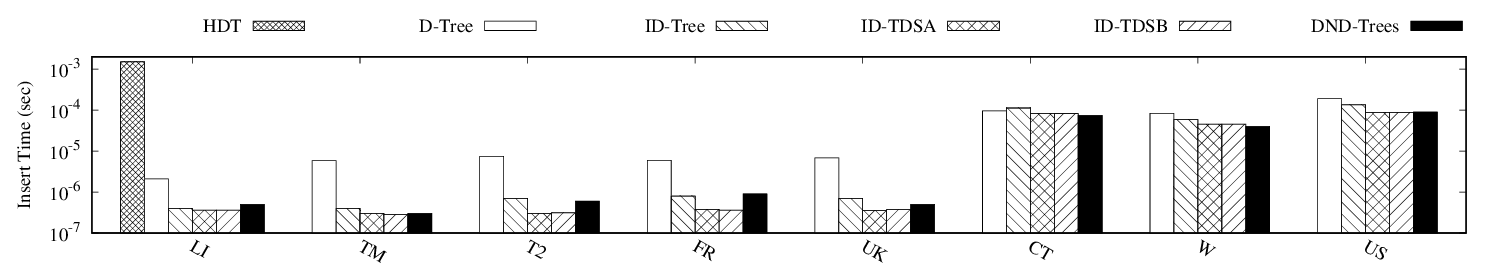}
	\vspace{-2em}
	\caption{Insert time of unlabeled graphs.}
	\label{fig:Insert time of unlabeled graphs}
\end{figure*}
\begin{figure*}[t!]
	\centering
	\includegraphics[width=1\textwidth]{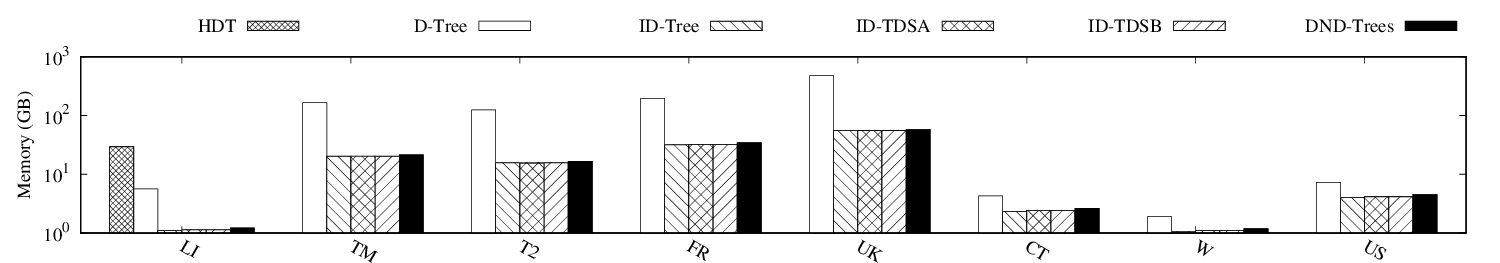}
	\vspace{-2em}
	\caption{Memory of unlabeled graphs.}
	\label{fig:Memory of unlabeled graphs}
\end{figure*}

\vspace{-0.7em}
\subsection{Performance in Unlabeled Graphs}

For a general unlabeled graph, we first build an index with all the edges of the entire dataset. Then we randomly delete $100,000$ edges and then insert those $100,000$ edges back into the graph. We calculate the average running time for insertions and deletions, respectively.
For query efficiency, we randomly generate $50,000,000$ vertex pairs and identify whether they are connected. 
Due to the large average vertex depth of the road network, queries are not efficient. For them we query $100,000$ times. We calculate the average time of a query operation. For any of the test cases in our paper, \optidxname can complete the entire operation in less than $1,000$ seconds. \rr{We do not report the results of tests that take longer than 12 hours. A case executes for more than 12h is mostly due to its very low deletion efficiency, indicating that the algorithm is not suitable for processing large-scale data.} All subsequent experiments also follow this setting.

\stitle{Query Processing.} 
Figure \ref{fig:Query time of unlabeled graphs} shows the average query time in 8 unlabeled graphs. The query efficiency of \optidxname is much higher than that of \baseidxname and \dtree, and very close to \idtdsa and \idtdsb. For the instance of US, \optidxname is two orders of magnitude faster. 
The efficiency of both \baseidxname and \dtree is related to the average depth. 
Their query time is different but the overall query efficiency of \baseidxname and \dtree is similar.

\stitle{Deletion.}
Figure \ref{fig:Delete time of unlabeled graphs} shows the average time of the edge deletion. On all datasets, the deletion efficiency of \optidxname is significantly higher than that of \dtree. Two orders of magnitude speedup is achieved on TM, UK and US. We can also see that the additional cost of \optidxname beyond \baseidxname is marginal, which supports \refthm{finalinsert} and \refthm{finaldelete}. \idtdsa and \idtdsb are much slower than \optidxname, because they need to visit nearly all the tree edges in \baseidxname.
	
\stitle{Insertion.} Figure \ref{fig:Insert time of unlabeled graphs} shows the average time of edge insertion. Our final algorithm is much faster than \dtree. Note that in certain small datasets (LI and FR), \optidxname is a little slower than \baseidxname. This is because when the average depth of the \baseidxname itself is relatively small, although the \dstree speeds up the root-finding operation, additional time is required to maintain the \dstree.

\stitle{Memory.} Figure \ref{fig:Memory of unlabeled graphs} shows the memory usage. Compared to \dtree, \baseidxname does not need to maintain children and non-tree neighbors and its memory is smaller. Compared to \baseidxname, \optidxname include \dstree structure and need a little more memory.

\stitle{Different types of update operations.} \rr{Figure \ref{fig:Update time of tree and non-tree edges} shows the efficiency of different types of update operations (inserting tree edges, inserting non-tree edges, deleting tree edges and deleting non-tree edges). We first initialize the spanning tree based on all edges in the dataset and obtain the set of tree edges and the set of non-tree edges. Next, we delete all non-tree edges and calculate the efficiency of non-tree edge deletion. Then, all tree edges are deleted, and the efficiency of tree edge deletion is calculated. Next, we insert all tree edges and calculate the efficiency of tree edge insertion. Finally, we insert all non-tree edges and calculate the efficiency of non-tree edge insertion. We can find that the update efficiency of tree edges is faster than that of non-tree edges, and the gap increases when $h$ increases.}

\begin{figure}[t!]
	\centering
	\includegraphics[width=0.4\textwidth]{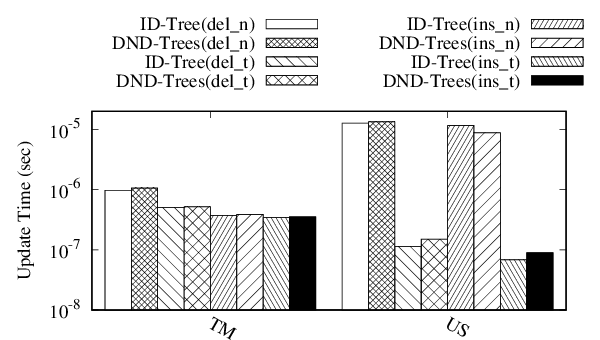}
	\vspace{-0.6em}
	\caption{\rr{Update time of tree and non-tree edges.}}
	\label{fig:Update time of tree and non-tree edges}
	\vspace{-0.5em}
\end{figure}
	
\begin{figure*}[t!]
	\centering
	\begin{tabular}{ccc}
		\subfigure[ST]{\includegraphics[width=0.3\textwidth]{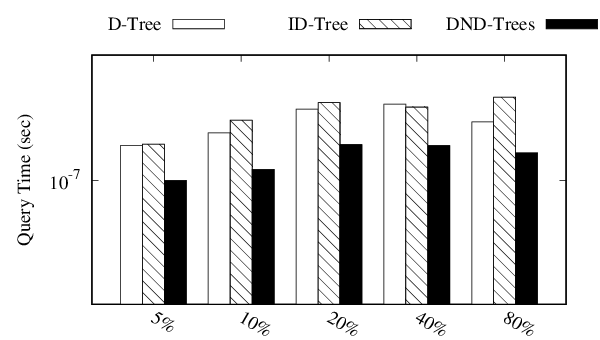}} & 
		\subfigure[UI]{\includegraphics[width=0.3\textwidth]{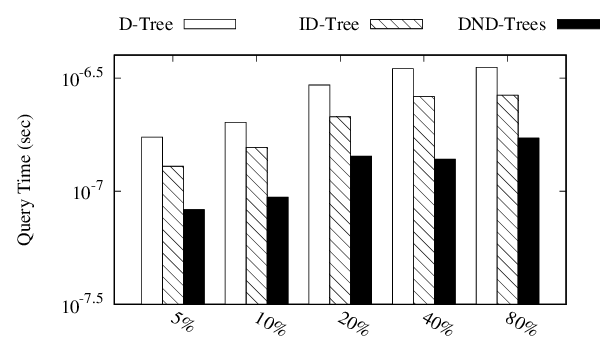}} &
		\subfigure[TI]{\includegraphics[width=0.3\textwidth]{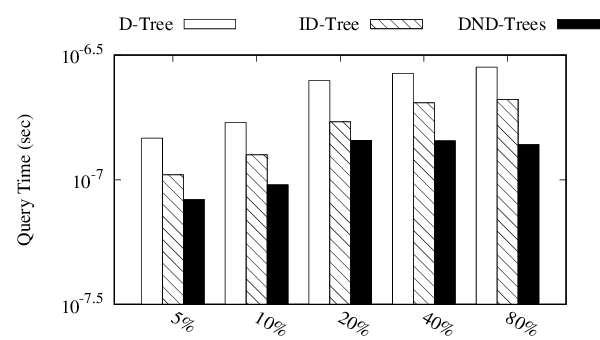}}
		
	\end{tabular}
 \vspace{-0.7em}
	\caption{Query time (vary window size).}
	\label{fig:Query time (vary window size)}
\end{figure*}

		



\begin{figure*}[t!]
	\centering
	\begin{tabular}{ccc}
		\subfigure[ST]{\includegraphics[width=0.3\textwidth]{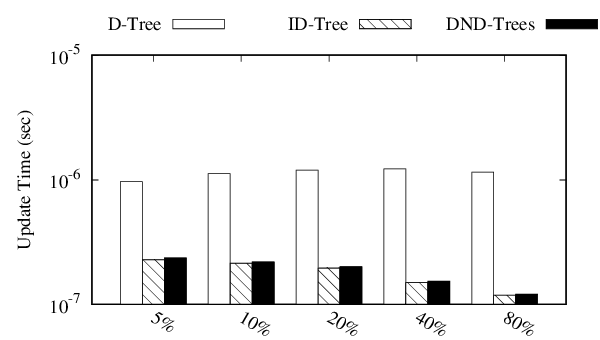}} & 
		\subfigure[UI]{\includegraphics[width=0.3\textwidth]{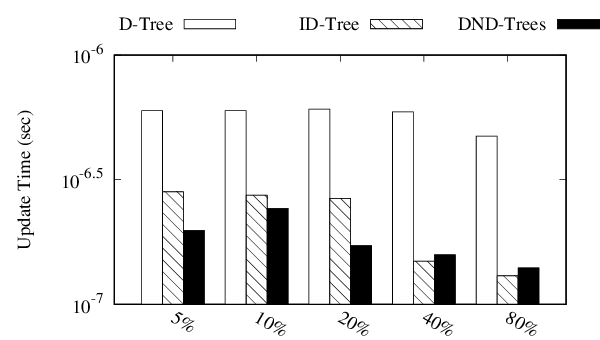}} &
		\subfigure[TI]{\includegraphics[width=0.3\textwidth]{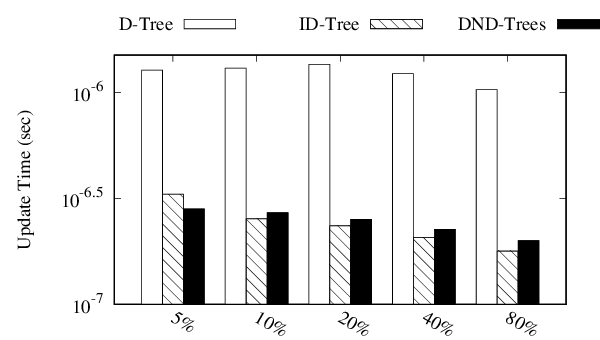}} 
		
	\end{tabular}
 \vspace{-0.7em}
	\caption{Update time (vary window size).}
	\label{fig:Update time (vary window size)}
 \vspace{-1em}
\end{figure*}

\vspace{-1em}
\subsection{Performance in Sliding Windows}
\label{subsec:pps}

We investigate three algorithms over different sizes of sliding windows in temporal graphs. For each temporal graph, we first compute the time span for the dataset. Then we vary the window size in 5\%, 10\%, 20\%, 40\% and 80\% of its time span. We insert all edges in chronological order. When the time difference between the inserted new edge and the oldest edge in the window is greater than the time window size, the old edge is deleted.  We record the average time of a sliding operation (inserting a new edge and deleting an expired edge). We also randomly query 50,000,000 times in the last time window and calculate the average query time.
We report the representative datasets given the space limitation.


\stitle{Query processing by varying window size.}
Figure \ref{fig:Query time (vary window size)} shows the average query time of different window sizes. As the time window size increases, the query time of both \dtree and \baseidxname increases. However, the query time of \optidxname is stable given its constant query time. 
	
\stitle{Updating by varying window size.}
Figure \ref{fig:Update time (vary window size)} shows the average update time of different window sizes. The \optidxname and \baseidxname are much faster than \dtree. As the window size increases, the time variation of all three algorithms is not obvious. \rr{For the sliding window updates, deleting an edge is the dominating cost because of searching the replacement edge. As the window size increases, the number of edges increases, and the graph becomes denser. As a result, the probability of deleting a non-tree edge is higher. Compared with the case of deleting a tree edge, deleting a non-tree edge is much more efficient.}

\subsection{Average depth of Spanning Tree}

The efficiency of all algorithms is related to the average tree depth $h$ except for our final query algorithm. 
$h_{\baseidxname}$ and $h_{\dtree}$ in Table \ref{tab:dataset} show the average tree depths in different graphs. 
For temporal datasets, they refer to the average depth where the time window is set to 40\%. Even though we relax certain heuristics to reduce average tree depth, the result shows that our tree depth is still competitive. Furthermore, experimental results show that the depth of the \dstree can be considered as a constant.



\vspace{-0.7em}
\subsection{Experimental Analysis of Delete Operation}
	
\reftab{dataset} reports the average value of $|S|$ in \refalg{optdelete} when $succ$ is $true$ in Line 3. For temporal datasets, $h$ refers to the case where the time window is set to 40\% of the time span. 
Compared to \baseidxname, the additional cost of \optidxname appears when no replacement edge can be found. We disconnect the \dstree as shown in \refalg{optdelete}. The cost depends on the size of $S$. From \reftab{dataset}, $|S|$ is very small which supports our theoretical results and proves that the additional cost is negligible.  We also report the actual average number of iterations of Line 9 in \refalg{basedelete}. As shown in Table \ref{tab:dataset}, $\#search$ is small and it is in line with Lemma \ref{lem:expect}. This demonstrates the high efficiency of our edge deletion operation.

 \begin{figure*}[t!]
	\centering
	\includegraphics[width=1\textwidth]{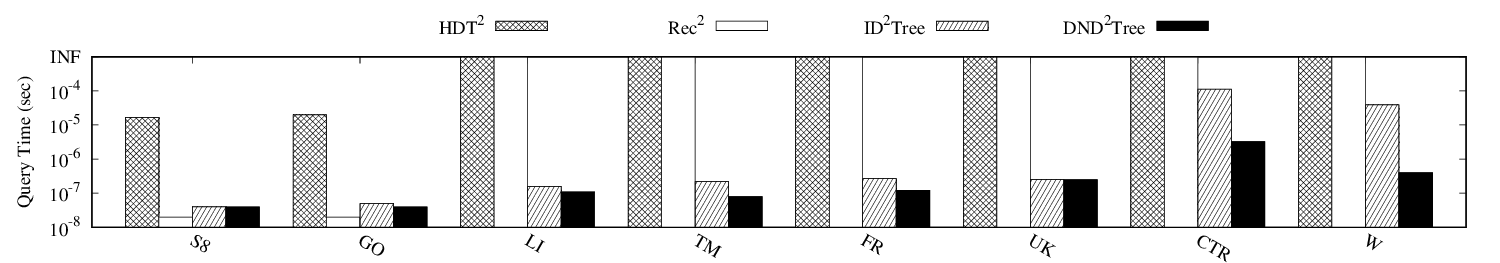}
	\vspace{-2em}
	\caption{Query time of unlabeled graphs (2-edge connectivity).}
	\label{fig:Query time of unlabeled graphs (2-edge connectivity)}
	\vspace{-0.1em}
\end{figure*}

 \begin{figure*}[t!]
	\centering
	\includegraphics[width=1\textwidth]{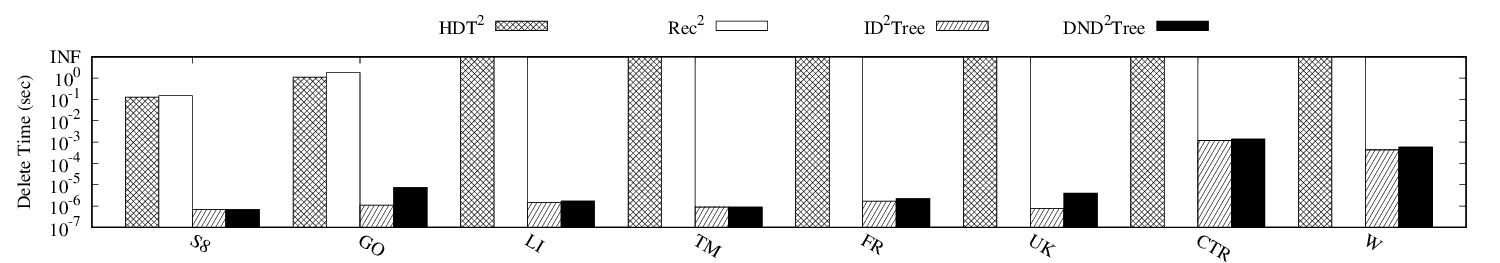}
	\vspace{-2em}
	\caption{Delete time of unlabeled graphs (2-edge connectivity).}
	\label{fig:Delete time of unlabeled graphs edge connectivity}
	\vspace{-0.1em}
\end{figure*}

 \begin{figure*}[t!]
	\centering
	\includegraphics[width=1\textwidth]{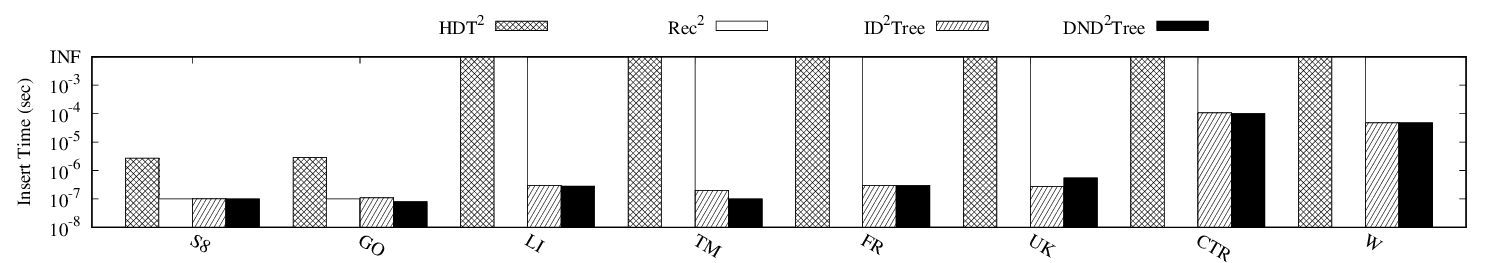}
	\vspace{-2em}
	\caption{Insert time of unlabeled graphs (2-edge connectivity).}
	\label{fig:Insert time of unlabeled graphs (2-edge connectivity)}
	\vspace{-0.1em}
\end{figure*}

\begin{figure*}[t!]
	\centering
	\includegraphics[width=1\textwidth]{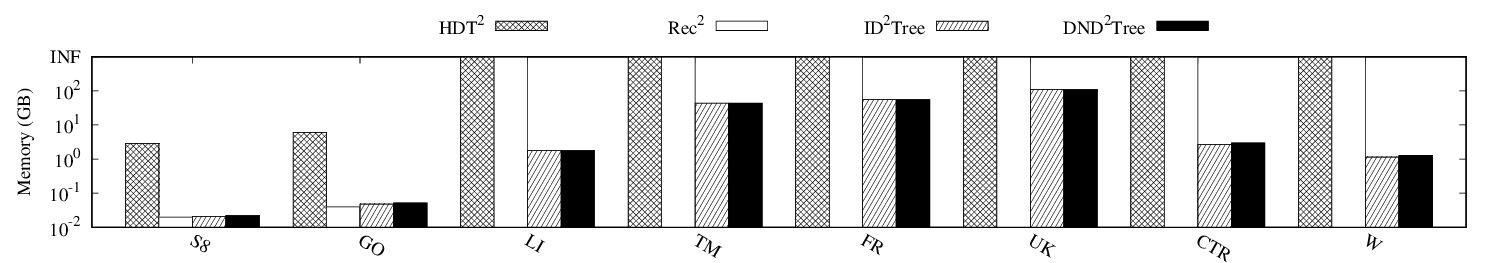}
	\vspace{-2em}
	\caption{Memory of unlabeled graphs (2-edge connectivity).}
	\label{fig:Memory of unlabeled graphs (2-edge connectivity)}
	\vspace{-0.1em}
\end{figure*}

 \begin{figure*}[t!]
	\centering
	\includegraphics[width=1\textwidth]{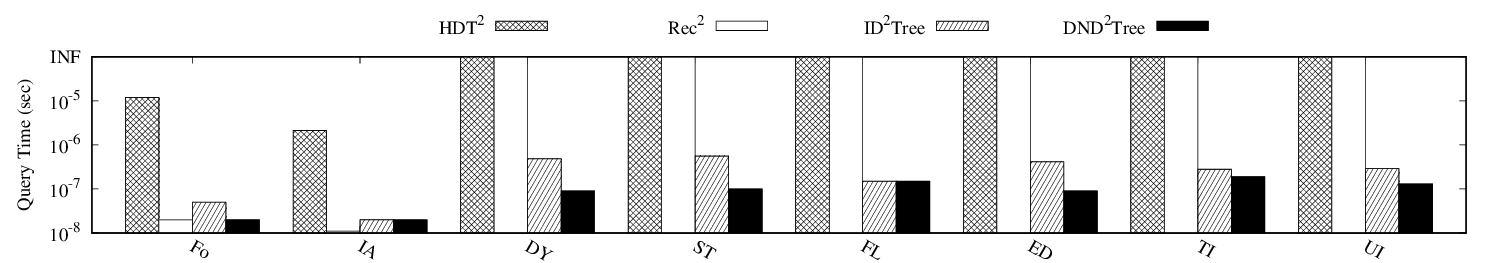}
	\vspace{-2em}
	\caption{Query time of temporal graphs (2-edge connectivity).}
	\label{fig:Query time of temporal graphs (2-edge connectivity)}
\end{figure*}

 \begin{figure*}[t!]
	\centering
	\includegraphics[width=1\textwidth]{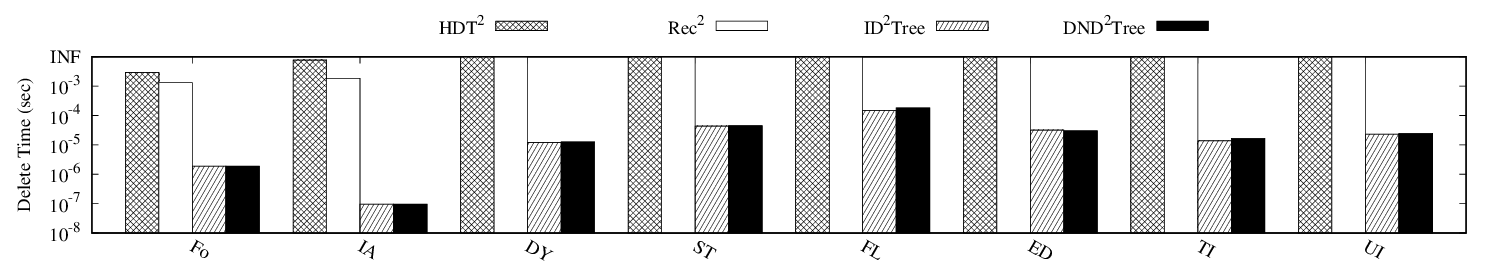}
	\vspace{-2em}
	\caption{Delete time of temporal graphs (2-edge connectivity).}
	\label{fig:Delete time of temporal graphs (2-edge connectivity)}
	\vspace{-1em}
\end{figure*}

 \begin{figure*}[t!]
	\centering
	\includegraphics[width=1\textwidth]{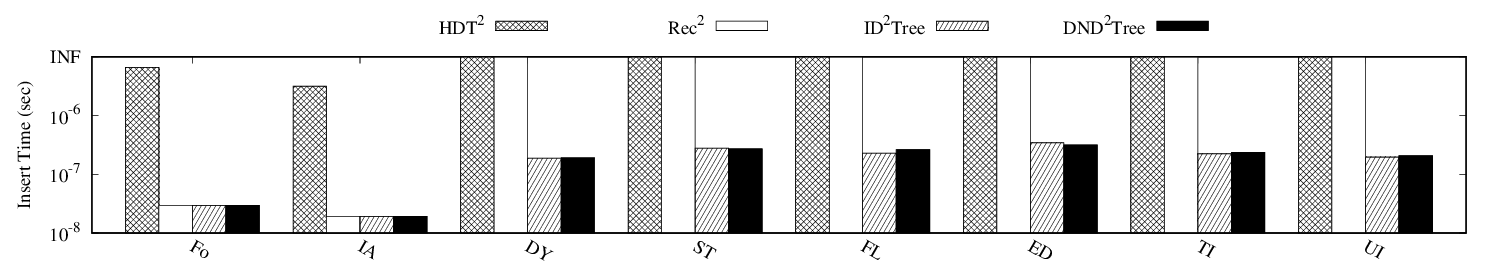}
	\vspace{-2em}
	\caption{Insert time of temporal graphs (2-edge connectivity).}
	\label{fig:Insert time of temporal graphs (2-edge connectivity)}
\end{figure*}

\begin{figure*}[t!]
	\centering
	\includegraphics[width=1\textwidth]{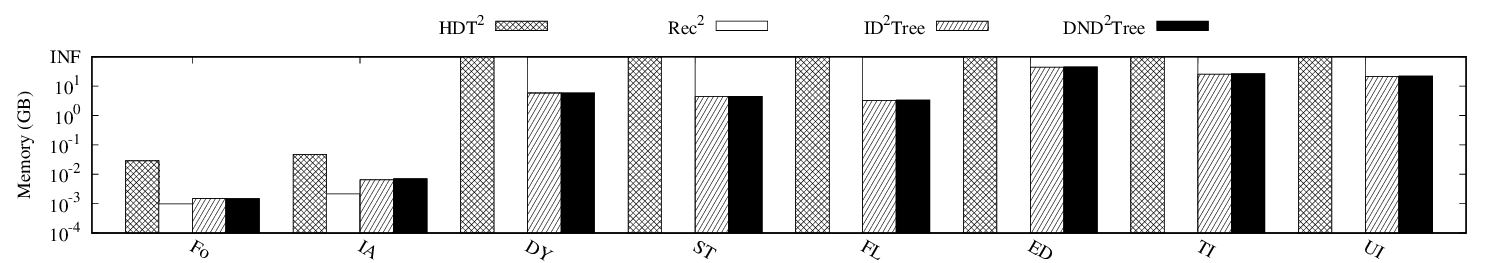}
	\vspace{-2em}
	\caption{Memory of temporal graphs (2-edge connectivity).}
	\label{fig:Memory of temporal graphs (2-edge connectivity)}
\end{figure*}

\begin{figure*}[t!]
	\centering
	\begin{tabular}{ccc}
		\subfigure[FO]{\includegraphics[width=0.3\textwidth]{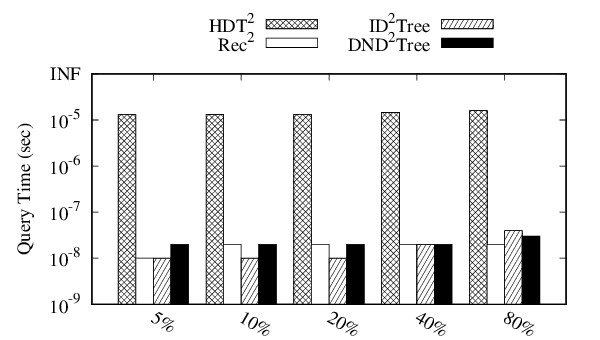}} & 
		\subfigure[IA]{\includegraphics[width=0.3\textwidth]{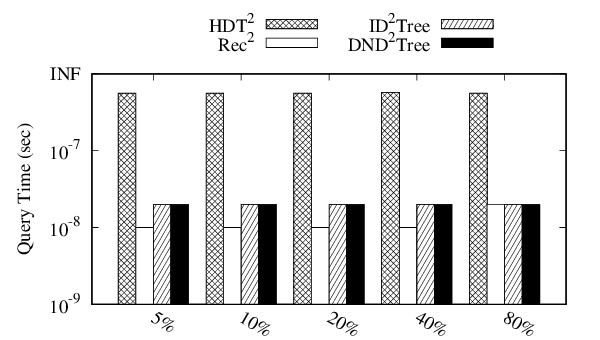}} &
		\subfigure[ST]{\includegraphics[width=0.3\textwidth]{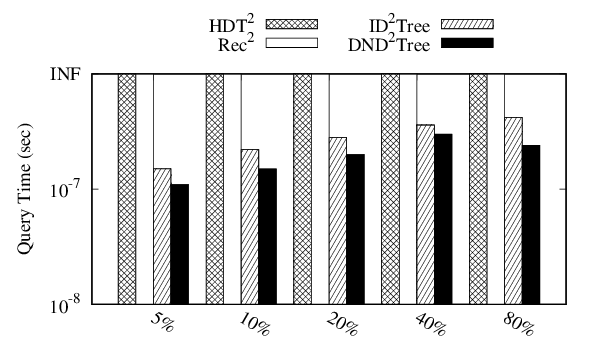}}
		
	\end{tabular}
 \vspace{-0.7em}
	\caption{Query time for 2-edge connectivity (vary window size).}
	\label{fig:Query time for 2-edge connectivity (vary window size)}
\end{figure*}

\begin{figure*}[t!]
	\centering
	\begin{tabular}{ccc}
		\subfigure[FO]{\includegraphics[width=0.3\textwidth]{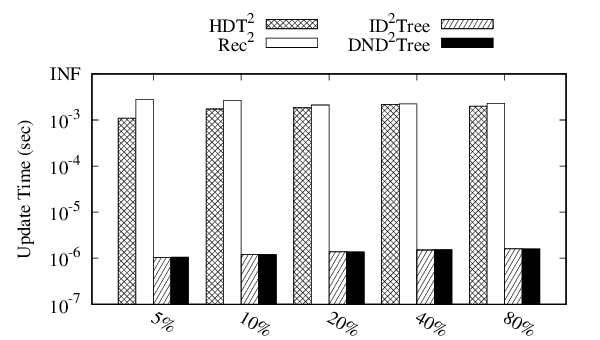}} & 
		\subfigure[IA]{\includegraphics[width=0.3\textwidth]{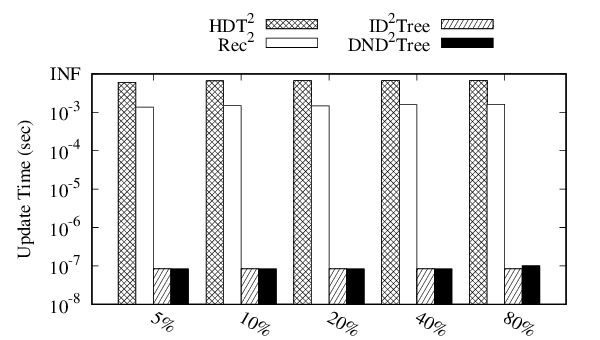}} &
		\subfigure[ST]{\includegraphics[width=0.3\textwidth]{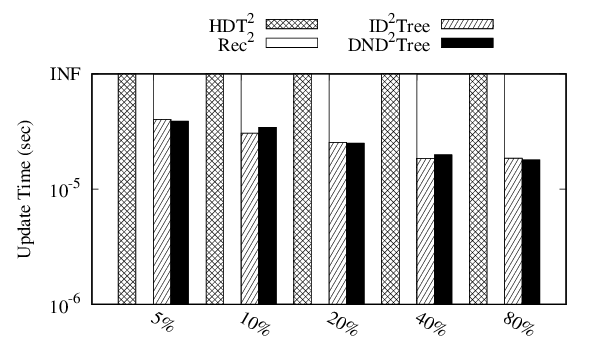}}
		
	\end{tabular}
 \vspace{-0.7em}
	\caption{Update time for 2-edge connectivity (vary window size).}
	\label{fig:Update time for 2-edge connectivity (vary window size)}
 \vspace{-1.5em}
\end{figure*}

\vspace{-0.7em}
\subsection{Performance for 2-edge connectivity}

\stitle{\textit{Unlabeled Graphs.}} For a general unlabeled graph, we first build an index with all the edges of the entire dataset. Then we randomly delete $10,000$ edges and then insert those $10,000$ edges back into the graph. We calculate the average running time for insertions and deletions, respectively.
For query efficiency, we randomly generate $100,000$ vertex pairs and identify whether they are 2-edge connected.

\stitle{Query Processing.} 
Figure \ref{fig:Query time of unlabeled graphs (2-edge connectivity)} shows the average query time in 8 unlabeled graphs. The query efficiency of \opttidxname is much higher than that of HDT$^{2}$, and very close to \basetidxname in most cases. According to \reftab{dataset}, $h_{ID^{2}Tree}$ in many datasets is not big. Therefore, the query efficiency of \basetidxname is not bad. In addition, although we use $h$ in the complexity analysis, for a query vertex we only need to examine the height of the \basetidxname containing that vertex, which is smaller than $h$ in practice. This makes the advantage of \opttidxname much less significant. However, $h$ is very big in some sparse graph. The advantage of \opttidxname becomes much more evident on sparse graphs.

\stitle{Deletion.}
Figure \ref{fig:Delete time of unlabeled graphs edge connectivity} shows the average time of the edge deletion. On all datasets, the deletion efficiency of \basetidxname and \opttidxname are significantly higher than others methods. The efficiency of \opttidxname is very close to \basetidxname. This indicates that the overhead of maintaining the DS$^{2}$Tree is very small.

\stitle{Insertion.} Figure \ref{fig:Insert time of unlabeled graphs (2-edge connectivity)} shows the average time of edge insertion. Our two algorithms are much faster than HDT$^{2}$. The efficiency of \opttidxname is also close to \basetidxname.

\stitle{Memory.} Figure \ref{fig:Memory of unlabeled graphs (2-edge connectivity)} shows the memory usage. Compared to \basetidxname, \opttidxname include \dstree structure and need a little more memory.

\stitle{\textit{Temporal Graphs.}} We also evaluate eight temporal graphs. We first insert all the edges one by one in the original chronological order and compute the average time of an insert operation. Then we randomly generate $100,000$ vertex pairs for query processing. Finally, we continuously
delete the oldest edge until the graph is empty. 
The average times of edge deletion, edge insertion and queries are reported in Figure \ref{fig:Query time of temporal graphs (2-edge connectivity)}, \ref{fig:Delete time of temporal graphs (2-edge connectivity)} and \ref{fig:Insert time of temporal graphs (2-edge connectivity)}, respectively. Figure \ref{fig:Memory of temporal graphs (2-edge connectivity)} shows the memory usage. The results of queries are similar to unlabeled graphs.

\stitle{\textit{Performance in Sliding Windows.}} We investigate four algorithms over different sizes of sliding windows in temporal graphs using the same methodology as in \refsubsec{pps}.

\stitle{Query processing by varying window size.}
Figure \ref{fig:Query time for 2-edge connectivity (vary window size)} shows the average query time of different window sizes. As the time window size increases, the query time of both \basetidxname and \opttidxname is stable. 

\stitle{Update processing by varying window size.}
Figure \ref{fig:Update time for 2-edge connectivity (vary window size)} shows the average update time of different window sizes. \basetidxname and \opttidxname are still faster than the other two methods in all cases.


\section{Related Work}
\label{sec:rel}

\stitle{Connectivity in Undirected Graphs.} 
Initially, connectivity algorithms were designed for updating spanning trees either for edge insertions \cite{tarjan1975efficiency} or for edge deletions \cite{shiloach1981line}. 
%
Henzinger and King proposed \cite{henzinger1995randomized,henzinger1999randomized} a method of representing spanning trees through Euler tours \cite{tarjan1984finding}. It adds information to enable early termination of the search for a replacement edge. Holm et al. \cite{DBLP:conf/stoc/HolmLT98,DBLP:journals/jacm/HolmLT01} proposed a new structure that makes the update efficiency of the index reach $O(\log^{2}{n})$. Huang et al. \cite{huang2023fully} further theoretically reduced the time complexity to $O(\log{n}(\log{\log{n}})^{2})$. Chen et al. \cite{DBLP:journals/pvldb/ChenLHB22} introduce a new data structure, called the dynamic tree (D-Tree). A detailed analysis of D-Tree can be found in Section \ref{subsec:dtree}. Connectivity maintenance in streaming graphs is studied in \cite{DBLP:journals/pacmmod/SongWXQZ024} and \cite{xu2026querying}.

\stitle{Connectivity/reachability in Directed Graphs.}
Much of the prior research on reachability of directed graphs \cite{bramandia2009incremental,cheng2013tf,jin20093,wei2014reachability,zhu2014reachability} focuses on labeling schemes. Those approaches are typically not suited for undirected graphs. They can be classified into two main categories: interval labeling and 2-HOP labeling \cite{lyu2021dbl}. Those methods can also be modified for dynamic graphs. Optimal Tree Cover (Opt-TC) \cite{agrawal1989efficient} is based on interval labeling. It is one of the initial works to tackle the incremental maintenance of the index in dynamic graphs. Based on 2-HOP labeling, some incremental maintenance methods are proposed in \cite{bramandia2009incremental,schenkel2004hopi,schenkel2005efficient}. However, they do not support efficient delete operations. A recent data structure for labeling, known as DBL \cite{lyu2021dbl}, does support undirected and directed graphs. It only allows edge insertions in graphs, and the construction of DBL is time-consuming due to the need of performing a BFS on the corresponding connected components.  

\stitle{Maintaining 2-edge Connectivity.} In 1991, Fredrickson \cite{frederickson1997ambivalent} introduces a data structure called topology trees for the fully dynamic 2-edge connectivity problem, achieving a worst-case update time of $O(\sqrt{m})$. In 1992, Eppstein et al. \cite{eppstein1997sparsification} improve the update time to $O(\sqrt{n})$ using the sparsification technique. Henzinger et al. \cite{henzinger1997fully} present a fully dynamic 2-edge connectivity algorithm with an amortized expected update time of $O(\log^{5}n)$. Holm et al. \cite{DBLP:journals/jacm/HolmLT01} subsequently propose a deterministic algorithm with $O(\log^{4}n)$ update time for fully dynamic 2-edge connectivity, which remains the state of the art. Recently, some methods are proposed for dynamic 2-edge connectivity in directed graphs \cite{georgiadis2016incremental,georgiadis2025faster}. However, they are partially dynamic methods and do not support fully dynamic updates.

\stitle{Other types of graphs.}
There are many related studies on other types of graphs related to connectivity queries. In temporal graphs, the span-reachability query aims to answer the reachability of any time window \cite{wen2020efficiently,DBLP:journals/vldb/WenYZQCZ22}. Qiao et al. proposed a reachability algorithm on weighted graphs \cite{qiao2013computing}. Reachability in distributed systems is studied in \cite{DBLP:conf/icde/ZhangLQZWCL22}. Label constrained reachability query is to judge whether there is a path between two points that only contains a subset of given labels, which is studied in \cite{peng2020answering,zeng2022distributed}.

\vspace{-1.5em}
\section{Conclusion}
\label{sec:conclu}

In this paper, we propose a new data index for solving connectivity queries in full dynamic graphs. We streamline the data structure of the state-of-the-art algorithm and reduce the time complexity of both insertion and deletion operations. We propose a new strategy to search for replacement edge in edge deletion. Furthermore, we propose a new approach that combines the advantages of spanning tree and disjoint-set tree. We further extend our connectivity maintenance algorithms to maintain 2-edge connectivity between vertices. Our final algorithms achieve the constant query time complexity and also significantly improve the theoretical running time in both edge insertion and edge deletion. Our performance studies on real large datasets show considerable improvement in our algorithms.

\vspace{-2em}



\bibliographystyle{spmpsci} 
\bibliography{ref.bib}

\end{document}